\newcolumntype{C}{>{\centering\arraybackslash}X}
\newtheorem{proposition}{\textcolor{black}{Proposition}}
\newtheorem{definition}{Definition}
\theoremstyle{nonumberplain}
\newtheorem{proof}{Proof}
\newtheorem{Proof}{Proof}
\definecolor{newextractedpurple}{RGB}{127,0,127}
\begin{document}

\title{PACP: Priority-Aware Collaborative Perception for Connected and Autonomous Vehicles}
\vspace{-3cm}

\author{Zhengru~Fang,~
Senkang~Hu,
Haonan~An,
Yuang~Zhang,
Jingjing Wang, 
Hangcheng~Cao, 
Xianhao~Chen,~\IEEEmembership{Member,~IEEE}
and Yuguang Fang,~\IEEEmembership{Fellow,~IEEE}%
\IEEEcompsocitemizethanks{\IEEEcompsocthanksitem Z. Fang, S. Hu, H. An, H. Cao and Y. Fang are with the Department of Computer Science, City University of Hong Kong, Hong Kong. E-mail: \{zhefang4-c, senkang.forest, haonanan2-c\}@my.cityu.edu.hk, \{hangccao, my.fang\}@cityu.edu.hk.
\IEEEcompsocthanksitem Y. Zhang is with the Department of Civil and Environmental Engineering, University of Washington, Seattle, WA, USA. E-mail: yuangz19@uw.edu.
\IEEEcompsocthanksitem J. Wang is with the School of Cyber Science and Technology, Beihang University, China, and also with Hangzhou Innovation Institute, Beihang University, Hangzhou 310051, China. Email: drwangjj@buaa.edu.cn.
\IEEEcompsocthanksitem X. Chen is with the Department of Electrical and Electronic Engineering, the University of Hong Kong, Hong Kong. E-mail: xchen@eee.hku.hk (\textit{Corresponding author}).}
\thanks{This work was supported in part by the Hong Kong SAR Government under the Global STEM Professorship and Research Talent Hub, the Hong Kong Jockey Club under the Hong Kong JC STEM Lab of Smart City (Ref.: 2023-0108), and the Hong Kong Innovation and Technology Commission under InnoHK Project CIMDA. The work of Jingjing Wang was partly supported by the National Natural Science Foundation of China under Grant No. 62222101, Beijing Natural Science Foundation under Grant No. L232043 and No. L222039, and the Fundamental Research Funds for the Central Universities. The work of X. Chen was supported in part by HKU-SCF FinTech Academy R\&D Funding.}
}
\markboth{IEEE Transactions on Mobile Computing}%
{Shell \MakeLowercase{\textit{et al.}}: Bare Demo of IEEEtran.cls for IEEE Communications Society Journals}

\IEEEtitleabstractindextext{%
\begin{abstract}
  Surrounding perceptions are quintessential for safe driving for connected and autonomous vehicles (CAVs), where the Bird’s Eye View has been employed to accurately capture spatial relationships among vehicles. However, severe inherent limitations of BEV, like blind spots, have been identified. Collaborative perception has emerged as an effective solution to overcoming these limitations through data fusion from multiple views of surrounding vehicles. While most existing collaborative perception strategies adopt a fully connected graph predicated on fairness in transmissions, they often neglect the varying importance of individual vehicles due to channel variations and perception redundancy. To address these challenges, we propose a novel {\underline{P}}riority-{\underline{A}}ware {\underline{C}}ollaborative {\underline{P}}erception (\textbf{PACP}) framework to employ a BEV-match mechanism to determine the priority levels based on the correlation between nearby CAVs and the ego vehicle for perception. By leveraging submodular optimization, we find near-optimal transmission rates, link connectivity, and compression metrics. Moreover, we deploy a deep learning-based adaptive autoencoder to modulate the image reconstruction quality under dynamic channel conditions. Finally, we conduct extensive studies and demonstrate that our scheme significantly outperforms the state-of-the-art schemes by 8.27\% and 13.60\%, respectively, in terms of utility and precision of the Intersection over Union.
\end{abstract}
\begin{IEEEkeywords}
Connected and autonomous vehicle (CAV), collaborative perception, priority-aware collaborative perception (PACP), data fusion, submodular optimization, adaptive compression.
\end{IEEEkeywords}
}

\maketitle

\section{Introduction}
\subsection{Background}
\IEEEPARstart{R}{ecent} advances in positioning and perception have emerged as pivotal components in numerous cutting-edge applications, most notably in autonomous driving\cite{hu2024magazine,hu2023adaptive,hu2023towards,chen2023vehicle,10415235}. These systems heavily rely on precise positioning and acute perception capabilities to safely and adeptly navigate complex road environments. Many solutions exist for positioning and perception in CAVs, including inertial navigation systems, high-precision GPS, cameras, and LiDAR\cite{9165167}. However, any isolated use of them may not be enough to achieve the desired level of perception quality for safe driving. In contrast, the Bird's Eye View (BEV) stands out as a more holistic approach. By integrating data from multiple sensors and cameras placed around a vehicle, BEV offers a comprehensive, potentially 360-degree view of a vehicle's surroundings, offering a more contextually rich understanding of its environment\cite{xu2022cobevt}. However, most BEV-aided perception designs have predominantly concentrated on single-vehicle systems. Such an approach may not be enough in high-density traffic scenarios, where unobservable blind spots caused by road obstacles or other vehicles remain a significant design challenge. Therefore, collaborative perception has become a promising candidate for autonomous driving. To mitigate the limitations of the single-vehicle systems, we can leverage multiple surrounding CAVs to obtain a more accurate BEV prediction via multi-sensor fusion\cite{zhang2024smartcooper}.

To further reduce the risks of blind spots in BEV prediction, collaborative perception is adapted to enable multiple vehicles to share more complementary surrounding information with each other through vehicle-to-vehicle (V2V) communications\cite{9682601}. This framework intrinsically surmounts several inherent constraints tied to single-agent perception, including occlusion and long-range detection limitation. Similar designs have been observed in a variety of practical scenarios, including communication-assisted autonomous driving within the realm of vehicle-to-everything\cite{9779322}, multiple aerial vehicles for accuracy perception\cite{liu2020when2com}, and multiple underwater vehicles deployed in search and tracking operations\cite{10193767,9451536,10517492}.

In this emerging field of autonomous driving, the current predominant challenge is how to make a trade-off between perception accuracy and communication resource allocation. Given the voluminous perception outputs (such as point clouds and consecutive RGB image sequences), the data transmission for CAVs demands substantial communication bandwidth. Such requirements often run into capacity bottleneck. As per the KITTI dataset\cite{6248074}, a single frame from 3-D Velodyne laser scanners encompasses approximately 100,000 data points, where the smallest recorded scenario has 114 frames, aggregating to an excess of 10 million data points. Thus, broadcasting such extensive data via V2V communications amongst a vast array of CAVs becomes a daunting task. Therefore, it is untenable to solely emphasize the perception efficiency enhancement without considering the overhead on V2V communications. Thereby, some existing studies propose various communication-efficient collaboration frameworks, such as the edge-aided perception framework\cite{chen2019f} and the transmission scheme for compressed deep feature map\cite{wang2020v2vnet}. It is observed that all these approaches can be viewed as fairness-based schemes, i.e., each vehicle within a certain range should have a fair chance to convey its perception results to the ego vehicle. Among all transmission strategies used for collaborative perception, the fairness-based scheme is the most popular one owing to its low computational complexity. 

Despite the low computational complexity, several major design challenges still exist with the state-of-the-art fairness-based schemes, entailing the adaptation of these perception approaches in real-world scenarios:
\begin{itemize}
  \item[1)] Fairness-based schemes may lead to overlapping data transmissions from multiple vehicles, causing unnecessary duplications and bandwidth wastage.
  \item[2)] Fairness-based schemes cannot inherently prioritize perception results from closer vehicles, which may be more critical than those from further vehicles.
  \item[3)] Without differentiating the priority of data from different vehicles, fairness-based schemes may block communication channels for more crucial information.
\end{itemize}

To address the above challenges, we conceive a priority-aware perception framework with the BEV-match mechanism, which acquires the correlation between nearby CAVs and the ego CAV. Compared with the fairness-based schemes, the proposed approach pays attention to the balance between the correlation and sufficient extra information. Therefore, this method not only optimizes perception efficiency by preventing repetitive data, but also enhances the robustness of perception under the limited bandwidth of V2V communications.

In the context of collaborative perception, a significant concern revolves efficient collaborative perception over time-invariant links. The earlier research on cooperative perception has considered lossy channels \cite{li2023learning} and link latency\cite{9718315}. However, most existing research leans heavily on the assumption of an ideal communication channel without packet loss, overlooking the effects of fluctuating network capacity\cite{9682601}. Moreover, establishing fusion links among nearby CAVs is a pivotal aspect that needs attention. Existing strategies are often based on basic proximity constructs, neglecting the dynamic characteristics of the wireless channel shared among CAVs. In contrast, graph-based optimization techniques provide a more adaptive approach, accounting for real-world factors like signal strength and bandwidth availability to improve network throughput.

A significant challenge in adopting these graph-based techniques is to manage the transmission load of V2V communications for point cloud and camera data. With the vast amount of data produced by CAVs, it is crucial to compress this data. Spatial redundancy is usually addressed by converting raw high-definition data into 2D matrix form. To tackle temporal redundancy, video-based compression techniques are applied. However, traditional compression techniques, such as JPEG\cite{skodras2001jpeg} and MPEG\cite{8945224}, are not always ideal for time-varying channels. This highlights the relevance of modulated autoencoders and data-driven learning-based compression methods that outperform traditional methods. These techniques excel at encoding important features while discarding less relevant ones. Moreover, the adoption of fine-tuning strategies improves the quality of reconstructed data, whereas classical techniques often face feasibility issues.

\subsection{State-of-the-Art}
\label{sec:Related Work}
In this subsection, we review the related literature of collaborative perception for autonomous driving with an emphasis on their perceptual algorithms, network optimization, and redundant information reduction. 

{\color{black}\textbf{V2V collaborative perception}: V2V collaborative perception combines the sensed data from different CAVs through fusion networks, thereby expanding the perception range of each CAV and mitigating troubling design problems like blind spots. For instance, Chen \textit{et al.} \cite{chen2019cooper} proposed the early fusion scheme, which fuses raw data from different CAVs, Wang \textit{et al.} \cite{wang2020v2vnet} employed intermediate fusion, fusing intermediate features from various CAVs, and Rawashdeh \textit{et al.} \cite{rawashdeh2018collaborative} utilized late fusion, combining detection outputs from different CAVs to accomplish collaborative perception tasks. Although these methods show promising results under ideal conditions, in real-world environments, where the channel conditions are highly variable, directly applying the same fusion methods often results in unsatisfactory outcomes. Transmitting raw-level sensing data has distinct advantages: it supports multiple downstream tasks, enhances data fusion accuracy, and ensures future-proof flexibility for evolving autonomous driving technologies. Therefore, we can combine the benefits of early fusion and intermediate fusion by utilizing adaptive compression for raw data transmission.}

\textbf{Network optimization}: High throughput can ensure more efficient data transmissions among CAVs, thereby potentially improving the IoU of cooperative perception systems. Lyu \textit{et al.}\cite{lyu2022distributed} proposed a fully distributed graph-based throughput optimization framework by leveraging submodular optimization. Nguyen \textit{et al.}\cite{9204672} designed a cooperative technique, aiming to enhance data transmission reliability and improve throughput by successively selecting relay vehicles from the rear to follow the preceding vehicles. Ma \textit{et al.}\cite{8812911} developed an efficient scheme for the throughput optimization problem in the context of highly dynamic user requests. However, the intricate relationship between throughput maximization and IoU has not been thoroughly investigated in the literature. This gap in the research motivates us to conduct more comprehensive studies on the role of throughput optimization in V2V cooperative perception. 

\textbf{Vehicular data compression}: For V2V collaborative perception, participating vehicles compress their data before transmitting it to the ego vehicle to reduce transmission latency. However, existing collaborative frameworks often employ very simple compressors, such as the naive encoder consisting of only one convolutional layer used in V2VNet\cite{wang2020v2vnet}. Such compressors cannot meet the requirement of transmission latency under $100$ ms in practical collaborative tasks\cite{zhang2018vehicular}. Additionally, current views suggest that compressors composed of neural networks outperform the compressors based on traditional algorithms\cite{yang2020variable}. However, these studies are typically focused on general data compression tasks and lack research on adaptive compressors suitable for practical scenarios in V2V collaborative perception. 

{\color{black}
\textbf{Priority-Aware Perception Schemes}: Priority-aware perception schemes have been advanced significantly, yet they have also encountered limitations in dynamic environments. Liao \textit{et al.}'s model uses an attention mechanism for trajectory prediction, effectively assigning dynamic weights but overlooking communication costs and information redundancy\cite{liao2024bat}. Similarly, Wen \textit{et al.}'s queue-based traffic scheduling enhances throughput but struggles with real-time data synchronization, which is critical for systems like autonomous vehicles\cite{10262233}. Additionally, studies on edge-assisted visual simultaneous localization and mapping (SLAM) introduce a task scheduler that improves mapping precision but fails to account for the crucial role of channel quality in perception accuracy and throughput\cite{10327699}.

Our framework mitigates these issues by introducing an adaptive mechanism that adjusts priorities based on real-time channel quality and analytics. This approach not only curtails unnecessary data processing and communication but also enhances raw-level sensing data fusion and system responsiveness, thereby addressing the core challenges identified in previous studies.
}

\subsection{Our Contributions}
\begin{table*}[!t]
  \scriptsize
  \caption{CONTRASTING OUR CONTRIBUTION TO THE LITERATURE}
  \renewcommand{\arraystretch}{1}
  \centering
\begin{tabular}{|l|c|c|c|c|c|c|c|c|c|c|c|c|c|c|c|c|}
\hline
& \cite{an2024throughput} & \cite{hu2023towards} & \cite{hu2023adaptive}& \cite{li2023learning}& \cite{liu2020when2com}  & \cite{chen2019f}& \cite{wang2020v2vnet} &\cite{chen2019cooper}& \cite{rawashdeh2018collaborative} & \cite{lyu2022distributed} & \cite{9204672} & \cite{8812911} & \cite{9681261} & Proposed work \\
\hline
\hline
BEV evaluation & \checkmark & \checkmark & \checkmark & \checkmark &  &  & \checkmark &  &  &  &  &  &  & \checkmark \\
\hline
Multi-agent selection & \checkmark &  & \checkmark &  & \checkmark &\checkmark &\checkmark  & \checkmark &  &\checkmark  &\checkmark  &  & \checkmark & \checkmark \\
\hline
Data compression & \checkmark &  & \checkmark &  & \checkmark& \checkmark &\checkmark  & \checkmark &  &  &  &  &  & \checkmark \\
\hline
Lossy communications & \checkmark &  & \checkmark & \checkmark &  & &  &  &\checkmark  & \checkmark  &\checkmark  &  & \checkmark & \checkmark \\
\hline
Priority mechanism &  &  &  &  &  & &  &  &  &  &  &  & \checkmark & \checkmark \\
\hline
Coverage optimization&  &  &  &  & \checkmark & &  &  &  &  &\checkmark  &  & \checkmark & \checkmark \\
\hline
Throughput optimization & \checkmark &  & \checkmark &  &  & &  &  &  & \checkmark & \checkmark & \checkmark & \checkmark & \checkmark \\
\hline
\end{tabular}\label{table_literature}
\end{table*}

To address the weakness of prior works and tackle the aforementioned design challenges, we design our {\underline{P}}riority-{\underline{A}}ware {\underline{C}}ollaborative {\underline{P}}erception (\textbf{PACP})
framework for CAVs and evaluate its performance on a CAV simulation platform CARLA\cite{dosovitskiy2017carla} with OPV2V dataset\cite{xu2022opv2v}. Experimental results verify PACP's superior performance, with notable improvements in utility value and the average precision of Intersection over Union (AP@IoU) compared with existing methods. To summarize, in this paper, we have made the following major contributions:
{\color{black}
\begin{itemize}
  \item We introduce the first-ever implementation of a priority-aware collaborative perception framework that incorporates a novel BEV-match mechanism for autonomous driving. This mechanism uniquely balances communication overhead with enhanced perception accuracy, directly addressing the inefficiencies found in prior works.
  \item Our two-stage optimization framework is the first to apply submodular theory in this context, allowing the joint optimization of transmission rates, link connectivity, and compression ratio. This innovation is particularly adept at overcoming the challenges of data-intensive transmissions under dynamic and constrained channel capacities.
  \item We have integrated a deep learning-based adaptive autoencoder into PACP, supported by a new fine-tuning mechanism at roadside units (RSUs). The experimental evaluation reveals that this approach surpasses the state-of-the-art methods, especially in utility value and AP@IoU.
\end{itemize}
}

Our new contributions are better illustrated in Table \ref{table_literature}.

\section{Fairness or Priority?}
\label{sec:Fairness or Priority}
In this section, we show some practical difficulties with fairness-based collaborative perception for CAVs and also briefly demonstrate the superiority of adopting a priority-aware perception framework.

\subsection{Background of Two Schemes}
\label{sec:Background of Two Schemes}
{\color{black}
\textbf{Fairness-based scheme}: This scheme aims to achieve fairness in resource allocation among different CAVs. The Jain's fairness index is used to measure fairness:\cite{9681261}:
\begin{equation}\label{Jain's Fairness Index}
     J = \frac{(\sum_{i=1}^{n} x_i)^2}{n \cdot \sum_{i=1}^{n} x_i^2},
\end{equation}
where \( n \) is the total number of nodes and \( x_i \) is the resource allocated to the \( i \)-th node. A perfect fairness index of 1 indicates equal resource allocation. Two common fairness schemes are subchannel-fairness (equal spectrum resources) and throughput-fairness (equal transmission rates).

\textbf{Priority-aware scheme}: Unlike fairness-based schemes, this scheme assigns different priority levels to CAVs based on the importance and quality of their data. The ego vehicle gives higher priority to CAVs with better channel conditions and more crucial perception data. Existing works have investigated several popular priority factors as priority, such as link latency\cite{9403386} and routing situation\cite{luo2022learning}. This approach mitigates the negative impact of "poisonous" CAVs with poor channel conditions or less relevant data, thus enhancing the overall system efficiency. 
Compared to the fairness-based scheme, the advantages of priority-aware perception can be summarized as follows:
\begin{itemize}
\item \textbf{Transmission Efficiency}: By prioritizing CAVs with better data quality and channel conditions, the priority-aware scheme optimally allocates spectrum resources, ensuring efficient transmission.
\item \textbf{Improved Prediction}: This scheme reduces the influence of poisonous data, leading to more accurate BEV predictions by focusing on high-quality perception inputs.
\item \textbf{Dynamic Adaptability}: The priority-aware scheme dynamically adjusts to changing channel conditions and task requirements, maintaining robust performance in diverse environments.
\end{itemize}

\subsection{An Illustrative Motivating Example}
\label{sec:Inefficient Bandwidth Allocation}
\begin{figure*}
    \centering
    \subfigure[Subchannel-fairness scheme]{
        \includegraphics[width=5.4cm]{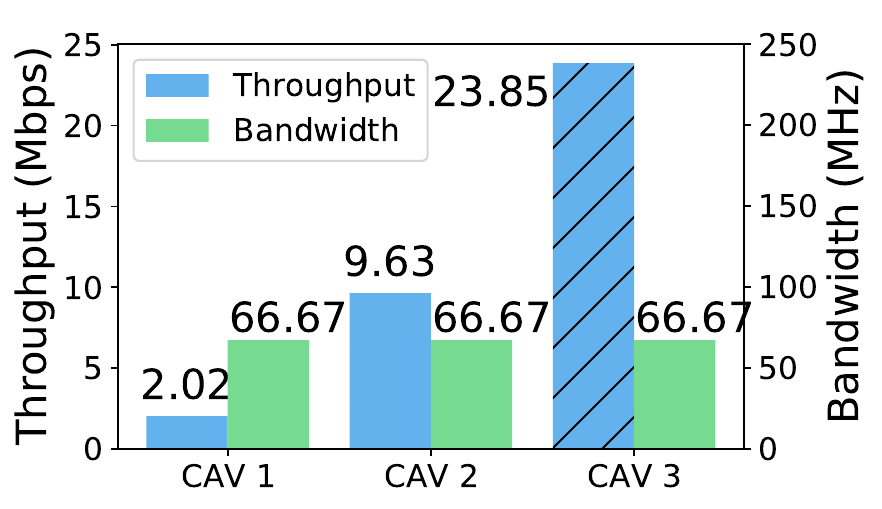}\label{fig:DMDDAC}
    }
    \subfigure[Throughput-fairness scheme]{
        \includegraphics[width=5.4cm]{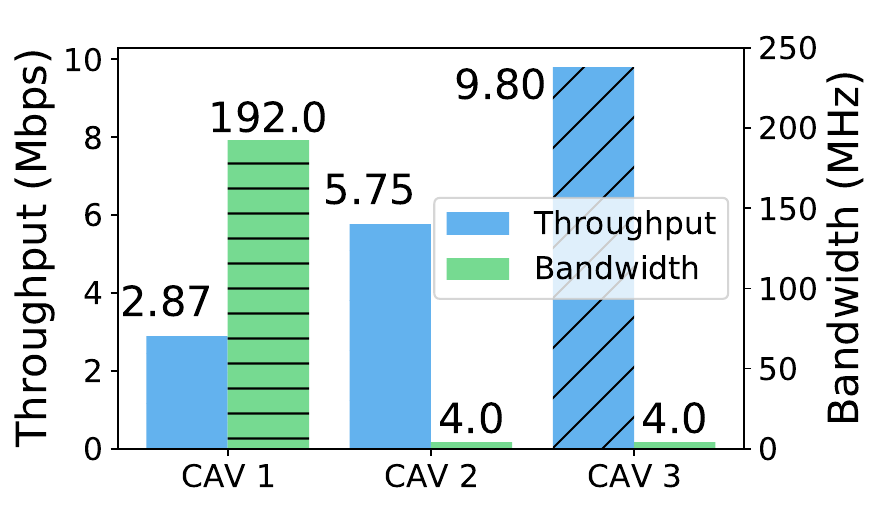}\label{fig:FTS}
    }
    \subfigure[Priority-aware perception scheme]{
        \includegraphics[width=5.4cm]{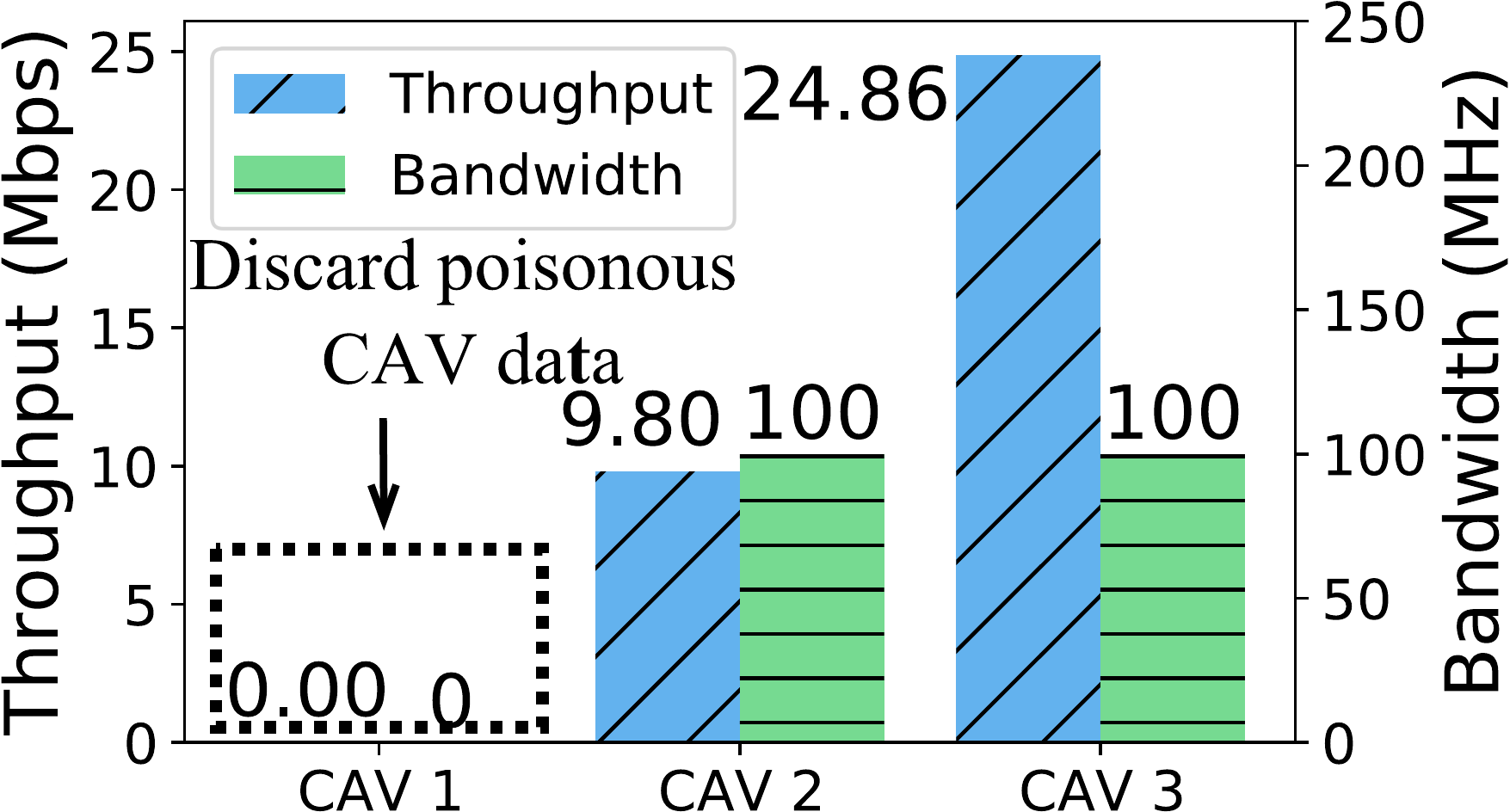}\label{fig:PACP}
    }
    \caption{The bandwidth and throughput allocation by different schemes within V2V network.}
    \label{fig: pre_subfigures}
\end{figure*}
\begin{figure}[t]
  \centering
  \subfigure[Camera perception by ego CAV]{
  \includegraphics[width=3.7 cm]{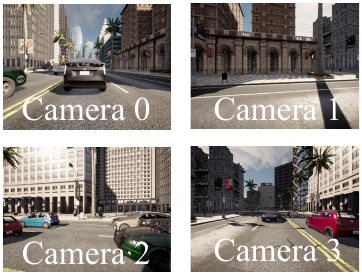}\label{fig:camera}
  }
  \subfigure[AP@IoU vs. Different schemes]{
  \includegraphics[width=4.5 cm]{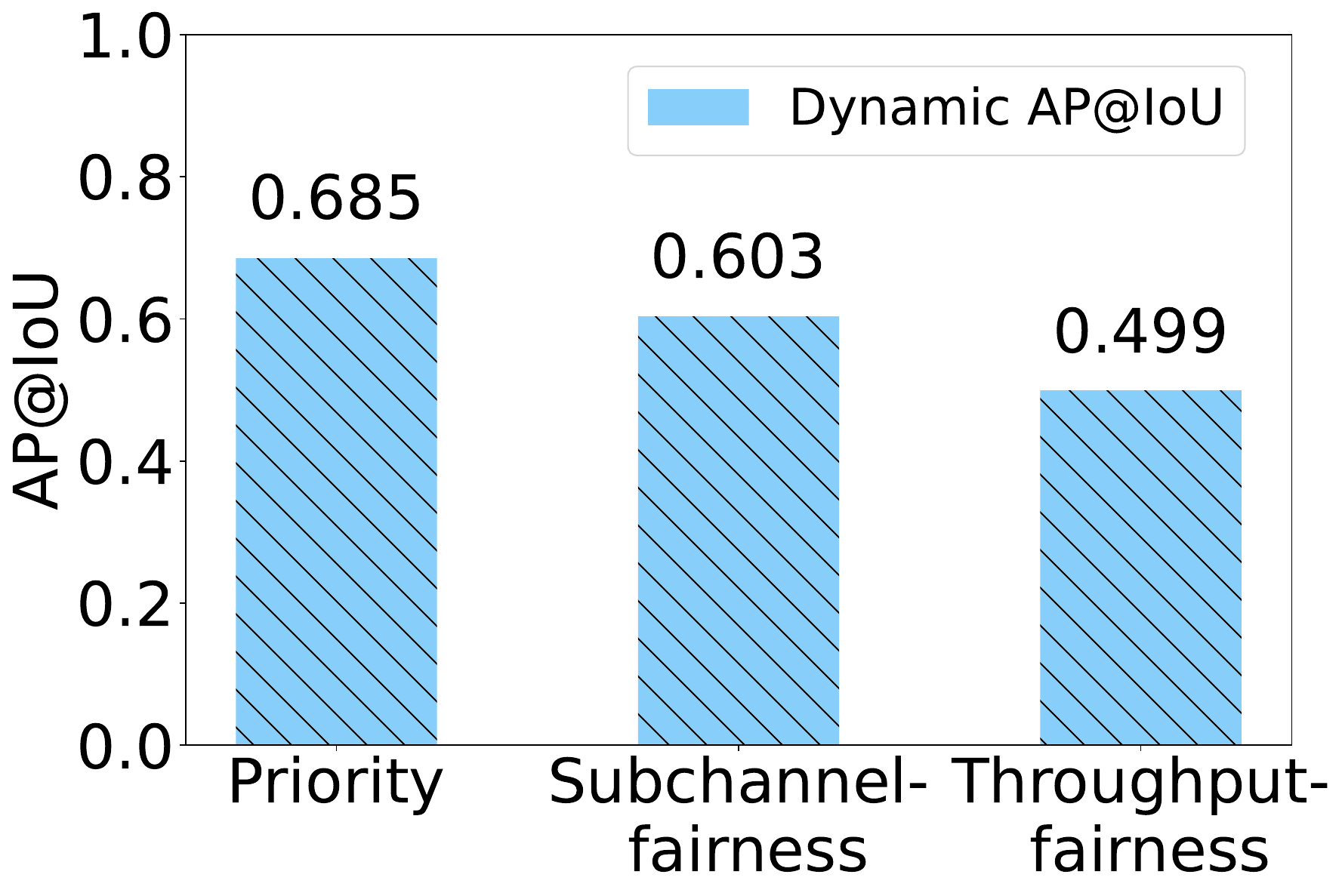}\label{fig:pre_iou}
  }
  \caption{Camera data and different types of AP@IoU.}
  \label{fig:preliminary_result_bev}
\end{figure}
To understand the limitations of fairness-based schemes, we compare their resource allocation and BEV prediction against an ad hoc priority-aware scheme. Fig. \ref{fig: pre_subfigures} illustrates the bandwidth and throughput allocation of three schemes in a V2V network, with a total bandwidth of 200 MHz.

In the subchannel-fairness scheme (Fig. \ref{fig:DMDDAC}), each CAV receives an equal amount of spectrum resources, leading to inefficient utilization, especially with poor channel conditions. The throughput-fairness scheme (Fig. \ref{fig:FTS}) equalizes transmission rates by allocating more resources to weaker channels, but it can still result in suboptimal BEV predictions. For instance, if a CAV has extremely poor channel quality (e.g., CAV 1), even the allocated bandwidth may be insufficient, reducing perception performance.

In contrast, the priority-aware perception scheme (Fig. \ref{fig:PACP}) dynamically adjusts the priority weights of each CAV based on channel resources, perception accuracy, and coverage. For example, CAV 1, with the worst channel condition, is assigned with the lowest priority and its data is discarded. This approach ensures that critical and high-quality data from CAVs with the best channel conditions and/or proximity to the ego vehicle are transmitted with minimal loss and latency, significantly improving BEV predictions. Fig. \ref{fig:pre_iou} demonstrates the superior performance of the priority-aware scheme, with the dynamic AP@IoU metric rising to 0.685. Thus, this ad hoc priority-aware scheme outperforms fairness-based methods by selectively collaborating with the most valuable CAVs, leading to enhanced BEV predictions. However, how to design an effective priority-aware scheme is critical, which motivates this research. 
}

\section{System Model}
\label{sec:System_Model}

In this section, we present a V2X-aided collaborative perception system with CAVs, including the system's structure, channel modeling, and the constraints of computational capacity and energy. The key notations are listed in the \mbox{Table \ref{table:notation}}.
\subsection{System Overview}
\label{sec:System Overview}
We consider a V2X-aided collaborative perception system with multiple CAVs and RSUs, which is shown in Fig. \ref{fig:system model}. In our scenario, CAVs can be divided into two types. The first type is the nearby CAVs (indexes 1-3), which monitor the surrounding traffic with cameras and share their perception results with other CAVs. The second type is the ego CAV (index $0$), which fuses the camera data from the nearby CAVs with its own perception results. As shown in Fig. \ref{fig:system model}, the ego CAV $0$ is covered with the parked car that the ego CAV’s own camera cannot observe the incoming pedestrian from the blind spot. Through the collaborative perception scheme, the ego CAV merges compressed data from CAVs 2 and 3, i.e., $s_{20}=s_{30}=1$, which catch the existence of the pedestrian. However, it is unnecessary to connect all CAVs together since the bandwidth and subchannel resources are limited. Therefore, the ego CAV can determine the importance of each nearby CAV by the priority-aware mechanism, depending on CAVs' positions and channel states. For example, the ego CAV is disconnected from CAV 1, because it fails to provide enough environmental perception information, i.e., $s_{10}=0$. For the sake of maximizing AP@IoU, the ego CAV obtains the near-optimal solution in terms of the transmission rate $d_{ij}$ and the compression ratio $r_{ij}$ at each time slot. Additionally, we deploy several RSUs to achieve a fine-tuning compression strategy, which is detailed in Sec. \ref{sec:Deep Learning-Based Adaptive Compression}.
\begin{figure}[t]
  \centering
  \includegraphics[width=0.47\textwidth]{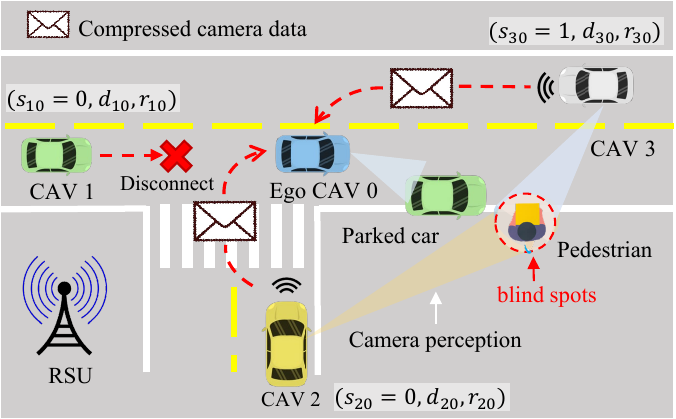}
  \caption{Overview of V2X-aided collaborative perception system. }
  \label{fig:system model}
\end{figure}

\subsection{Channel Modeling and System Constraints}
\label{sec:Channel Modeling}
\begin{table}[t]
  \caption{{\color{black}Summary of Key Notations}}
  \label{table:notation}
  \begin{center}
  \renewcommand{\arraystretch}{1.5}
  \begin{tabular}{c|p{6.5cm}}
  \hline
  \textbf{Notation} & \textbf{Definition} \\
  \hline
  \( N \) & The total number of vehicles in the network \\
  \( K \) & The number of orthogonal sub-channels for the whole V2V networks \\
  \( W \) & The total bandwidth of the V2V networks \\
  \( C_{ij} \) & The channel capacity between the \(i\)th transmitter and the \(j\)th receiver \\
  \( \mathcal{D} \) & The matrix of transmission rates \\
  \( r_{ij} \) & The adaptive compression ratio \\
  \( E_{ij}^t \) & The energy bounds for data transmission \\
  \( E_{j}^c \) & The energy consumption bounds for computation \\
  \( E_j^T \) & The energy consumption threshold \\
  \( P_t \) & Transmission power of each CAV \\
  \( F_j \) & CPU capacity of CAV \(j\) \\
  \( \beta \) & Model complexity parameter depending on the architecture of the neural networks \\
  \( A_j \) & Local data generation rate per second at CAV \(j\) \\
  \( \mathcal{P}_{ij} \) & The priority weight between two CAVs \\
  \( \mathcal{I}(\mathcal{S}_i) \) & The total area covered by the union of perceptual regions of vehicles \\
  \( \mathcal{U}_{\mathrm{sum}} \) & The weighted utility function for the network  \\
  \( \mathcal{U}_{\mathrm{r}} \) & The sub-utility function for perception quality \\
  \( \mathcal{U}_{\mathrm{p}} \) & The sub-utility function for the perceptual regions \\
  \( (\omega_1, \omega_2) \) & Weights for perception quality and region \\
  \( (r_{\min}, r_{\max}) \) & Compression ratio range \\
  \hline
  \end{tabular}
  \end{center}
\end{table}
Consider the V2V network architecture $\mathbf{G} = (\mathbf{V}, \mathbf{E})$, where $\mathbf{V} = ({v_1, v_2, ..., v_N})$ denotes CAVs and $\mathbf{E}$ is the set of links between them. As per 3GPP specifications for 5G\cite{8891313}, V2V networks adopt Cellular Vehicle-to-Everything (C-V2X) with Orthogonal Frequency Division Multiplexing (OFDM). The total bandwidth $W$ is split into $K$ orthogonal sub-channels. Each sub-channel capacity is $C_{i j}=\frac{W}{K} \log_{2}\left(1+\frac{P_{t} h_{i j}}{N_{0}\frac{W}{K}}\right)$, where $P_t$ represents the transmit power, $h_{ij}$ denotes the channel gain from the $i$th transmitter to the $j$th receiver, and $N_0$ is the noise power spectral density. 

Additionally, let \(s_{ij} = 1\) indicate the presence of the directional link from CAV \(i\) to ego CAV \(j\). Such a link is denoted as \((i, j) \in \mathcal{S}\). The set \(\mathcal{S}\) represents the collection of all established links in the network. When \(s_{ij} = 1\), CAV \(v_{i}\) is capable of sharing data with the ego CAV \(v_{j}\). Conversely, if \(s_{ij} = 0\), it implies the disconnected mode of the link \((i, j)\). However, the number of directed links potentially increase at a rate of \(N^2\) with the number of CAVs, possibly exhausting the limited communication spectrum resources. Therefore, the upper bound of the number of connections is given by:
\begin{equation}\label{eq:subchannel}
  \begin{aligned}
    \sum_{i=1, i\ne j}^N{\sum_{j=1}^N{s_{ij}}}\le K.
  \end{aligned}
\end{equation}
Let $\mathcal{D}=\left\{d_{ij} \right\}_{N \times N}$ be the matrix consisting of  transmission rates, where each element is non-negative, for $\forall (i, j) \in \mathbf{E}$. Each element $d_{ij}$ represents the data rate without compression from vehicle $v_i$ to vehicle $v_j$, which is then processed by $v_j$. It is noteworthy that $d_{ij}$ satisfies:
\begin{equation}\label{eq:transmission rate}
\begin{aligned}
{r}_{ij}d_{ij} \le \min(C_{ij}, {r}_{ij}A_i),
\end{aligned}
\end{equation}
Here, ${r}_{ij} \in (0,1]$ denotes the adaptive compression ratio, obtained by the compression algorithm outlined in \mbox{Sec. \ref{sec:Deep Learning-Based Adaptive Compression}}. $r_{ij} d_{ij}$ represents the actual transmission rate after compression. $A_i$ signifies the amount of local perception data at $v_i$ per second, i.e., perception data generation rate at the location of $v_i$. This constraint implies that the actual transmission rate must be limited either by the achievable data rate or by the locally compressed data present at vehicle $v_i$. Furthermore, it is observed that an inadequate compression ratio results in a diminution in the accuracy of perception data, while an excessively high compression ratio results in suboptimal throughput maximization. Consequently, the constraint for the compression ratio is defined as follows:
\begin{equation}\label{eq:the constraint of the compression ratio}
  \begin{aligned}
    \mathbf{1}^{\top}{r} _{j,\min}\preceq \mathcal{R} _j\preceq \mathbf{1}^{\top}{r} _{j,\max},
\end{aligned}
\end{equation}
where $ \mathcal{R} _j=\left[ \mathrm{{r}}_{{1j}},\mathrm{{r}}_{2{j}},...,\mathrm{{r}}_{{Nj}} \right] ^{\top}$, \textbf{${\mathcal{R}}$} $=\left[ \mathcal{R} _1,\mathcal{R} _2,...,\mathcal{R} _{{N}} \right]$. 
Given the surrounding data obtained through collaborative perception, perception data from closer vehicles are more important for perceptional detection, which has a higher level of accuracy. Therefore, we assume that the adaptive compression ratio for the link $(i,j)$ yields:
\begin{equation}\label{eq:the constraint of the compression ratio 2}
  \begin{aligned}
    {r} _{ij}e^{L_{ij}}\geqslant \eta,
\end{aligned}
\end{equation}
where $L_{ij}$ denotes the normalized distance between $v_i$ and $v_j$, and $\eta \in (0,1]$. It is noted that we use an exponential relationship in terms of normalized distance, because the compression ratio of the remote area should decrease rapidly, reducing communication overhead.
Moreover, the link establishment and data transmission rate should satisfy the bounds of energy consumption as follows:
\begin{equation}\label{eq:transmission power}
  \begin{aligned}
    E_{ij}^{t}=\tau _{j}^{t}P_t s_{ij}, 
  \end{aligned}
\end{equation}
where $\tau_{j}^{t}$ denotes the allocated time span, and $P_{t}$ signifies the transmission power. We define $F_{j}$ as the computational capability of vehicle $v_{j}$. The data processed by $v_{j}$, which includes its local data $A_{j}$ and the data received from neighboring nodes, should satisfy the following constraint:
\begin{equation}\label{eq:computing power constraint}
  \begin{aligned}
    A_j+\sum_{i=1,i\ne j}^N{{r} _{ij}s_{ij}d_{ij}}\leqslant F_j/\beta, 
  \end{aligned}
\end{equation}
where $F_j/\beta$ represents the aggregate size of data processed per second. Additionally, $\beta$ is tunable parameter depending on the architecture of the neural networks employed in these contexts, like the self-supervised autoencoder. The energy consumption for computation by $v_j$ can be determined as follows:
\begin{equation}\label{eq:computing power}
  \begin{aligned}
    E_{j}^{c}=\left( A_j+\sum_{i=1,i\ne j}^N{{r} _{ij}s_{ij}d_{ij}} \right)\epsilon _j\tau_{j}^{c},
  \end{aligned}
\end{equation}
where $\epsilon_{j}$ denotes the energy cost per unit of input data processed by $v_{j}$'s processing unit. $\tau_{j}^{c}$ represents the duration allocated for data processing. By imposing constraints on the overall energy consumption, a suitable trade-off between computing and communication can be made, facilitating optimal operation and extending the operational longevity of CAVs. Intuitively, the cumulative energy consumption in our CAV system must satisfy the following constraint:
\begin{equation}\label{eq:power constraint}
  \begin{aligned}
    \sum_{i=1,i\ne j}^N \left(E_{ij}^{t}+E_{ij}^{c} \right) \leq E^{T}_j, \quad (j=1,2,\cdots,N)
  \end{aligned},
\end{equation}
where $E^{T}_j$ symbolizes the energy consumption threshold for the $j$th CAV group, including its nearby CAVs.

\section{Priority-Aware Collaborative Perception Architecture}
\label{sec:Priority-Aware Perception Scheme}
While BEV offers a top-down view aiding CAVs in learning relative positioning, not all data is of equal importance or relevance\cite{9779322}. Some CAV data can be unreliable due to perception qualities, necessitating differential prioritization in data fusion. This section proposes a priority-aware perception scheme by BEV match mechanism.
\subsection{Selection of Priority Weights}
\label{sec:priority_weights}
There exist many ways to define priority weights, such as distance\cite{adil2021enhanced}, channel state\cite{9849675}, and information redundancy\cite{10158439}. However, as CoBEVT is the backbone for RGB data fusion\cite{xu2022cobevt}, our priority weight definition is based on the Intersection over Union (IoU) of BEV features between adjacent CAVs. IoU is an effective metric in computer vision, especially in object detection, used to quantify the overlap between two areas. The background of IoU can be obtained in CoBEVT's local attention aids in pixel-to-pixel correspondence during object detection fusion. IoU of the BEV features reveals the consistency in environmental perception between CAVs. Factors like channel interference and network congestion may cause inconsistencies. As consistency is crucial in fusion data, inconsistencies can lead to data misrepresentations. Hence, we design a BEV-match mechanism by relying on IoU analysis in the next subsection, giving preference to CAVs with closely aligned perceptions to the ego CAV.

\subsection{Procedure of Obtaining Priority Weights}
\label{sec:obtaining_priority_weights}
\begin{figure*}[t]
  \centering
  \includegraphics[width=0.85\textwidth]{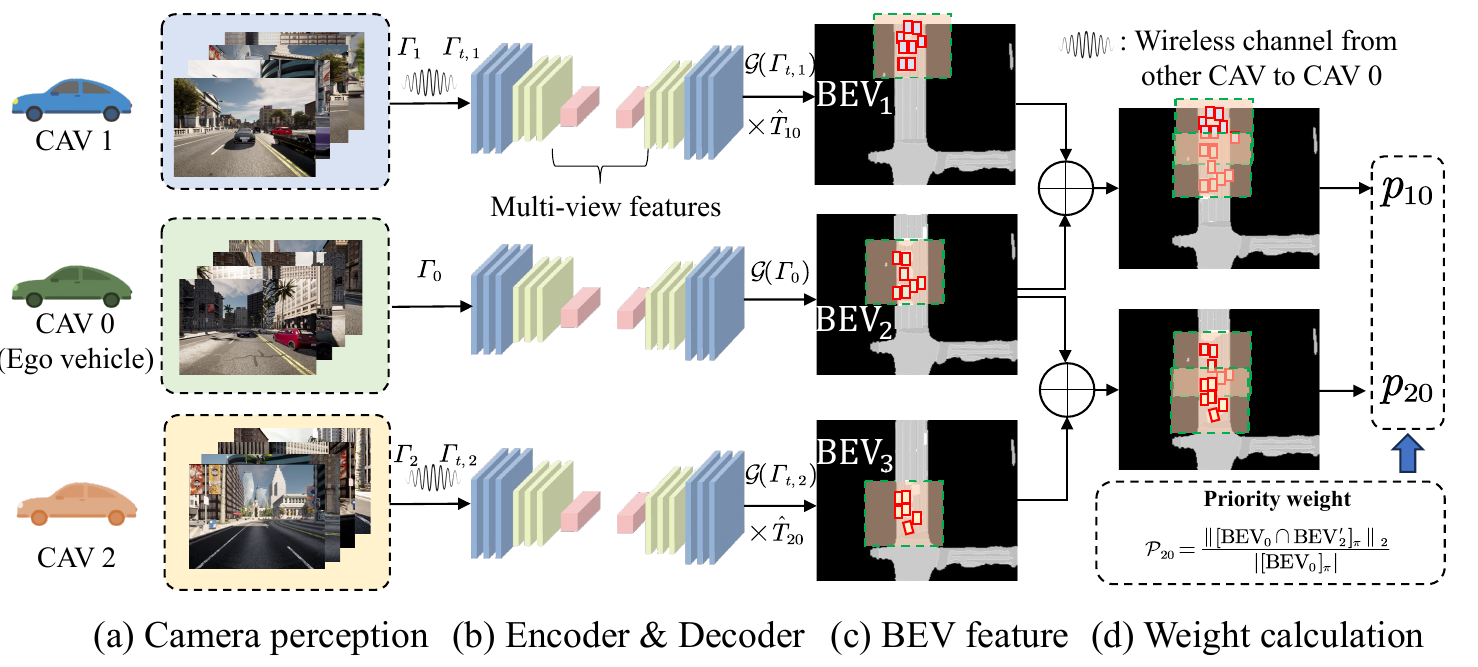}
  \caption{Procedure for priority weight calculation. Fig. \ref{fig:pacs}(a): CAVs observe surroundings with 4 cameras; CAVs 1-2 relay RGB data to CAV 0. Figs. \ref{fig:pacs}(b)-(c): BEV feature generation in CAV 0. Fig. \ref{fig:pacs}(d): BEV-match mechanism determines priority weights.}
  \label{fig:pacs}
  \vspace{-4mm}
\end{figure*}

The procedure of calculating priority weights can be divided into three steps as follows:

\textbf{(1) Camera perception:} As shown in Fig. \ref{fig:pacs}(a), the nearby CAVs capture raw camera data \( \varGamma \) using its four cameras. This perception data \( \varGamma   \) is then transmitted to the ego CAV by communication units through wireless channel. Let $\varGamma _t = \mathcal{F}(\varGamma )$ denote the perception data received by the ego CAV, where \( \mathcal{F} \) represents the function of data transmission over the network. In reality, different data processing strategies result in different impacts on $\varGamma _t$, such as compression and transmission latency. 

\textbf{(2) Encoder \& Decoder:} Fig. \ref{fig:pacs}(b) illustrates that upon reception, the ego CAV uses a SinBEVT-based neural network to process the single-vehicle's RGB data\cite{xu2022cobevt}. This data is transformed through an encoding-decoding process to extract BEV features. The BEV feature transformation is denoted by $\text{BEV} = \mathcal{G}(\varGamma _t)$, where \( \mathcal{G} \) captures SinBEVT's processing essence. In Fig. \ref{fig:pacs}(c), the BEV feature depicts the traffic scenario from a single vehicle's perspective, with the green dashed box marking the range of perceived moving vehicles, and the red box indicating surrounding vehicles to the ego CAV. In fact, SinBEVT and CoBEVT achieve real-time performance of over 70 fps with five CAVs\cite{xu2022cobevt}.

\textbf{(3) Priority weight calculation:} As shown in Fig. \ref{fig:pacs}(d), we take the ego CAV and CAV 1 as an example. Along with their corresponding BEV perceptions \( \text{BEV}_0 \) and \( \text{BEV}_1 \), let the coordinates and orientation angle of both the Ego CAV and CAV 1 be represented as \( (x_0,y_0,\theta_0) \) and \( (x_1,y_1,\theta_1) \), respectively. Firstly, we can derive the translational displacement between Ego CAV and CAV 1: $\Delta x = x_0 - x_1 $, $\Delta y = y_0 - y_1$ and $\Delta \theta = \theta_0 - \theta_1$. Accordingly, the translation matrix \( T \) and the rotation matrix $\varTheta$\footnote{We assume that the rotation matrix provided is based on the counter-clockwise rotation convention.} are articulated as:
\begin{equation}\label{eq:translation_matrix}
T_{10}=\left[ \begin{matrix}
	1&		0&		\Delta x\\
	0&		1&		\Delta y\\
	0&		0&		1\\
\end{matrix} \right] , \varTheta_{10}=\left[ \begin{matrix}
	\cos\mathrm{(}\Delta \theta )&		-\sin\mathrm{(}\Delta \theta )&		0\\
	\sin\mathrm{(}\Delta \theta )&		\cos\mathrm{(}\Delta \theta )&		0\\
	0&		0&		1\\
\end{matrix} \right] ,
\end{equation}
where $\hat{T}_{10} = T_{10} \times \varTheta_{10}$ remap coordinates from \( \text{BEV}_1 \) to \( \text{BEV}_0 \). Let \( \hat{\varGamma}  \) be a point of the BEV feature of CAV 1. Furthermore, the transformation of a point \( \hat{\varGamma}  \) from the BEV perception \( \text{BEV}_1 \) to \( \text{BEV}_0 \) is denoted by \( \hat{\varGamma} ' = \hat{T}_{10} \times \hat{\varGamma}  \). \( \hat{\varGamma} ' \) represents the transformed coordinate in \( \text{BEV}_0 \) and the point \( \hat{\varGamma}  \) is articulated in homogeneous coordinates, i.e., \( \hat{\varGamma}  = [x, y, 1]^T \). 

{\color{black}
Inspired by IoU, the metric for perceptual quality depends on the intersection of the ground truth and other predicted results. As we are mainly concerned with the impact of an unstable channel on perceptual quality, we can assume that the ego vehicle ($\text{CAV 0}$ in Fig. \ref{fig:priority}) can obtain highly accurate locations of nearby objects, which can serve as the ground truth. For each nearby CAV within communication range, $\text{CAV 0}$ can calculate priority weights only using overlapping perceptual objects, i.e., we only calculate priority weights based on BEV features (boxes) about objects $\pi_0,\pi_1,\pi_2,$ and $\pi_3$. Specifically, the priority weight $\mathcal{P}_{10}$ is formulated as follows:
\begin{equation}\label{eq:bev iou}
  \begin{aligned}
     \mathcal{P}_{10} =\frac{\| [ \mathrm{BEV}_0 \cap \mathrm{BEV}_{1}^{\prime} ]_\pi\|_2}{| [ \mathrm{BEV}_0 ]_\pi |} ,
  \end{aligned}
\end{equation}
where \( \text{BEV}_{\text{0}} \) denote the ego CAV's BEV features and \( \text{BEV}_1^{\prime}=\hat{T}_{10}\times \text{BEV}_1 \) signifies the transformed BEV, the operation \( [ \cdot ]_\pi \) zeroes out pixels outside the overlapping perceptual objects of the fused BEV's non-zero region, \( \| \cdot \|_2 \) measures the square root of squared pixel values, \( | \cdot | \) sums all pixel values in an image, and the numerator in Eq. (\ref{eq:bev iou}) quantifies perception difference between data from CAVs 1 and 0, with the denominator providing normalization. The term $\mathcal{P}$ in fact captures two characteristics between the view of the ego CAV and the transformed view of the assisting CAV for data fusion. If two views are very close, the intersection will be similar, the numerator in Eq. (\ref{eq:bev iou}) will be very high, thus ${\mathcal{P}}_{10}$ will be very high, which implies that the view from CAV 1 will accurate enough to enhance the perception  quality, meaning that the view from CAV 1 can be assigned high priority. If the view from CAV 1 is too dissimilar, the intersection will be close to empty, implying that the numerator in Eq. (\ref{eq:bev iou}) will be close to 0, meaning that the view from CAV 1 may be more misleading and should be assigned lower priority.
\begin{figure}[t]
  \centering
  \includegraphics[width=0.47\textwidth]{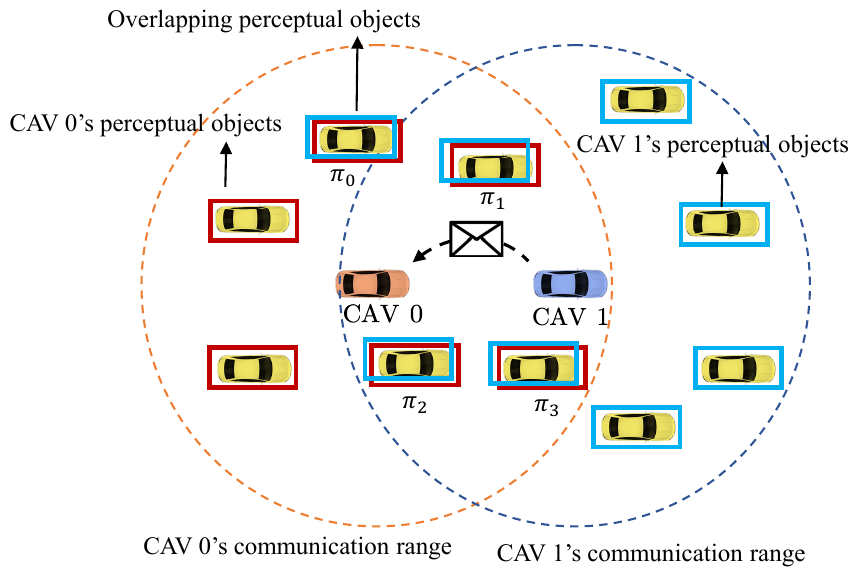}
  \caption{{\color{black}An example of determining priority weight $\mathcal{P}_{10}$.} }
  \label{fig:priority}
\end{figure}

Moreover, our system incorporates a gate mechanism that effectively filters out data from CAVs whose priority weights fall below a certain threshold. This gating process prevents low-quality data from influencing the overall perception quality, thereby maintaining the integrity and accuracy of the collaborative perception system despite the inherent instabilities in vehicular network conditions.
}
\subsection{Utilization Maximization in Collaborative Perception}
\label{sec:Utilization maximization in collaborative perception}
The performance of collaborative perception relies on the quality of the V2V communication network. Hence, our aim is to maximize the amount of perception data transmitted under the constraints of communication resources, such as power and bandwidth. However, given the varying contributions of nearby CAVs, it is unwise to directly optimize the total throughput of the V2V networks. Instead, we employ the aforementioned priority weight to adjust the resources allocated to each CAV. 
{\color{black}Moreover, one group of CAVs might have overlapping perceptual regions. To handle this, we leverage the union-style perception region set $G_i$, defined as:$G_i = \bigcup_{j \in J_i} \mathcal{G}_j $, where $\mathcal{G}_j$ represents the perception region of the $j$th CAV, and $J_i = \{j | s_{ji} = 1\}$. Here, $\mathcal{I}(\mathcal{S}_i)$ represents the total area covered by the union of perceptual regions of vehicles in set $J_i$, which is mathematically expressed as:
\begin{equation}\label{eq:union}
\mathcal{I}(\mathcal{S}_i) = \mathcal{A} \left( \bigcup_{j \in J_i} \mathcal{G}_j \right)
\end{equation}
where $\mathcal{A}(X)$ denotes the area of a region $X$. The utilization of the union of regions inherently introduces a submodular property in the coverage function. Submodularity\footnote{The definition of submodularity can be found in Definition \ref{def:Submodularity}.} in this context means that adding an additional CAV to a group of vehicles with significantly overlapping regions provides diminishing marginal gains in covered area. This property naturally discourages the system from including excessive overlaps in the set of actively transmitting vehicles. Specifically, we exploit this submodular property by formulating the sub-utility function for perception quality as $\mathcal{U}_{\mathrm{r}} = \sum_i \sum_{j = 1, j \ne i} \mathcal{P}_{ji} s_{ji} d_{ji}$, which reflects the accuracy and robustness of surrounding perception. Concurrently, the sub-utility function for the perceptual regions of CAVs is defined as $\mathcal{U}_{\mathrm{p}} = \sum_i \mathcal{I}(\mathcal{S}_i) $, which represents the amount of information collected by data fusion. }To maximize both utilities at the same time, the weighted utility function is formulated as:
\begin{equation}\label{eq:utility}
  \begin{aligned}
    \mathcal{U} _{\mathrm{sum}}(\mathcal{S},\mathcal{D}) &=\mathrm{\omega}_1\mathcal{U} _{\mathrm{r}} +\mathrm{\omega}_2\mathcal{U} _{\mathrm{p}} \\
    &=\mathrm{\omega}_1\underset{\text{Perception quality}}{\underbrace{\sum_{i=1}^N{\sum_{j=1,j\ne i}^N{\mathcal{P} _{ji}s_{ji}d_{ji}}}}}+\mathrm{\omega}_2\underset{\text{Perceptual region}}{\underbrace{\sum_{i=1}^N{\mathcal{I} \left( \mathcal{S} _i \right)}}}
,
  \end{aligned}
\end{equation}
where $\omega_1$ and $\omega_2$ denote the weights of perception quality and perceptual region, respectively. The optimized variable $\mathcal{D}=\left[d_{ij}\right]_{N\times N}$ is the matrix of data transmission rate. By combining the constraints and objective function Eq. (\ref{eq:utility}), we formulate the utility function maximization problem as:
\begin{equation}\label{OP:P0}
  \begin{aligned}
    \mathbf{P}: \quad \!&\max  \limits_{\mathcal{\mathcal{R}} ,\mathcal{S},\mathcal{D}} \quad \mathcal{U} _{\mathrm{sum}}(\mathcal{S},\mathcal{D})\\
    \textrm{s.t.} \quad
    &(\ref{eq:subchannel}), (\ref{eq:transmission rate}), (\ref{eq:the constraint of the compression ratio}), (\ref{eq:the constraint of the compression ratio 2}), (\ref{eq:computing power constraint}), (\ref{eq:power constraint}),
  \end{aligned}
\end{equation}
where the constraint (\ref{eq:subchannel}) is the upper bound of the number of subchannels; (\ref{eq:transmission rate}) denotes the constraint of transmission rate; (\ref{eq:the constraint of the compression ratio}) and (\ref{eq:the constraint of the compression ratio 2}) are the constraints of the compression ratio; (\ref{eq:computing power constraint}) and (\ref{eq:power constraint}) are the upper bounds of the computing capacity and transmission power, respectively. In Sec. \ref{sec:A Two-Stage Optimization Framework for Utility Maximization}, we prove that the utility function maximization can be decomposed into two problems, i.e., a nonlinear programming problem and submodular optimization problem that can be solved by a greedy algorithm with an approximation guarantee.

\section{Two-Stage Optimization Framework for Utility Maximization}
\label{sec:A Two-Stage Optimization Framework for Utility Maximization}
In this section, we first want to show that the original problem \(\mathbf{P}\) is NP-hard, making it hard to find its optimal solution within latency-sensitive systems. To find a suboptimal solution, we conceive a two-stage optimization framework, which decomposes \(\mathbf{P}\) into two distinct problems: a nonlinear programming (NLP) \(\mathbf{P_1}\), and a submodular optimization \(\mathbf{P_2}\). We ascertain the optimal solution for \(\mathbf{P_1}\) in Sec. \ref{sec:Nonlinear programming problem analysis}, and deduce a $(1 - e^{-1})$-approximation of the optimal value for \(\mathbf{P_2}\) using an iterative algorithm in Sec. \ref{sec:Submodular Analysis and Solutions}.

\subsection{Nonlinear Programming Problem Analysis}
\label{sec:Nonlinear programming problem analysis}
\begin{proposition}
\label{proposition:np-hard}
For fixed $\mathcal{D}^{(n)}$, problem \( \mathbf{P} \) is reducible from Weighted Maximum Coverage Problem (WMCP). When optimizing over matrices \(\mathcal{D}\) and \(\mathcal{R}\), \( \mathbf{P} \) surpasses WMCP's complexity, establishing its NP-hardness.
\end{proposition}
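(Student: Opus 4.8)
\section*{Proof Proposal for Proposition~\ref{proposition:np-hard}}

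The plan is to prove NP-hardness of (the decision version of) $\mathbf{P}$ by a polynomial-time reduction from WMCP into a restricted family of instances of $\mathbf{P}$ in which only the perceptual-region term of $\mathcal{U}_{\mathrm{sum}}$ is active and the subchannel bound (\ref{eq:subchannel}) is the single binding constraint. Recall a WMCP instance: a ground set $\mathcal{E} = \{e_1,\dots,e_m\}$ with nonnegative weights $w_1,\dots,w_m$, subsets $S_1,\dots,S_M \subseteq \mathcal{E}$, and a budget $k$; one must choose at most $k$ subsets so as to maximize the total weight of covered elements. WMCP is well known to be NP-hard (already in the unweighted case), so exhibiting it as a special case of $\mathbf{P}$ suffices.

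First I would build the geometric gadget that turns the union-area functional $\mathcal{I}(\cdot)$ into a coverage function. For each element $e_t$ I place a pairwise-disjoint axis-aligned square $Q_t$ of area $\mathcal{A}(Q_t) = w_t$ (laying out $m$ disjoint squares of prescribed areas is polynomial). I introduce one candidate transmitting CAV $v_\ell$ for each subset $S_\ell$, set its perceptual region $\mathcal{G}_\ell = \bigcup_{t:\, e_t \in S_\ell} Q_t$, and give every non-candidate vehicle an empty perceptual region. All candidates are connected to a single ego CAV $v_0$, so that $J_0 = \{\ell : s_{\ell 0} = 1\}$ and, since the $Q_t$ are disjoint, $\mathcal{I}(\mathcal{S}_0) = \mathcal{A}\!\bigl(\bigcup_{\ell \in J_0}\mathcal{G}_\ell\bigr) = \sum_{t:\, e_t \in \bigcup_{\ell \in J_0} S_\ell} w_t$, which is exactly the WMCP coverage value of the selected subfamily; every other $\mathcal{I}(\mathcal{S}_i)$ vanishes.

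Next I would deactivate the remaining machinery of $\mathbf{P}$. Set $\omega_1 = 0$ and $\omega_2 = 1$, so $\mathcal{U}_{\mathrm{sum}} = \mathcal{I}(\mathcal{S}_0)$ and the matrices $\mathcal{D}$ and $\mathcal{R}$ no longer affect the objective; $\mathcal{D} = \mathbf{0}$ is feasible because the entries of $\mathcal{D}$ need only be nonnegative. Take all $r_{\ell 0} = 1$, which is admissible since $r_{ij}\in(0,1]$ and $L_{\ell 0}\ge 0$ give $r_{\ell 0}e^{L_{\ell 0}} = e^{L_{\ell 0}} \ge 1 \ge \eta$, and choose the compression-ratio box in (\ref{eq:the constraint of the compression ratio}) to contain $1$; then make the energy budgets $E_j^T$, CPU capacities $F_j$, channel capacities $C_{ij}$ and generation rates $A_i$ large enough that (\ref{eq:transmission rate}), (\ref{eq:the constraint of the compression ratio 2}), (\ref{eq:computing power constraint}) and (\ref{eq:power constraint}) hold for every $0/1$ choice of the $s_{\ell 0}$ with $\sum_\ell s_{\ell 0} \le K$. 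Finally set $K = k$, so (\ref{eq:subchannel}) becomes precisely $\sum_\ell s_{\ell 0} \le k$. Feasible link selections of this instance of $\mathbf{P}$ then correspond bijectively to subfamilies of size $\le k$, with objective equal to the covered weight; hence an optimal (resp.\ a threshold-$\theta$) solution of $\mathbf{P}$ yields, in polynomial time, an optimal (resp.\ weight-$\ge\theta$) solution of WMCP, and conversely. This proves $\mathbf{P}$ is NP-hard already for fixed $\mathcal{D}$. Since this is a special case of the full problem (when $\omega_1 = 0$, letting $\mathcal{D},\mathcal{R}$ vary changes nothing), optimizing jointly over $\mathcal{S}, \mathcal{D}, \mathcal{R}$ is at least as hard; it additionally carries the nonconvex mixed-integer coupling $s_{ji}d_{ji}$ subject to the coupled energy and capacity constraints, so it can only be harder than WMCP.

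The step I expect to be the real obstacle is twofold: ensuring the union-area functional $\mathcal{I}(\cdot)$ is expressive enough to realize an arbitrary weighted set system, and verifying that none of the auxiliary constraints (\ref{eq:transmission rate})--(\ref{eq:power constraint}) secretly couples the selection variables in a way that would break the equivalence with WMCP. The disjoint-square gadget settles the first point exactly, and choosing abundant communication and computation resources (so that those constraints remain slack under the cardinality bound) settles the second; everything else is routine verification.
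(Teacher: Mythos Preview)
Your reduction is correct and, in fact, more carefully worked out than the paper's own argument. The key difference in route is this: the paper keeps both summands of $\mathcal{U}_{\mathrm{sum}}$ active, identifies the WMCP universe elements directly with links and asserts that the WMCP weight ``aligns with'' the per-ego utility $\mathcal{U}_i(\mathcal{S}_i,\mathcal{D}^{(n)})$, without spelling out the subset family or isolating which term supplies the coverage structure. You instead zero out the linear term by taking $\omega_1=0$ and build an explicit geometric gadget---disjoint squares for ground-set elements, perceptual regions $\mathcal{G}_\ell$ as their unions---so that $\mathcal{I}(\mathcal{S}_0)$ \emph{is} the WMCP objective. You also make explicit why the side constraints (\ref{eq:transmission rate})--(\ref{eq:power constraint}) can be rendered slack and why only the cardinality bound (\ref{eq:subchannel}) survives, a point the paper leaves implicit.

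What your approach buys is a clean, verifiable bijection between feasible link selections and WMCP subfamilies, and it pinpoints exactly which piece of the model ($\mathcal{I}(\cdot)$, not the priority-weighted throughput) carries the combinatorial hardness. What the paper's approach buys is brevity and the suggestion that hardness persists even when the linear term is present; your remark that the full problem can only be harder covers this as well. One small point worth adding for completeness: since (\ref{eq:subchannel}) bounds the \emph{total} number of links over all $(i,j)$, you should observe that any optimal solution sets $s_{ij}=0$ for $j\neq 0$ (those links contribute nothing when all other perceptual regions are empty and $\omega_1=0$), so the budget $K=k$ effectively constrains only $\sum_\ell s_{\ell 0}$.
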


\begin{Proof}
Please refer to Appendix A.
{\hfill $\blacksquare$\par}
\end{Proof}
According to Proposition \ref{proposition:np-hard}, it can be concluded that the problem $\mathbf{P}$ is NP-hard. Besides, it can be observed that as the number of cooperative vehicles $N$ increases, the state space of problem $\mathbf{P}$ exhibits double exponential growth. Specifically, with the state space represented as $2^{\left(N^2\right)}\times \mathcal{D}_n \times \mathcal{R}_n$, the dimensionality increases rapidly with respect to $N$.  For example, let $\mathcal{D}_n$ and $\mathcal{R}_n$ denote the discrete levels for continuous variables $\mathcal{D}$ and $\mathcal{R}$. We can consider a scenario with 10 CAVs (refer to Sec. \ref{sec:simulation_settings}). As for the upper bound of $\mathcal{D}$ is $40$ Mbps, we assume a step size of 1 Mbps with a total of $\mathcal{D}_n = 40$ levels. As for the compression ratio $\mathcal{R}$, there are also $\mathcal{R}_n=20$ levels with an increment of 0.05 per level. Therefore, we can obtain the state space of the problem $\mathbf{P}$ is approximately $1.014\times 10^{33}$.

It is highly complicated to find the optimal solution even when the V2V network scale is no more than $10$ vehicles. Since the control and decision of CAVs are latency-sensitive, we have to conceive a real-time optimization solver to address the issue of finding an optimal solution for the NP-hard problem $\mathbf{P}$. Therefore, we decompose $\mathbf{P}$ into two subproblems by fixing one of the optimization variables.



Denote the link establishment in the $n$th round as $\mathcal{S}^{(n-1)}$. We then focus on adjusting the matrices for the compression ratio $\mathcal{R}$ and the data rate $\mathcal{D}$. Problem $\mathbf{P}_1$ is expressed as:
\begin{equation}\label{OP:P1}
  \begin{aligned}
    \mathbf{P_1}: \quad \!&\max \limits_{\mathcal{R},\mathcal{D}}\ \  \mathcal{U} _{\mathrm{sum}}\left(\mathcal{S}^{(n-1)}, \mathcal{D}\right)\\
    \textrm{s.t.} \quad
    &(\ref{eq:transmission rate}), (\ref{eq:the constraint of the compression ratio}), (\ref{eq:the constraint of the compression ratio 2}),\\
    &(\ref{OP:P1}\textrm{a}): A_j+\sum_{i=1,i\ne j}^N{{r} _{ij}s_{ij}^{(n-1)} d_{ij}}\leqslant F_j/\beta,\\
    &(\ref{OP:P1}\textrm{b}): \sum_{i=1,i\ne j}^N \left(E_{j}^{t}\big|_{s_{ij}^{(n-1)}}+ E_{ij}^{c}\big|_{s_{ij}^{(n-1)}} \right) \leq E^{T}_j.
  \end{aligned}
\end{equation}
For \(j = 1, 2, \ldots, N\), the sub-problem \(\mathbf{P}_1\) is an NLP problem due to the nonlinear constraints given by (\ref{eq:transmission rate}), (\ref{OP:P1}\textrm{a}), and (\ref{OP:P1}\textrm{b})\footnote{This is due to the product of decision variables \( {r} _{ij} \) and \( d_{ij} \).}. While global optimization techniques like branch and bound or genetic algorithms can be used for non-convex problems, they are computationally demanding. Our approach is to linearize the problem. We define \(\mathbf{U} = \mathcal{R} \odot \mathcal{D} = \left[ u_{ij} \right] _{N\times N}\) where \( \odot \) is the Hadamard product and \( u_{ij} = {r} _{ij}d_{ij} \). This linearizes the product term in the constraints. {\color{black}Thus, \(\mathbf{P}_1\) is equivalently reformulated as\footnote{For simplicity, we omit $\mathcal{I}\left(\mathcal{S}_i\right)$, $\omega_1$, and $\omega_2$ in the formulation of the first sub-problem, since those terms are independent of $\mathbf{P_{1-1}}$.}:
\begin{equation}\label{OP:P1-2}
  \begin{aligned}
    \mathbf{P}_{1-1}:\quad &\max_{\mathbf{U},\mathcal{D}} \,\,\sum_{j=1}^N{\sum_{i=1,i\ne j}^N{s_{ij}^{(n-1)}\mathcal{P} _{ij}d_{ij}}}\\
    \text{s.t.}\quad
    &(\ref{OP:P1-2}\text{a}): u_{ij} \le \min(C_{ij},{u_{ij}}{d_{ij}^{-1}}A_i), \\
    &(\ref{OP:P1-2}\text{b}): \max({r}_{j,\min},{\eta}{e^{-L_{ij}}}) \le {u_{ij}}{d_{ij}^{-1}} \le {r}_{j,\max},\\ 
    &(\ref{OP:P1-2}\text{c}): \sum_{i=1,i\ne j}^N{s_{ij}^{(n-1)}u_{ij}} \leqslant \min(\gamma_j ^{(n-1)},\varphi_j),
  \end{aligned}
\end{equation}
where we define \( \gamma_j ^{\left( n-1 \right)} \) as \( \frac{E^T}{\epsilon _j\tau _{j}^{c}} - \frac{\tau _{j}^{t}P_t\sum_{i=1,i\ne j}^N{s_{ij}^{(n-1)}}}{\epsilon _j\tau _{j}^{c}} - A_j \) and \( \varphi_j \) as \( \frac{F_j}{\beta} - A_j \). Moreover, the constraint (\ref{OP:P1-2}\textrm{c}) is derived from both (\ref{OP:P1}\textrm{a}) and (\ref{OP:P1}\textrm{b}). Even though we add a bilinear equality constraint with \( u_{ij} \), constraint (\ref{OP:P1-2}\textrm{b}) remains nonlinear, thereby we cannot obtain the optimal result directly. Given that \( \mathbf{P_{1-1}} \) attempts to optimize \( \sum_{j=1}^N{\sum_{i=1,i\ne j}^N{s_{ij}^{(n-1)}d_{ij}}} \), we have:
\begin{equation}\label{OP:P1-2-inequality}
  \begin{aligned}
    \frac{u_{ij}}{r_{j,\max}} \le d_{ij} \le \min \left\{ A_i, u_{ij} \left[ \max \left( r_{j,\min}, \frac{\eta}{e^{L_{ij}}} \right) \right]^{-1} \right\},
  \end{aligned}
\end{equation}
which offers an upper bound for the optimal value of \( \mathbf{P_{1-1}} \).} {\color{black}To optimize \( d_{ij} \), it is beneficial to focus on maximizing this limit. Therefore, we derive a relaxed problem as follows:
\begin{equation}\label{OP:P1-3}
  \begin{aligned}
    \mathbf{P_{1-2}}: \quad \!&\max \limits_{\mathbf{U}}\ \ \sum_{j=1}^N\sum_{i=1,i\ne j}^N  \frac{s_{ij}^{(n-1)}\mathcal{P} _{ij}u_{ij}}{\max \left( {r} _{j,\min},\frac{\eta}{e^{L_{ij}}} \right)}\\
    \textrm{s.t.} \quad
    &(\ref{OP:P1-2}\textrm{a}) \ \mathrm{and}\ (\ref{OP:P1-2}\textrm{c}),\\
    &u_{ij} = 0 \ \text{if} \ s_{ij}^{(n-1)} = 0.
  \end{aligned}
\end{equation}
\( \mathbf{P_{1-2}} \) is a standard linear programming solvable using techniques like the simplex or interior-point methods. If the optimal outcome of \( \mathbf{P_{1-2}} \) is \( u_{ij}^{(n)} \), the optimal solutions for transmission rate and adaptive compression ratio are \( d_{ij}^{(n)} = \min \left\{ A_i, u_{ij} \left[ \max \left( r_{j,\min}, \frac{\eta}{e^{L_{ij}}} \right) \right]^{-1} \right\}\) and \( {r}_{ij}^{(n)} = {u_{ij}^{(n)}}/{d_{ij}^{(n)}} \), respectively. Considering that \( d_{ij} \) values are at the edges of the feasible region, the optimal solution for \( \mathbf{P_{1-2}} \) matches that of \( \mathbf{P_{1-1}} \). It is noted that \( u_{ij} = 0 \ \text{if} \ s_{ij}^{(n-1)} = 0 \) guarantees that \( u_{ij} \) is explicitly set to zero whenever \( s_{ij}^{(n-1)} = 0 \).}
\subsection{Preliminaries for Submodular Optimization}\label{sec:Preliminaries}

Prior to delving into specific details of the other subproblem \( \mathbf{P_{2}} \), we briefly review the definition and primary characteristics of submodularity as presented in \cite{dobzinski2013communication}. 
\begin{definition}
\label{def:Derivative}
{\bfseries (Set Function Derivative)} Given a set function \( f: 2^{V} \rightarrow \mathbb{R} \), for a subset \( S \) of \( V \) and an element \( e \) in \( V \), the discrete derivative of \( f \) at \( S \) concerning \( e \) is denoted by \( \Delta_{f}(e | S) \) and defined as \( \Delta_{f}(e | S) = f(S \cup \{e\}) - f(S) \). If the context makes the function \( f \) evident, we omit the subscript, expressing it simply as \( \Delta(e | S) \).
\end{definition}

\begin{definition}\label{def:Monotonicity}
{\bfseries (Monotonicity)}
Given a function \(f: 2^{V} \rightarrow \mathbb{R}\), \(f\) is deemed monotone if, for all \(A, B \subseteq V\) with \(A \subseteq B\), the condition \(f(A) \leq f(B)\) holds.
\end{definition}
It should be underscored that the function \(f\) exhibits monotonicity if and only if every discrete derivative maintains a non-negative value. Specifically, for each \(A \subseteq V\) and any \(e \in V\), the relation \(\Delta(e | A) \geq 0\) is satisfied.

\begin{definition}
\label{def:Submodularity}
{\bfseries (Submodularity)} Let \( E \) denote a finite ground set. A set function \( f: 2^{E} \rightarrow \mathbb{R} \) is said to be normalized, non-decreasing, and submodular if it satisfies the following properties:
    \begin{enumerate}
        \item \( f(\emptyset) = 0 \);
        \item \(f\) is monotone as per Definition \ref{def:Monotonicity};
        \item  For any \(A, B \subseteq E\), \( f(A) + f(B) \geq f(A \cup B) + f(A \cap B) \);
        \item  For any \( A \subseteq B \subseteq E \) and an element \( e \in E \setminus B \), $\Delta _f(e|A)\geqslant \Delta _f(e|B)$.
    \end{enumerate}
\end{definition}

In the next subsection, we prove that the objective function possesses submodular properties. As more CAVs share their perception results, the ego CAV tends to add significant new information for view fusion. However, as the number of CAV increases, each additional CAV provides less new information, i.e., \textit{Diminishing Marginal Utility}. This concept is crucial in CAVs' scenarios, ensuring that resources are not wasted by redundant sensors.

\subsection{Submodular Analysis and Solutions}
\label{sec:Submodular Analysis and Solutions}
\begin{algorithm}[t]
    \caption{Greedy Algorithm for submodular function maximization}
    \label{alg:greedy-p2}
    \begin{algorithmic}[1]
      \REQUIRE Adaptive compression ratio matrix \( \mathcal{R}^{(n)} \), data transmission rate \( \mathcal{D}^{(n)} \).
      \ENSURE Output the optimal link establishment matrix \( \mathcal{S} \).
      
      \STATE Initialization: \( \mathcal{S} \leftarrow \emptyset \), \( i \leftarrow 1 \);
      \WHILE{\( i \leq N \)}
          \STATE \( s_{ij}^{*} = \arg \max_{s_{ij} \in \mathcal{S}^{n \times n} \backslash \mathcal{S}} \mathcal{U} _{\mathrm{sum}}(\mathcal{S} \cup \{s_{ij} \},\mathcal{D}^{(n)}) \);
          \STATE \( \mathcal{S} \leftarrow \mathcal{S} \cup \{s_{ij}^{*}\} \);
          \STATE \( i \leftarrow i+1 \);
      \ENDWHILE
      \RETURN \( \mathcal{S} \).
    \end{algorithmic}
\end{algorithm}

{\color{black}In this subsection, we first prove that the objective function of the problem $\mathbf{P_2}$ is a submodular function. The objective function $\mathcal{U}_{\mathrm{sum}}(\mathcal{S}, \mathcal{D})$ represents the aggregated utility of the system, incorporating both perception quality and coverage. However, achieving an optimal solution for this objective function is challenging due to its NP-hard nature. As more CAVs share their perception results, the ego CAV gains significant new information for view fusion. However, as the number of CAVs increases, each additional CAV provides less new information, i.e., \textit{Diminishing Marginal Utility}. This concept is crucial in CAV scenarios to ensure resources are not wasted by redundant sensors. To efficiently solve the submodular function maximization problem, we adapt a greedy algorithm. Submodular functions exhibit the diminishing returns property, allowing a greedy algorithm to find a near-optimal solution. Specifically, we show that a greedy algorithm can achieve at least $\left(1 - e^{-1}\right)$ of the optimal value to solve $\mathbf{P_2}$ efficiently.}
\begin{proposition}
\label{proposition:submodular}
Given that $\mathbf{P}_2$ characterizes the link establishment problem and the data rate $\mathcal{D}=\mathcal{D}^{(n)}$ is a constant matrix, the objective function $\mathcal{U}_{\text{sum}}(\mathcal{S},\mathcal{D}^{(n)})$ defined in Eq. (\ref{eq:utility}) exhibits submodularity if $\mathcal{U}_{\text{sum}}(\mathcal{S},\mathcal {D}^{(n)})$ satisfies all the properties outlined in Definition \ref {def:Submodularity} within Sec. \ref{sec:Preliminaries}.
\end{proposition}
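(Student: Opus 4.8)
The plan is to verify, one by one, the four defining properties of Definition~\ref{def:Submodularity} for $\mathcal{U}_{\mathrm{sum}}(\cdot,\mathcal{D}^{(n)})$, exploiting the additive decomposition $\mathcal{U}_{\mathrm{sum}}(\mathcal{S},\mathcal{D}^{(n)}) = \omega_1\,\mathcal{U}_{\mathrm{r}}(\mathcal{S}) + \omega_2\,\mathcal{U}_{\mathrm{p}}(\mathcal{S})$ of Eq.~(\ref{eq:utility}) together with the elementary closure fact that a non-negative linear combination of normalized, monotone, submodular set functions is again normalized, monotone, and submodular. The ground set is $E = \{(i,j): 1\le i,j\le N,\ i\ne j\}$, the set of candidate directed links, and a link configuration is identified with the subset $\mathcal{S} = \{(i,j): s_{ij}=1\}\subseteq E$; since $\mathcal{D}=\mathcal{D}^{(n)}$ is frozen, every quantity below depends on $\mathcal{S}$ alone. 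Hence it suffices to check properties 1--4 for $\mathcal{U}_{\mathrm{r}}$ and $\mathcal{U}_{\mathrm{p}}$ separately and then add.

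For the perception-quality part I would note that, with the $d_{ji}$ fixed at $d_{ji}^{(n)}$, $\mathcal{U}_{\mathrm{r}}(\mathcal{S}) = \sum_{(j,i)\in\mathcal{S}} w_{ji}$ with $w_{ji} = \mathcal{P}_{ji}\,d_{ji}^{(n)}\ge 0$; that is, $\mathcal{U}_{\mathrm{r}}$ is \emph{modular}. Property 1 holds since the empty sum is $0$; monotonicity holds because every discrete derivative equals $\Delta(e\mid A) = w_e\ge 0$; and a modular function satisfies the union--intersection inequality (property 3) and the diminishing-returns inequality (property 4) with equality, since $\Delta(e\mid A) = w_e = \Delta(e\mid B)$ for all $A\subseteq B$ and $e\notin B$. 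So $\mathcal{U}_{\mathrm{r}}$ meets Definition~\ref{def:Submodularity}.

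The substantive step is the coverage part $\mathcal{U}_{\mathrm{p}}(\mathcal{S}) = \sum_i \mathcal{I}(\mathcal{S}_i)$ with $\mathcal{I}(\mathcal{S}_i) = \mathcal{A}\big(\bigcup_{j\in J_i}\mathcal{G}_j\big)$ from Eq.~(\ref{eq:union}) and $J_i = J_i(\mathcal{S}) = \{j:(j,i)\in\mathcal{S}\}$. Reading $\mathcal{A}(\cdot)$ as the planar area (a monotone, finitely additive set function with $\mathcal{A}(\emptyset)=0$), each map $\mathcal{S}\mapsto\mathcal{I}(\mathcal{S}_i)$ is a weighted set-cover (union-area) function: $\mathcal{S}=\emptyset$ gives $J_i=\emptyset$ and $\mathcal{I}(\mathcal{S}_i)=0$, and enlarging $\mathcal{S}$ enlarges $J_i$ and hence the union, so monotonicity follows. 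For diminishing returns, take $A\subseteq B\subseteq E$ and $e=(j_0,i_0)\notin B$. Adding $e$ affects only the $i_0$-th summand, since a link with target $i_0$ leaves $J_i$ unchanged for $i\ne i_0$, and there the gain is the uncovered portion of $\mathcal{G}_{j_0}$:
\begin{equation*}
\Delta(e\mid A) = \mathcal{A}\Big(\mathcal{G}_{j_0}\setminus\bigcup_{j\in J_{i_0}(A)}\mathcal{G}_j\Big) \;\ge\; \mathcal{A}\Big(\mathcal{G}_{j_0}\setminus\bigcup_{j\in J_{i_0}(B)}\mathcal{G}_j\Big) = \Delta(e\mid B),
\end{equation*}
because $J_{i_0}(A)\subseteq J_{i_0}(B)$ makes the subtracted (already covered) set larger for $B$ and $\mathcal{A}$ is monotone. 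Summing over $i$ preserves the inequality (property 4), and invoking the classical equivalence between the diminishing-returns form and the union--intersection form of submodularity delivers property 3 as well; properties 1 and 2 were verified above.

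Combining the two parts with $\omega_1,\omega_2\ge 0$, the function $\mathcal{U}_{\mathrm{sum}} = \omega_1\mathcal{U}_{\mathrm{r}} + \omega_2\mathcal{U}_{\mathrm{p}}$ is normalized, monotone, and satisfies both submodularity inequalities, i.e. it meets every requirement of Definition~\ref{def:Submodularity}, which is exactly the assertion of Proposition~\ref{proposition:submodular}. The main obstacle I anticipate is bookkeeping rather than conceptual: one must be careful that each ground-set element $e=(j_0,i_0)$ feeds the perceptual region of exactly one ego vehicle $i_0$, so that the sum $\sum_i\mathcal{I}(\mathcal{S}_i)$ produces no cross terms in the marginal-gain computation, and one should note that the subchannel cardinality constraint (\ref{eq:subchannel}) is irrelevant to this claim—submodularity is a property of the objective, not of the feasible set—while being precisely the structure (a uniform matroid) that later yields the $(1-e^{-1})$ approximation guarantee for the greedy algorithm on $\mathbf{P_2}$.
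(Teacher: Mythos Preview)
Your proposal is correct and follows essentially the same strategy as the paper: decompose $\mathcal{U}_{\mathrm{sum}}$ into the modular perception-quality term $\omega_1\mathcal{U}_{\mathrm{r}}$ and the coverage-type perceptual-region term $\omega_2\mathcal{U}_{\mathrm{p}}$, verify the four properties of Definition~\ref{def:Submodularity} for each, and combine. Your treatment of the coverage term is in fact more explicit than the paper's---you compute the marginal gain as the area of the newly covered region and derive diminishing returns from $J_{i_0}(A)\subseteq J_{i_0}(B)$, whereas the paper simply asserts that ``the inherent property of the union operation ensures diminishing returns''; the observation that a link $(j_0,i_0)$ touches only the $i_0$-th summand is exactly the bookkeeping needed and is left implicit in the paper.
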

\begin{proof}
    Please refer to Appendix \ref{proof: Proposition 2}.{\hfill $\blacksquare$\par}
\end{proof}

With the adaptive compression ratio matrix \( \mathcal{R}^{(n)}=\left[{r}_{ij}^{(n)}\right]_{N\times N} \) and data transmission rate \( \mathcal{D}^{(n)}=\left[d_{ij}^{(n)}\right]_{N\times N} \), we then formulate the link establishment matrix \( \mathcal{S} \) for \( \mathbf{P}_2 \) as:
\begin{equation}\label{OP:P2}
  \begin{aligned}
    \mathbf{P}_2:\quad \!&\max_{\mathcal{S}} \,\, \sum_{i=1}^N{\left( \mathrm{\omega}_1\sum_{j=1,j\ne i}^N{\mathcal{P} _{ji}s_{ji}d_{ij}^{(n)}}+\mathrm{\omega}_2\mathcal{I} \left( \mathcal{S} _i \right) \right)}\\
    \textrm{s.t.} \quad
    &(\ref{OP:P2}\textrm{a}): \sum_{i=1,i\ne j}^N{\mathrm{\chi}_{ij}^{\left( n \right)} s_{ij}}\le E_{j}^{T}-\tau _{j}^{c}\epsilon _jA_j ,\\
    &(\ref{OP:P2}\textrm{b}): \sum_{i=1,i\ne j}^N{s_{ij} u_{ij}^{(n)}}\leqslant \varphi_j \ \mathrm{and}\  (\ref{eq:subchannel}), 
  \end{aligned}
\end{equation}
where $\mathrm{\chi}_{ij}^{\left( n \right)}=u_{ij}^{\left( n \right)}\epsilon _j\tau _{j}^{c}+\tau _{j}^{t}P_t$ can be obtained by the inequality constraint (\ref{eq:power constraint}). Since the objective function $\mathcal{U} _{\mathrm{sum}}(\mathcal{S},\mathcal{D})$ is a submodular function according to Proposition \ref{proposition:submodular}, $\mathbf{P_{2}}$ is a submodular function maximization problem, which can be solved by a greedy algorithm for near-optimal results. 

\begin{proposition}
\label{proposition:gap}
Given a submodular, non-decreasing set function $\mathcal{U} _{\mathrm{sum}}(\mathcal{S},\mathcal{D}^{(n)})$, which yields \(\mathcal{U} _{\mathrm{sum}}(\emptyset,\mathcal{D}^{(n)})=0\), the greedy algorithm obtains a set \(\mathcal{S}_{G}\) satisfying:
\begin{equation}\label{eq:gap}
  \begin{aligned}
    \mathcal{U} _{\mathrm{sum}}\left( \mathcal{S} _G,\mathcal{D} ^{(n)} \right) \ge \left( 1-e^{-1} \right) \max_{\mathcal{S}} \mathcal{U} _{\mathrm{sum}}\left( \mathcal{S} ,\mathcal{D} ^{(n)} \right). 
  \end{aligned}
\end{equation}
\end{proposition}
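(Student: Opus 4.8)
The plan is to invoke the classical Nemhauser--Wolsey--Fisher theorem on the greedy algorithm for maximizing a monotone submodular function subject to a cardinality constraint, since Proposition~\ref{proposition:submodular} already establishes that $\mathcal{U}_{\mathrm{sum}}(\mathcal{S},\mathcal{D}^{(n)})$ is normalized, non-decreasing, and submodular, and constraint~(\ref{eq:subchannel}) is precisely a cardinality (uniform matroid) constraint $|\mathcal{S}|\le K$. First I would fix notation: let $\mathcal{S}^*$ denote an optimal solution, let $\mathcal{S}_G=\mathcal{S}_K$ be the output of Algorithm~\ref{alg:greedy-p2} after its termination, and let $\mathcal{S}_0=\emptyset\subseteq\mathcal{S}_1\subseteq\cdots\subseteq\mathcal{S}_K$ be the chain of intermediate sets, where $\mathcal{S}_{t}=\mathcal{S}_{t-1}\cup\{s^*_{t}\}$ with $s^*_t$ the element of maximal marginal gain chosen in round $t$.

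The core of the argument is the standard one-step inequality. For each $t$, by monotonicity and submodularity (property~4 of Definition~\ref{def:Submodularity}),
\begin{equation}\label{eq:gap-onestep}
\mathcal{U}_{\mathrm{sum}}(\mathcal{S}^*,\mathcal{D}^{(n)})\le \mathcal{U}_{\mathrm{sum}}(\mathcal{S}^*\cup\mathcal{S}_{t-1},\mathcal{D}^{(n)})\le \mathcal{U}_{\mathrm{sum}}(\mathcal{S}_{t-1},\mathcal{D}^{(n)})+\sum_{e\in\mathcal{S}^*\setminus\mathcal{S}_{t-1}}\Delta(e\,|\,\mathcal{S}_{t-1}),
\end{equation}
where the second step telescopes the marginal gains of adding the elements of $\mathcal{S}^*\setminus\mathcal{S}_{t-1}$ one at a time and uses that each such marginal gain at an intermediate superset of $\mathcal{S}_{t-1}$ is at most its value at $\mathcal{S}_{t-1}$. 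Since $|\mathcal{S}^*\setminus\mathcal{S}_{t-1}|\le K$ and the greedy choice $s^*_t$ maximizes the marginal gain over all candidate elements, each term in the sum is at most $\Delta(s^*_t\,|\,\mathcal{S}_{t-1})=\mathcal{U}_{\mathrm{sum}}(\mathcal{S}_t,\mathcal{D}^{(n)})-\mathcal{U}_{\mathrm{sum}}(\mathcal{S}_{t-1},\mathcal{D}^{(n)})$. Writing $\delta_t=\mathcal{U}_{\mathrm{sum}}(\mathcal{S}^*,\mathcal{D}^{(n)})-\mathcal{U}_{\mathrm{sum}}(\mathcal{S}_t,\mathcal{D}^{(n)})$ for the residual gap, \eqref{eq:gap-onestep} rearranges to $\delta_{t-1}\le K(\delta_{t-1}-\delta_t)$, i.e. $\delta_t\le(1-1/K)\delta_{t-1}$.

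Iterating this recursion from $t=0$ gives $\delta_K\le(1-1/K)^K\delta_0=(1-1/K)^K\,\mathcal{U}_{\mathrm{sum}}(\mathcal{S}^*,\mathcal{D}^{(n)})$, where I used $\mathcal{S}_0=\emptyset$ and the normalization $\mathcal{U}_{\mathrm{sum}}(\emptyset,\mathcal{D}^{(n)})=0$. Then the elementary bound $(1-1/K)^K\le e^{-1}$ for all $K\ge 1$ yields $\mathcal{U}_{\mathrm{sum}}(\mathcal{S}^*,\mathcal{D}^{(n)})-\mathcal{U}_{\mathrm{sum}}(\mathcal{S}_G,\mathcal{D}^{(n)})=\delta_K\le e^{-1}\,\mathcal{U}_{\mathrm{sum}}(\mathcal{S}^*,\mathcal{D}^{(n)})$, which rearranges to the claimed $\mathcal{U}_{\mathrm{sum}}(\mathcal{S}_G,\mathcal{D}^{(n)})\ge(1-e^{-1})\max_{\mathcal{S}}\mathcal{U}_{\mathrm{sum}}(\mathcal{S},\mathcal{D}^{(n)})$.

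The main subtlety — not a deep obstacle but the point requiring care — is matching the greedy procedure in Algorithm~\ref{alg:greedy-p2} to the hypotheses of the Nemhauser--Wolsey--Fisher bound: the algorithm as written runs $N$ rounds adding one link per round, so one must argue that the relevant cardinality budget is $K$ (via constraint~(\ref{eq:subchannel})) and that the additional knapsack-type constraints~(\ref{OP:P2}a) and~(\ref{OP:P2}b) either are subsumed or are handled by the gate/feasibility check implicit in the $\arg\max$ over $\mathcal{S}^{n\times n}\setminus\mathcal{S}$. If only the cardinality constraint~(\ref{eq:subchannel}) is treated as active, the argument above goes through verbatim; if the knapsack constraints are genuinely binding, one would instead need the cost-effective greedy variant and a correspondingly modified constant, so I would state explicitly that the $(1-e^{-1})$ guarantee is established under the uniform-matroid (cardinality) relaxation, which is the setting in which Proposition~\ref{proposition:gap} is intended.
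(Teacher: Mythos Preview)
Your proposal is correct and follows exactly the classical Nemhauser--Wolsey--Fisher argument; the paper itself does not give a proof but simply refers the reader to Sec.~II of \cite{krause2014submodular}, which contains precisely the derivation you have written out. Your closing caveat about the knapsack constraints~(\ref{OP:P2}a)--(\ref{OP:P2}b) versus the pure cardinality constraint is a legitimate observation that the paper does not address.
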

\begin{Proof}
    Please refer to Sec. II of \cite{krause2014submodular}.{\hfill $\blacksquare$\par}
\end{Proof}\par%
According to Proposition \ref{proposition:gap}, it can be observed that Algorithm \ref{alg:greedy-p2} can obtain a $(1-e^{-1})$-approximation of the optimal value of $\mathbf{P_{2}}$.
During the \(n\)th round, we update the link establishment \( \mathcal{S}^{(n)} \). If link \( (i,j) \) reduces the throughput, it is removed: \( \mathcal{S}^{(n)}\gets \mathcal{S}^{(n)}\backslash\{s_{ij}\} \). Otherwise, it is added: \( \mathcal{S}^{(n)}\gets \mathcal{S}^{(n)}\cup{\{s_{ij}\}} \). For \( \mathbf{P_{2}} \), we iteratively adjust links until we find an optimal solution meeting all constraints or hit the iteration limit by relying on Algorithm \ref{alg:greedy-p2}.

This greedy approach not only ensures maximization of the objective function \( \mathcal{U}_{\mathrm{sum}}(\mathcal{S},\mathcal{D}^{(n)}) \) but also significantly reduces computational complexity, making it highly advantageous for real-time applications. According to the cardinality constraint $K$ in (\ref{eq:subchannel}), the time-complexity of Algorithm \ref{alg:greedy-p2} is only $\mathcal{O}(K)$.  
Therefore, we can circumvent the complexities typically associated with optimization problems by such a greedy algorithm rather than engaging in exhaustive searches or iterative procedures, which may not always guarantee convergence to the optimal solution.

\subsection{Deep Learning-Based Adaptive Compression}
\label{sec:Deep Learning-Based Adaptive Compression}
\begin{figure}[t]
  \centering
  \includegraphics[width=0.47\textwidth]{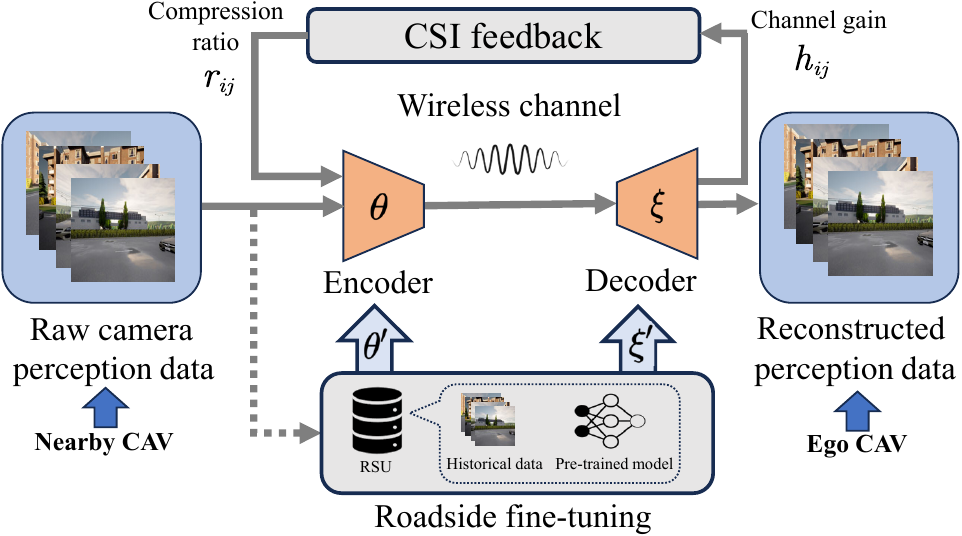}
  \caption{{Overall architecture of adaptive compression.} }
  \label{fig: adaptive compression}
\end{figure}

The compression modules commonly employed in V2V collaborative networks often rely on predefined fixed compression ratios, such as JPEG and JPEG2000\cite{xu2022cobevt}. However, these fixed ratios are inadequate to accommodate the demands of the dynamic channel conditions discussed in Sec. \ref{sec:A Two-Stage Optimization Framework for Utility Maximization}. Moreover, it has been shown that the deep learning-based method generally offers better rate–distortion (R-D) performance compared to the standard compression methods\cite{balle2016end}. In this section, we propose an adaptive compression method that comprises an adaptive R-D mechanism to refine the compression ratio, $\mathcal{R}$, to align with the requirements of dynamic channel conditions. Then, we introduce a fine-tuning strategy to reduce temporal redundancy in V2V transmissions by exploiting the holistic RGB frames. We first present the main procedure of our DL-based compression scheme as follows.

The Deep Learning-Based Compression (DBC) addresses the challenges by using trainable parameters from training datasets, thus offering adaptability to V2V dynamics. Under DBC architecture, both encoder and decoder utilize convolutional layers. The encoder transforms the input image $\mathbf{x}$ into a latent representation, \( \mathbf{z} = f(\mathbf{x};{\theta}) \), with transformation parameters \( \theta \) learned from training. The decoder uses a distinct parameter set \( \xi \) for reconstruction: \( \tilde{\mathbf{z}} = H(\mathbf{z};{\xi}) \). $\mathbf{\hat{x}}$ is the reconstruction image from the decoder. The training objective is to minimize:
\begin{equation}\label{eq:basic R-D equation}
  \mathop{\arg\min}_{\theta, \xi} R\left(\mathbf{\tilde{z}}; \theta\right)+\beta D(\mathbf{x}, \mathbf{\hat{x}}; \theta, \xi),
\end{equation}
where \( R\left(\mathbf{\tilde{z}}; \theta\right) = \mathbb{E} \left[ -\log_2 p_\mathbf{\tilde{z}} \left(\mathbf{\tilde{z}}\right) \right] \) represents the rate function and \( D\left(\mathbf{x}, \mathbf{\hat{x}}; \theta, \xi\right) = \mathbb{E} \left[ \left\|\mathbf{x} - \mathbf{\hat{x}}\right\|^2 \right] \) denotes the distortion function,where the constant \( \beta \) manages the R-D trade-off.  Moreover, we formulate the traditional fixed R-D problem as a multi-R-D problem for adaptability:
\begin{equation}\label{eq:trained R-D}
  \mathop{\arg\min}_{\theta,\xi}R(\hat{h};\theta )+\beta D\left(\mathbf{x}, \mathbf{\hat{x}}; \theta, \xi \right).
\end{equation}
It is noted that \( \beta \) can affect reliable decoding and image quality, emphasizing the need to adjust \( \beta \) adaptively. In this context, we introduce a DBC mechanism to dynamically modify \( \beta \) under dynamic channel conditions (As shown in Fig. \ref{fig: adaptive compression}, given the channel state information (CSI) feedback, we can obtain \( \beta \) and compression ratio according to Sec. \ref{sec:Nonlinear programming problem analysis}). Therefore, the revised problem can be formulated as:
\begin{equation}\label{eq:optimized multi R-D}
  \mathop{\arg\min}_{\theta,\xi} R(\hat{h};\theta,{r} )+G({r}) D\left(\mathbf{x}, \mathbf{\hat{x}}; \theta, \xi,{r} \right),  
\end{equation}
where function \( G \) is the R-D mapping function, which utilizes a lookup table and converts the compression ratio to the tradeoff parameter \( \beta \). Therefore, there is no explicit expression for the function \( G \). Moreover, the adaptive compression network is based on a pre-trained model\cite{yang2020variable}, using historical camera data as training dataset to obtain network parameter $\theta,\xi$ by solving the multi-R-D problem in (\ref{eq:optimized multi R-D}). The detailed architecture and processes are described in Appendix \ref{appendix: Network Structure Details}. We also evaluate the R-D performance of our proposed PACP framework using the Multi-scale Structural Similarity (MS-SSIM) and Peak Signal-to-Noise Ratio (PSNR) metrics in Appendix \ref{appendix: The Evaluation of Rate-Distoration (R-D) Performance}.

As for computational cost, the encoder and decoder of MAE require 0.155 MFLOPs/pixel and 0.241 MFLOPs/pixel\footnote{As a metric of computational cost, MFLOPs/pixel denotes the number of million floating point operations performed per pixel for camera perception data.}, respectively\cite{yang2020variable}. Considering Tesla FSD as the computational unit, when three CAVs share their perception results at a rate of 10 fps, the average latencies for encoding and decoding processes are observed to be 40.26 ms and 20.89 ms, respectively.
\begin{algorithm}[t]
    \caption{Priority Aware Collaborative Perception}
    \label{alg:pacs-throughput}
    \begin{algorithmic}[1]
      \REQUIRE Multi-CAV data \( \varGamma \), number of vehicles \( N \), channel parameter constraints \( K, C_{ij} \), device parameters \( {r}_{j,\min}, {r}_{j,\max}, \eta, \tau_j^t, \tau_j^c, E_j^T \), etc.
      \ENSURE Priority weight \( \mathcal{P} \), near-optimal compression ratio \( \mathcal{R} \), link establishment \( \mathcal{S} \), data rate \( \mathcal{D} \), modulated autoencoder, and BEV prediction.
      
      \STATE Initialize priority weight \( \mathcal{P}^{(0)} \) by equally allocating bandwidth and transmitting initial frames;
      \STATE Initialize the link establishment decision \( \mathcal{S}^{(0)} \);
      
      \FOR{\( j = 0 \) to \( N-1 \)}
        \STATE Sort column of link establishment decision matrix in descending order and get the indices of the largest $K$ capacity, and set the associated $s_{ij}=1$;
      \ENDFOR
      
      \WHILE {Convergence not achieved}
        \STATE Solve the linear programming problem \( \mathbf{P_{1-2}} \);
        \STATE Solve the submodular problem \( \mathbf{P_{2}} \) using Algorithm \ref{alg:greedy-p2};
        
        \STATE Update priority weight \( \mathcal{P} \) based on the current bandwidth allocation and perception results;
        
        \IF {\color{black}{Priority weight \( \mathcal{P} \) changes exceed the threshold $\epsilon _{\mathcal{P}}$}}
          \STATE {\color{black}Re-initialize the link establishment decision \( \mathcal{S}^{(0)} \), then re-calculate initial priority weight \( \mathcal{P} \);}
          \STATE {\color{black}Continue the optimization iteration for problem \( \mathbf{P} \);}
        \ENDIF
        
      \ENDWHILE
      
      \STATE Obtain the near-optimal solution for \( \mathcal{R} ,\mathcal{S},\mathcal{D} \);
      \STATE Determine the trade-off parameter \( \beta^{*} \) based on the channel state;
      \STATE Train encoders and decoders to compress and reconstruct raw camera data according to Eq.~(\ref{eq:optimized multi R-D});
      \STATE Predict BEV feature using the reconstructed camera data. 
    \end{algorithmic}
\end{algorithm}

\section{Simulation Results and Discussions}
\label{sec:Simulation Results And Discussions}
In this section, we evaluate our schemes under various communication settings, which consist of bandwidth, transmission power, the number of CAVs, and the distribution of vehicles. Subsequently, we delve into the comparison of raw data reconstruction performance—both with and without the application of a fine-tuned compression strategy. In the final part of this section, we present the results of BEV along with the associated IoU.

\subsection{Dataset and Baselines}
\label{subsec:Dataset and Baselines}
\textbf{Dataset:} To validate our approach, we employ the CARLA and OpenCOOD simulation platforms, exploiting the OPV2V dataset\cite{xu2022opv2v}. This dataset encompasses 73 varied scenes, a multitude of CAVs, 11,464 frames, and in excess of 232,000 annotated 3D vehicle bounding boxes. All of these have been amassed using the CARLA simulator\cite{dosovitskiy2017carla}.

\textbf{Baseline 1:} the Fairness Transmission Scheme (\textbf{FTS}), which is built on the principles laid out in \cite{9681261}. The cornerstone of this approach is the allocation of subchannels in a manner that resonates with Jain index defined in Eq. (\ref{Jain's Fairness Index}).

\textbf{Baseline 2:} The core of this baseline is the Distributed Multicast Data Dissemination Algorithm (\textbf{DMDDA}), as outlined in \cite{lyu2022distributed}, which seeks to optimize throughput in a decentralized fashion. 

\textbf{Baseline 3:} the \textbf{No Fusion} scheme, which implies the usage of a single ego vehicle for gathering information about its surroundings. It operates without integrating data from the cameras of proximate CAVs.

To ensure a fair comparison, we maintain uniformity in the transmission model and simulation parameters, aligning them to those discussed in Sec. \ref{sec:Channel Modeling}.

\subsection{Simulation Settings}
\label{sec:simulation_settings}
Our simulations are based on the 3GPP standard\cite{8891313}. Specifically, the communication range for vehicles is 150 m. Vehicular speeds, unless otherwise stated, ranging from $0$ to $50$ km/h (typical speed range on city roads), are generated using the CARLA simulator\cite{dosovitskiy2017carla}. 
Additionally, vehicles are uniformly distributed over a six-lane highway of 200 meters, with three lanes allocated for each traveling direction. To simplify, certain frame types, such as cyclic redundancy checks and Reed-Solomon coding, have been omitted from our simulation. Table \ref{table:simulation_parameters} presents the main simulation parameters utilized in our research.
\begin{table}[t]
\centering
\caption{Simulation Parameters}
\begin{tabular}{|c|c|}
\hline
\textbf{Parameter} & \textbf{Value} \\
\hline
\hline
Number of vehicles (\(N\)) & 10 \\
Local data per vehicle (\(A_j\)) & 40 Mbits \\
Number of subchannels (\(K\)) & 4 \\
Computation complexity (\(\beta\)) & 100 Cycles/bit \\
Transmission power (\(P_t\)) & 8 mW \\
CPU capacity (\(F_j\)) & 1 GHz - 3 GHz \\
Bandwidth (\(W\)) & 200 MHz \\
Power threshold (\(E_j^T\)) & 1 kW \\
Weight for perception quality (\(w_1\)) & \(1 \times 10^{-2}\) \\
Weight for perceptual region (\(w_2\)) & \(1 \times 10^{-3}\) \\
Parameter (\(\eta\)) & 1 \\
Compression ratio range (\(r_{\min}, r_{\max}\)) & (0.3, 0.95) \\
\hline
\end{tabular}
\label{table:simulation_parameters}
\end{table}

\subsection{Analysis of Experimental Results}
\label{sec:Experimental Results}
\begin{figure}[t]
  \centering
  \subfigure[IoU vs. Noise level]{
  \includegraphics[width=4.1 cm]{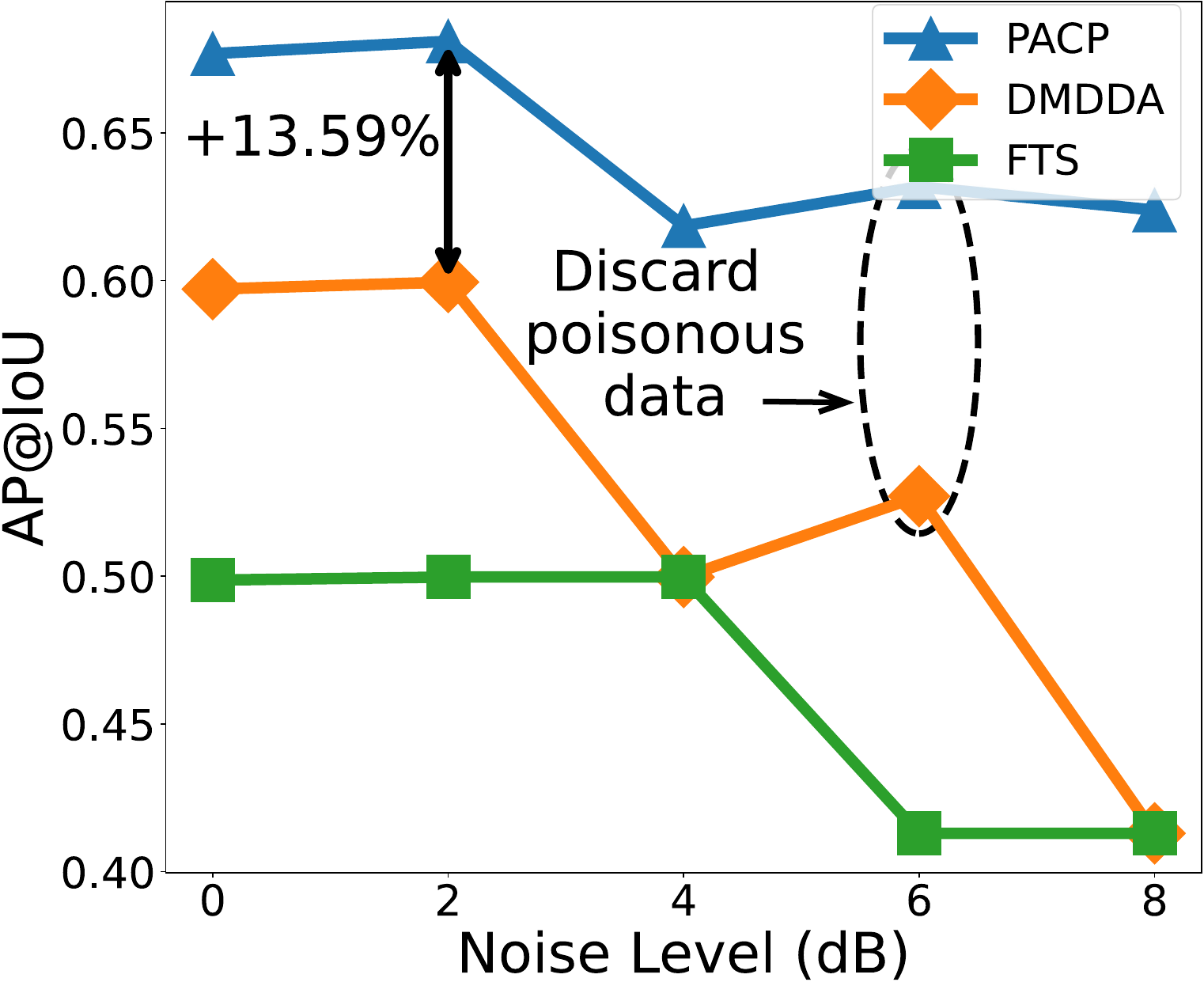}\label{fig:noise-a}
  }
  \subfigure[Utility vs. Noise level]{
  \includegraphics[width=4.1 cm]{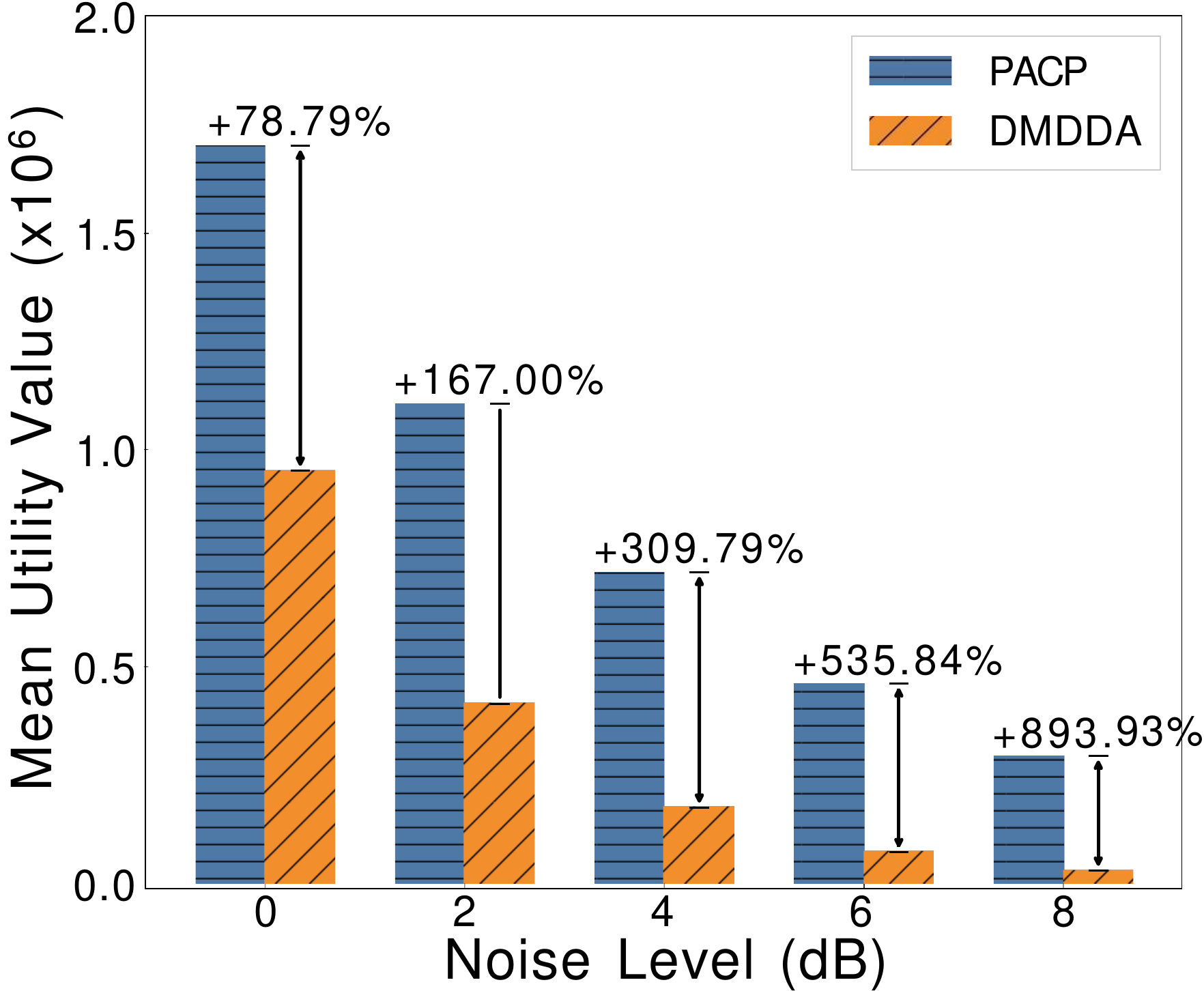}\label{fig:noise-b}
  }
  \caption{AP@IoU and utility value under different noise levels.}
  \label{fig:noise}
\end{figure}
\begin{figure*}[t]
  \centering
  \subfigure[$\mathcal{U} _{\mathrm{sum}}$ vs. Power]{
  \includegraphics[width=3.9cm]{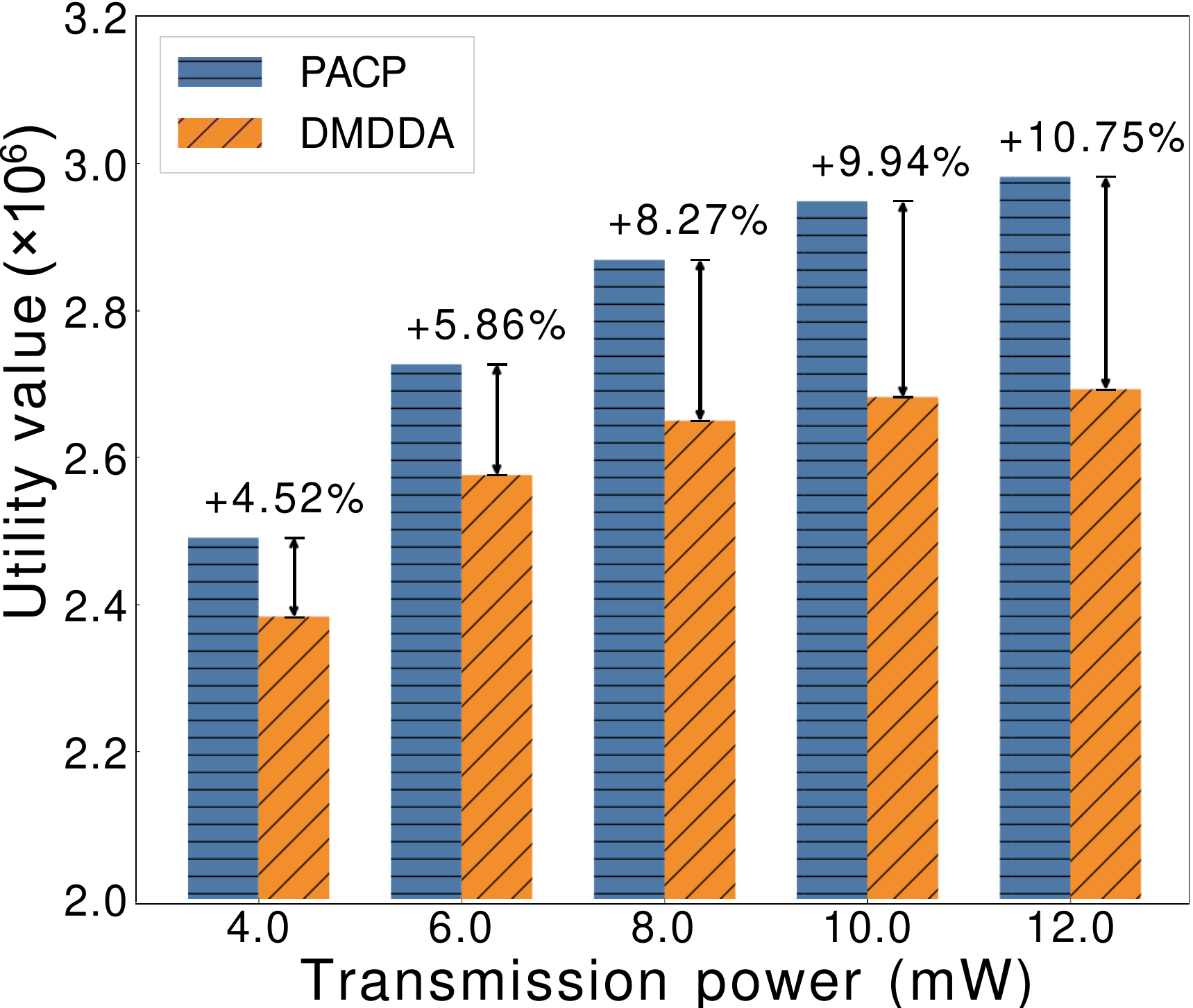}\label{fig:utility_power}
  }
  \subfigure[$\mathcal{U} _{\mathrm{sum}}$ vs. $R_{\max}$]{
  \includegraphics[width=3.9cm]{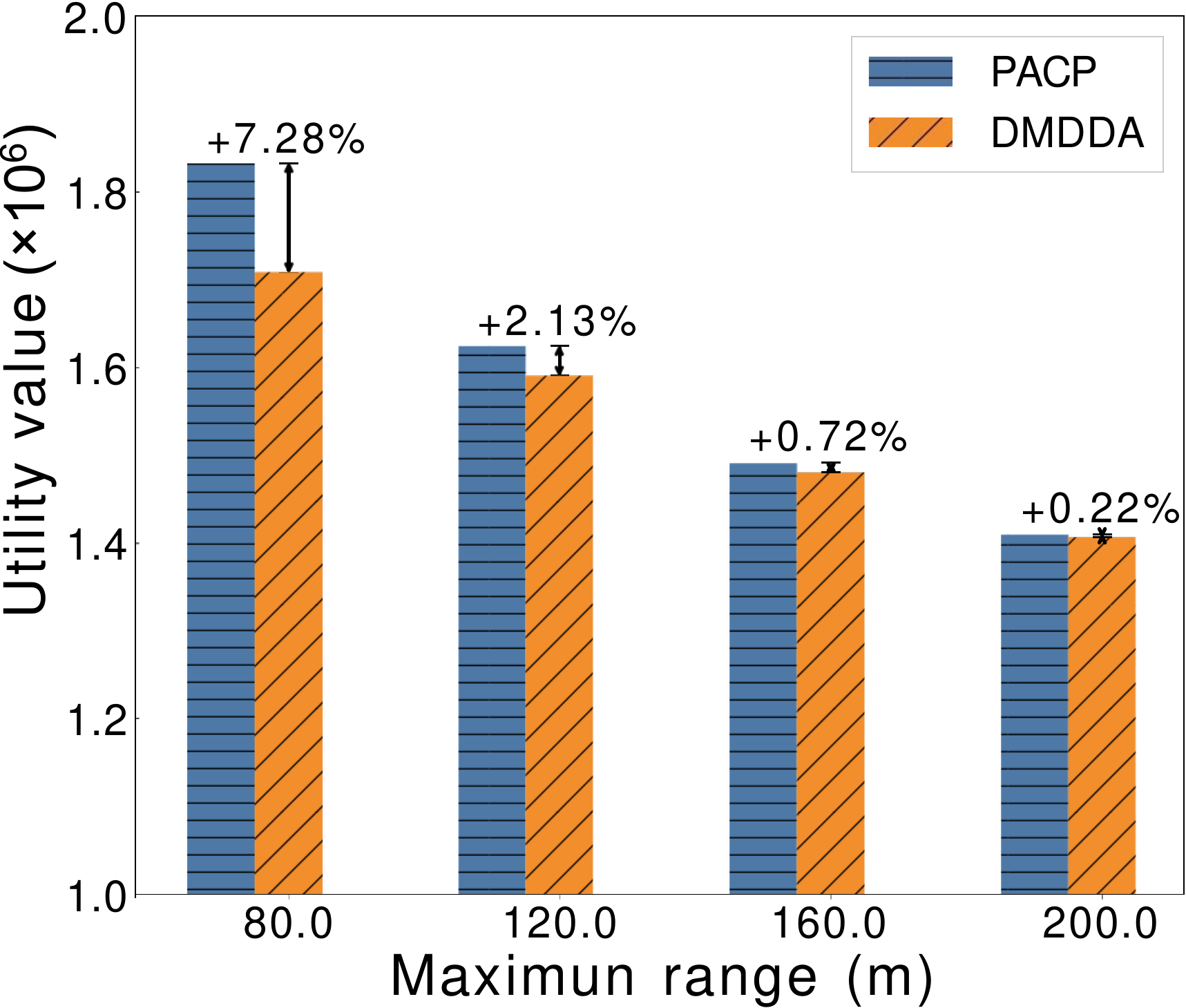}\label{fig:utility_range}
  }
  \subfigure[$\mathcal{U} _{\mathrm{sum}}$ vs. CAV num.]{
  \includegraphics[width=3.9cm]{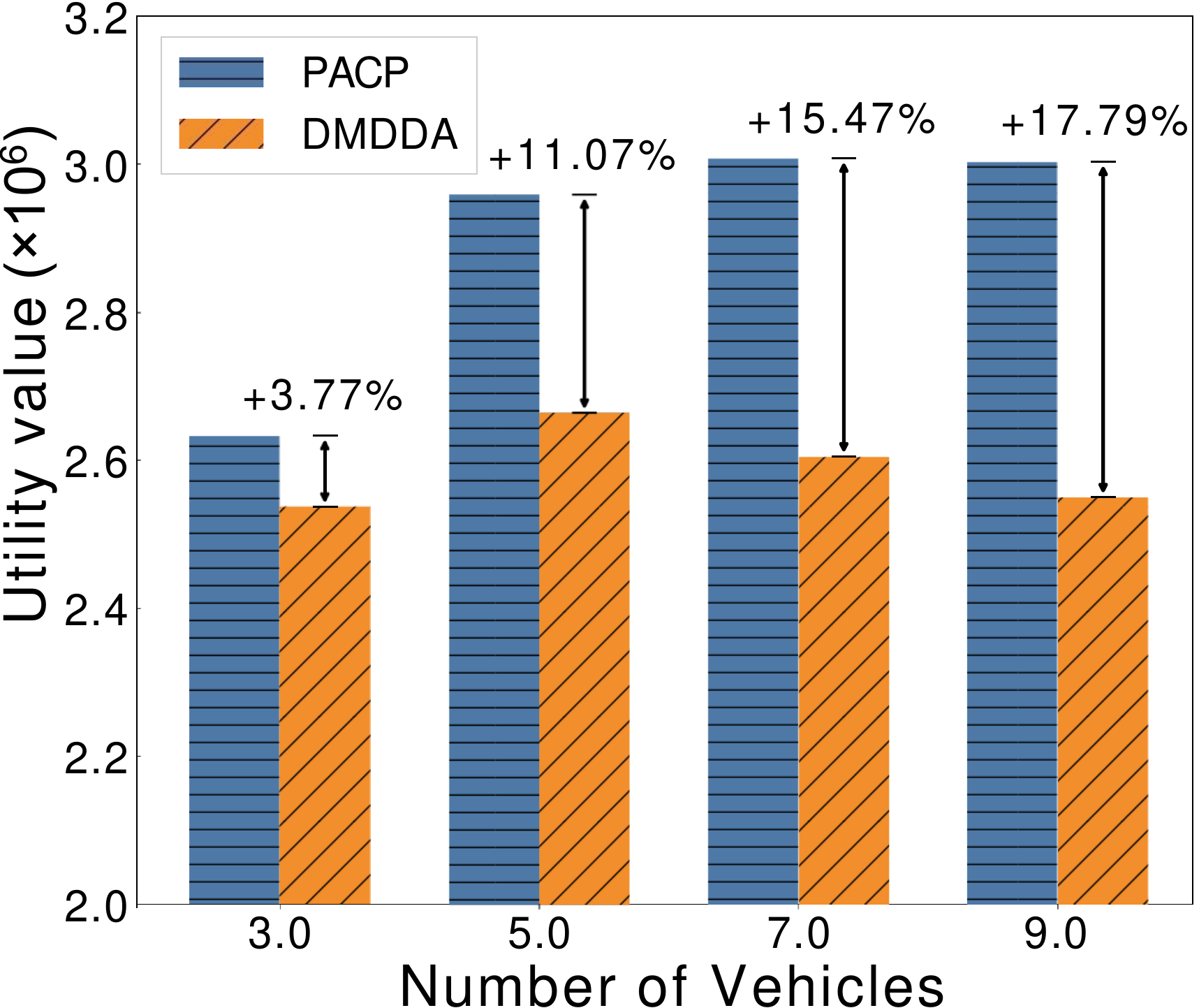}\label{fig:utility_num}
  }
  \subfigure[$\mathcal{U} _{\mathrm{sum}}$ vs. Bandwidth]{
  \includegraphics[width=3.9cm]{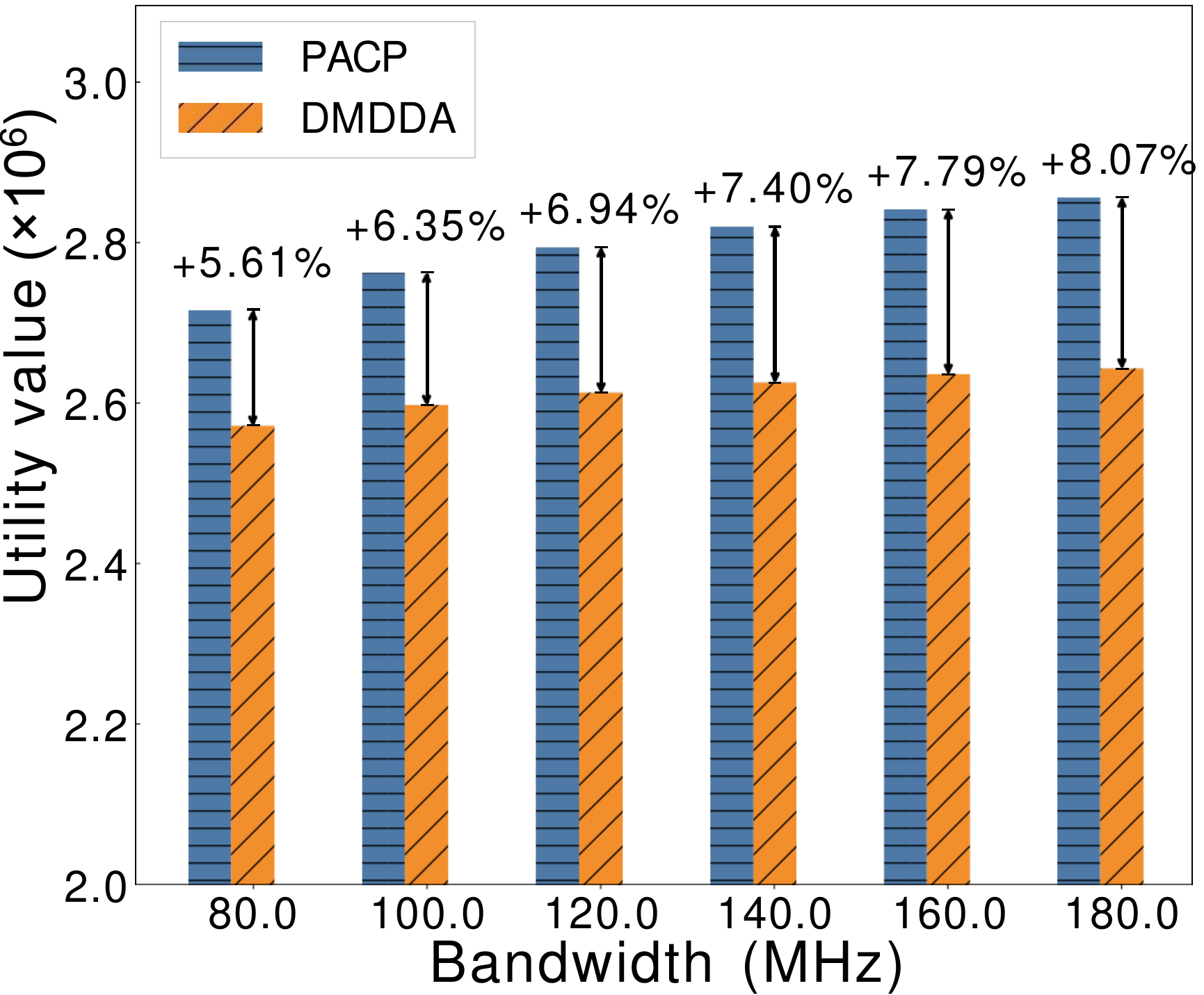}\label{fig:utility_bandwidth}
  }
  \caption{Results on average utility value $\mathcal{U} _{\mathrm{sum}}$ under different communication settings.}
  \label{fig:utility_diff}
\end{figure*}
\begin{figure*}[t]
  \centering
  \subfigure[Throughput vs. Power]{
  \includegraphics[width=3.9cm]{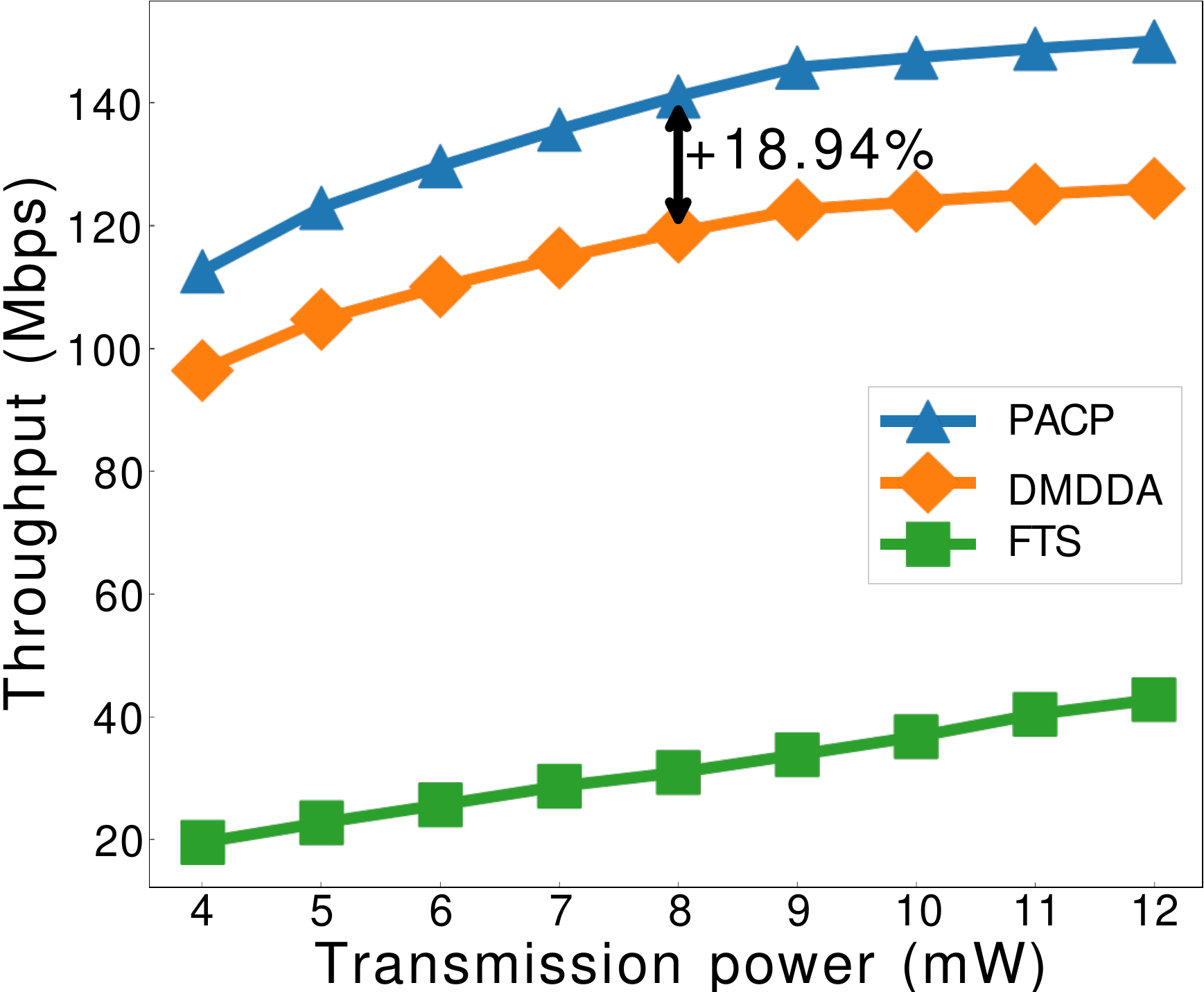}\label{fig:throughput_power}
  }
  \subfigure[Throughput vs. $R_{\max}$]{
  \includegraphics[width=3.9cm]{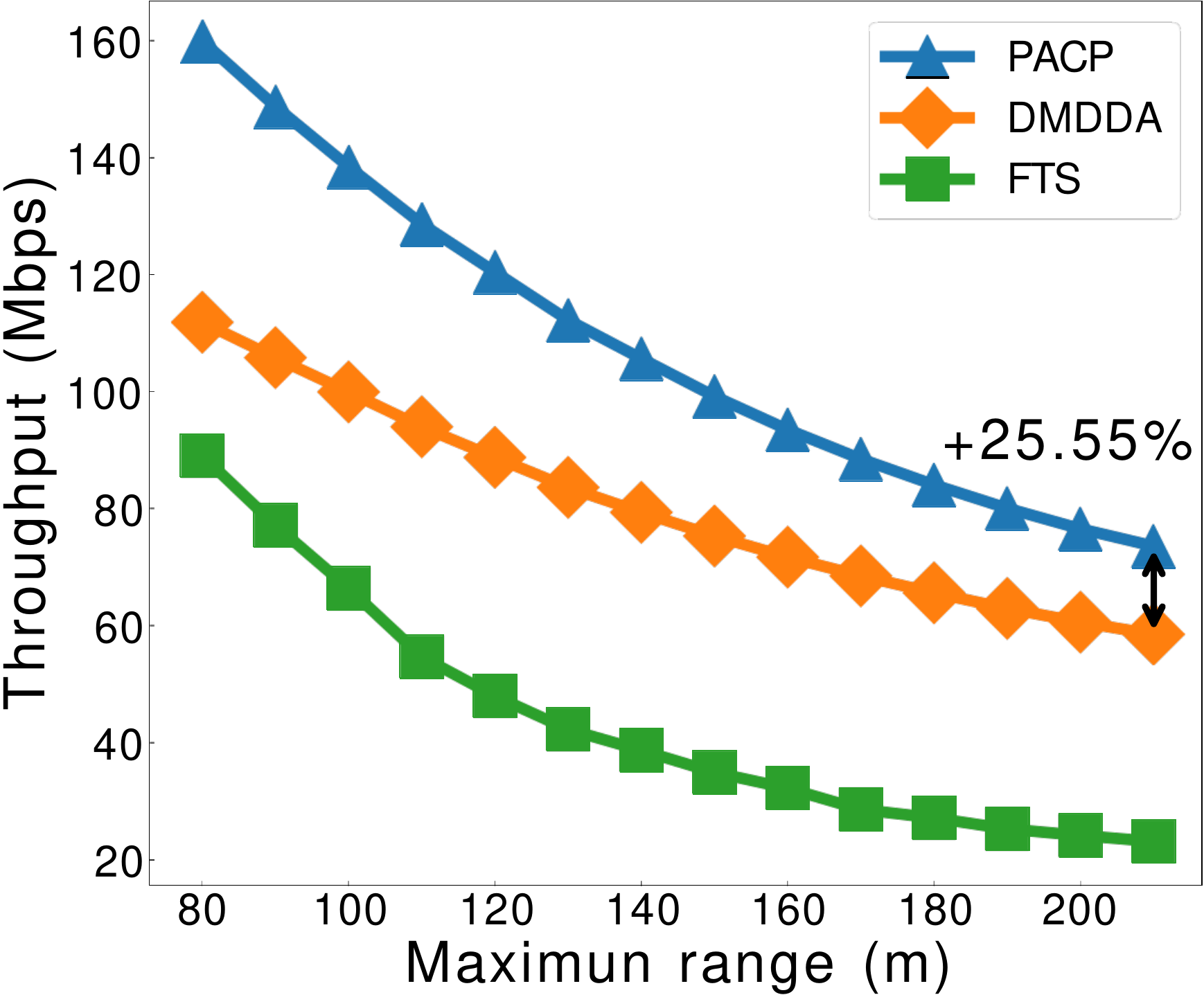}\label{fig:throughput_range}
  }
  \subfigure[Throughput vs. CAV num.]{
  \includegraphics[width=3.9cm]{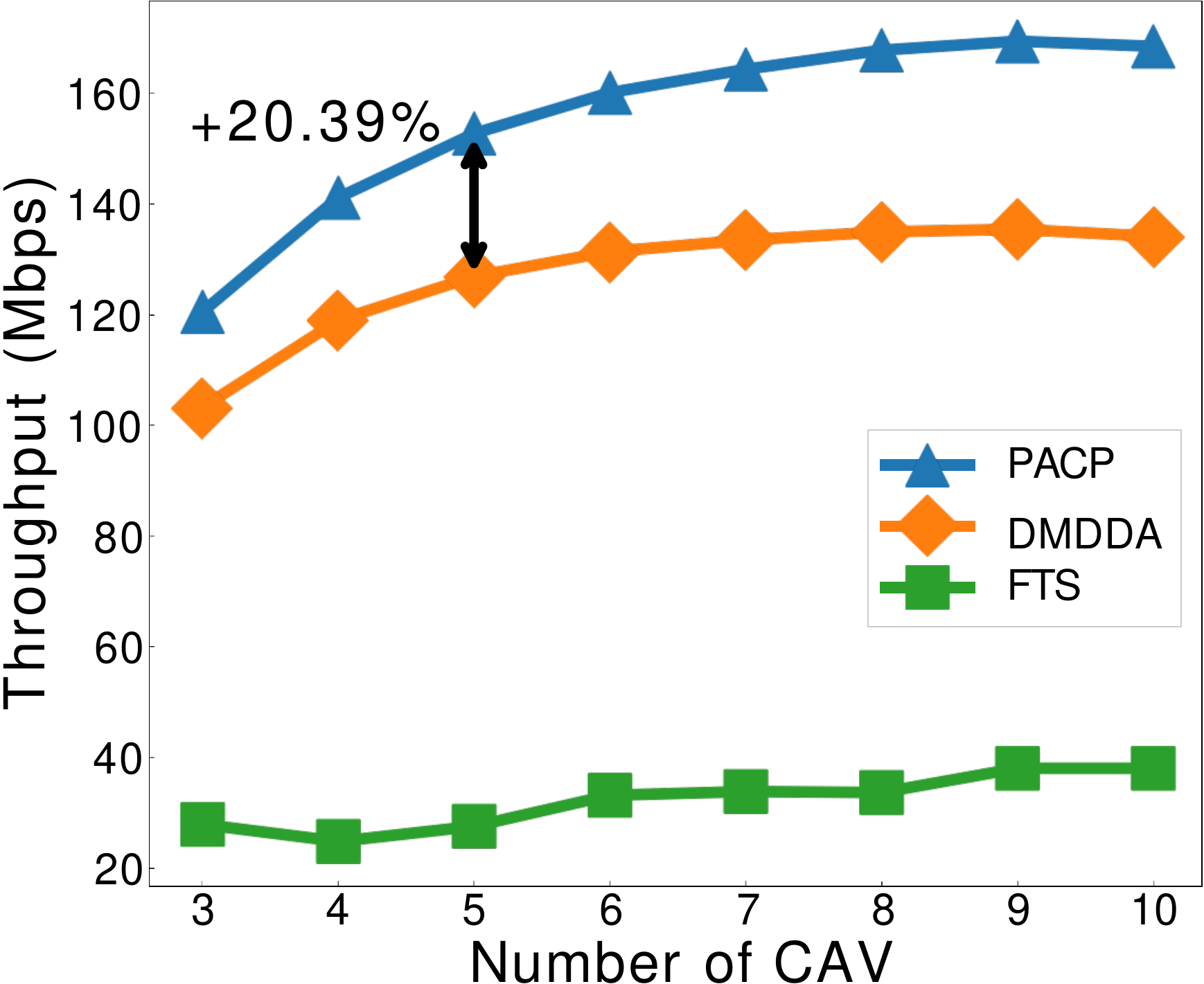}\label{fig:throughput_num}
  }
  \subfigure[Throughput vs. Bandwidth]{
  \includegraphics[width=3.9cm]{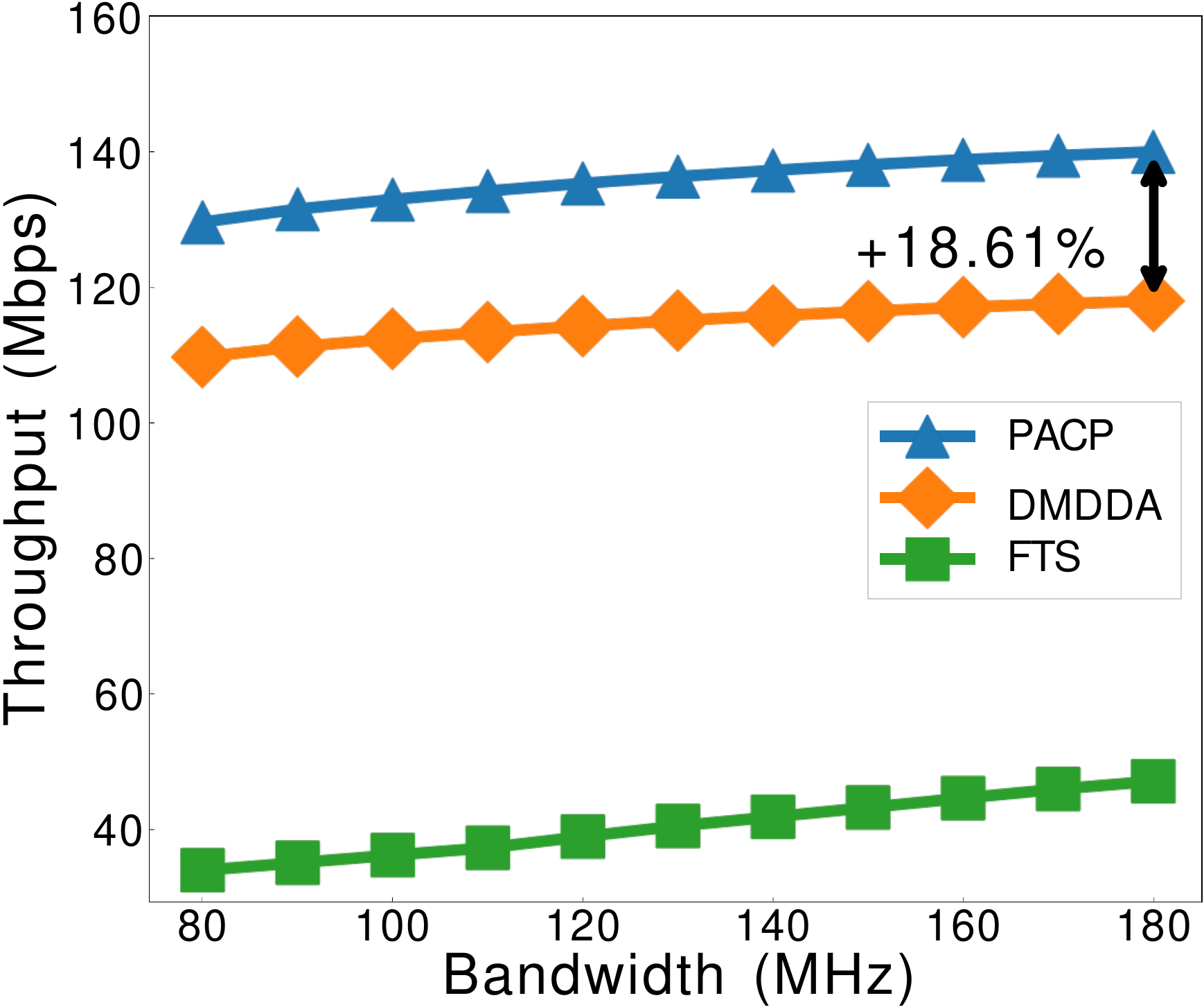}\label{fig:throughput_bandwidth}
  }
  \caption{Results on average network throughput under different communication settings.}
  \label{fig:throughput_comm}
\end{figure*}
In our study, we evaluate the efficacy of the proposed PACP algorithm in maximizing utility value and enhancing perception accuracy (AP@IoU). We benchmark its performance against two established methods: DMDDA\cite{lyu2022distributed} and FTS\cite{9681261}. For fairness, all evaluations are carried out under consistent conditions. Firstly, we conducted experiments to investigate the impact of electromagnetic interference on collaborative perception. The interference levels were classified based on their relative noise floor according to urban environment: 0 dB over -174 dBm/Hz (Low), 4 dB over -174 dBm/Hz (Medium), and 8 dB over -174 dBm/Hz (High). As illustrated in Fig. \ref{fig:noise}, perception accuracy and utility value deteriorate sharply as noise level intensifies. While PACP outperforms both DMDDA and FTS in terms of AP@IoU. Specifically, PACP shows an improvement of 14.04\% over DMDDA and 36.58\% over FTS, because PACP utilizes the adaptive compression to reduce packet loss and latency. In Fig. \ref{fig:noise-a}, due to the priority-aware scheme, PACP effectively discards poisonous data, thereby mitigating interference adverse impact on perception. In Fig. \ref{fig:noise-b}, PACP demonstrates superior performance compared to DMDDA in terms of utility value by increasing over 78.79\%. 

In Eq. (\ref{eq:utility}), the utility value $\mathcal{U} _{\mathrm{sum}}$ denotes the weighted utility function of perception quality and perceptual region. As the priority value is determined by the degree of IoU matching, a larger utility value indicates that more data matching the perception results of the ego CAV have been transmitted. Intuitively, the higher the utility value, the higher the accuracy of the perception results of the ego CAV. Fig. \ref{fig:utility_diff} presents a comprehensive view of how different communication parameters influence the utility value $\mathcal{U}_{\mathrm{sum}}$. In Fig. \ref{fig:utility_power}, as the transmission power increases, both PACP and DMDDA show an improvement in the utility value. However, the utility value of PACP consistently surpasses that of DMDDA. Specifically, at a typical power 8.0 mW, PACP is about 8.27\% better than DMDDA, underlining the efficiency of the proposed PACP approach over the baseline in terms of better prioritized data communication. Fig. \ref{fig:utility_range} indicates that the utility value generally declines as the maximum range increases. Despite this trend, PACP consistently outperforms DMDDA, illustrating its robustness even when the communication range is extended. Fig. \ref{fig:utility_num} reveals that as the number of CAVs increases, the utility value increases. After deploying 9 vehicles, the PACP algorithm outstrips DMDDA by a notable margin of 17.79\%. This accentuates the scalability and efficacy of PACP in larger vehicular networks. From Fig. \ref{fig:utility_bandwidth}, PACP amplifies the utility value by 8.07\% in comparison with DMDDA at a bandwidth of 180 MHz. This suggests that PACP is adept at capitalizing on larger bandwidths to enhance the perception accuracy of the ego CAV. {\color{black}In Fig. \ref{fig:utility_energy}, as the energy consumption constraints \(E^T\) increase, both PACP and DMDDA show an improvement in the utility value. However, the utility value of PACP consistently surpasses that of DMDDA. Specifically, at a typical constraint of 10 W, PACP achieves about 21.34\% higher utility value than DMDDA.}

Intuitively, we can adopt throughput as a metric for cooperative perception. Similar to utility value, a higher throughput often entails a greater amount of valuable information being transmitted. Fig. \ref{fig:throughput_power} shows that larger power enhances throughput. Fig. \ref{fig:throughput_range} depicts an inverse correlation between the vehicle distribution maximum range (80 m - 210 m) and throughput. It is noted that the performance of FTS strategy deteriorates because of balancing communication channels of remote vehicles. Remarkably, with five cooperative CAVs, the PACP algorithm augments the average throughput by 20.39\% compared with DMDDA. Furthermore, Fig. \ref{fig:throughput_bandwidth} underscores the linear affinity between the network's bandwidth (80-180 MHz) and throughput. Specifically, a bandwidth of $W=180$ MHz elevates the average throughput by 18.61\% using the PACP algorithm against other baselines. To sum up, it is evident that a higher throughput results in a richer amount of high-priority data being transmitted according to Figs. \ref{fig:utility_diff}-\ref{fig:throughput_comm}. 

\begin{table*}[]
\centering
\caption{The comparison of PACP with three baselines under different parameters in terms of perception accuracy.}
\label{tab:table1}
\resizebox{0.9\textwidth}{!}{
\begin{tabular}{|c|ccc|ccc|ccc|}
\hline
\multirow{2}{*}{\diagbox[width=8em]{AP@IoU}{Parameters}} & \multicolumn{3}{c|}{\textbf{Power (mW)}} & \multicolumn{3}{c|}{\textbf{Num. of CAVs}} & \multicolumn{3}{c|}{\textbf{Bandwidth (MHz)}} \\ \cline{2-10}
& \textbf{5} & \textbf{8} & \textbf{11} & \textbf{2} & \textbf{3} & \textbf{4} & \textbf{80} & \textbf{140} & \textbf{200} \\ \hline
\textbf{No Fusion} & 0.408 & 0.408 & 0.408 & 0.408 & 0.408 & 0.408 & 0.408 & 0.408 & 0.408 \\ \hline
\textbf{FTS} & 0.410 & 0.499 & 0.499 & 0.506 & 0.447 & 0.410 & 0.410 & 0.410 & 0.499 \\ \hline
\textbf{DMDDA} & 0.590 & 0.605 & 0.646 & 0.648 & 0.605 & 0.603 & 0.597 & 0.597 & 0.603 \\ \hline
\textbf{PACP} & \textbf{0.662} & \textbf{0.673} & \textbf{0.684} & \textbf{0.685} & \textbf{0.685} & \textbf{0.685} & \textbf{0.672} & \textbf{0.680} & \textbf{0.685} \\ \hline
\textbf{Improvement} & 12.20\% & 11.24\% & 5.88\% & 5.71\% & 13.22\% & 13.60\% & 12.56\% & 13.90\% & 13.60\% \\ \hline
\end{tabular}
}
\end{table*}
\begin{figure*}[t]
  \centering
  \includegraphics[width=2\columnwidth]{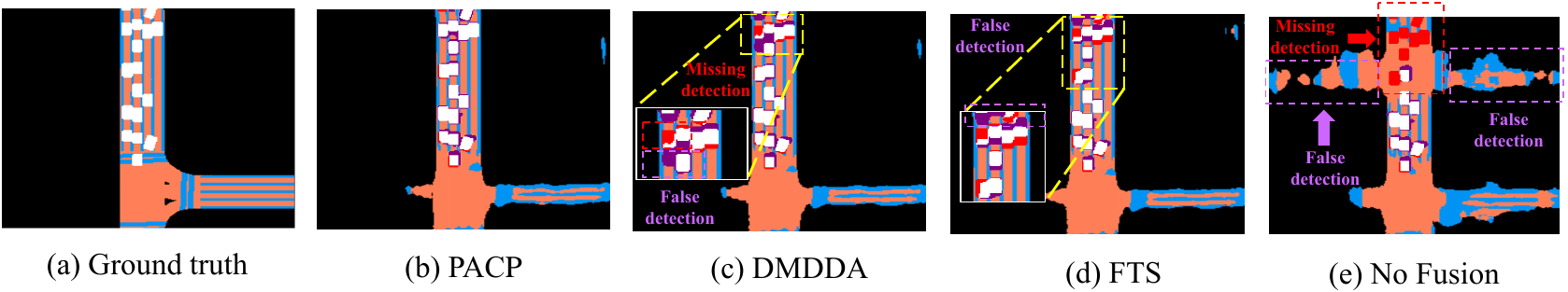}
  \caption{BEV prediction: (a) Ground Truth, (b) PACP, (c) DMDDA, (d) No Fusion. PACP accurately detects dynamic objects, unlike DMDDA, FTS, and No Fusion. White: correct detection; \textcolor{red}{\textbf{Red}}: missing detection; \textcolor{newextractedpurple}{\textbf{Purple}}: false detection.}
  \label{fig:bev}
\end{figure*}
\begin{figure}[t]
  \centering
  \includegraphics[width=0.47\textwidth]{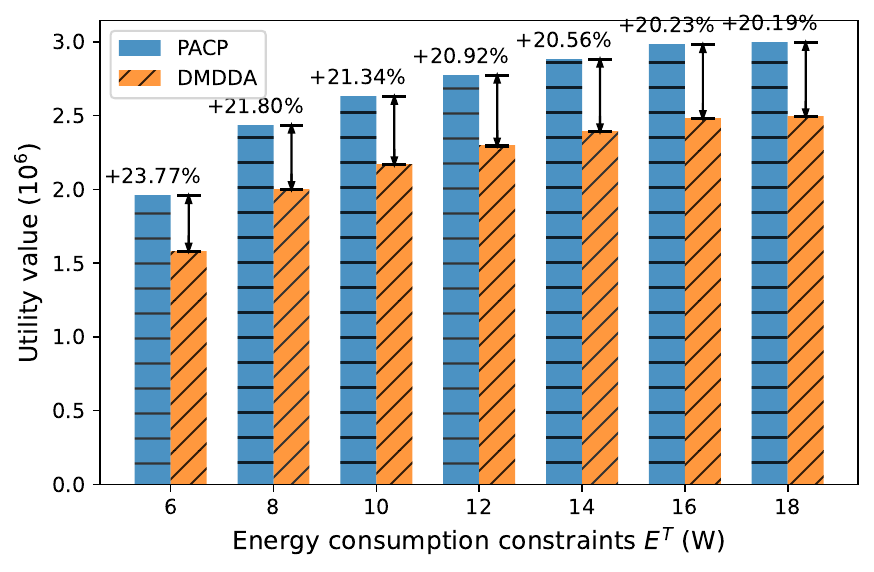}
  \caption{The Utility value vs Energy consumption constraints $E^T$ (W).}
  \label{fig:utility_energy}
\end{figure}
To further analyze the performance of perception accuracy, we conduct extensive experiments to verify the performance of PACP. From Table \ref{tab:table1}, it is evident that the proposed PACP method consistently outperforms other techniques under various settings. PACP, underpinned by a priority-aware perception scheme, demonstrates a remarkable increase in perception accuracy, surpassing FTS by a minimum of 35.38\% and DMDDA by 5.71\% in scenarios under varying numbers of CAVs. Given a typical bandwidth 200 MHz, the improvement of adapting PACP achieves up to 13.60\%. Such performance elevation is attributed to PACP's adeptness at channel allocation based on CAVs' significance to the ego vehicle, ensuring optimal transmission of pivotal perception data. Contrarily, fairness-based strategies, as exemplified by FTS's bandwidth-fairness and DMDDA's subchannel-fairness, inadvertently lead to suboptimal resource allocation by treating every CAV with uniform significance. This often results in the situation that vital information receives the same resource allocation as non-essential data, consequently diminishing AP@IoU and increasing the chances of missed detections in a BEV context. Not unexpectedly, as the power, the number of CAVs, and bandwidth increase, the amount of information perceived by the ego CAV expands, leading to a consequent rise in the AP@IoU for all perception strategies. 

Fig. \ref{fig:bev} illustrates the BEV predictions across different schemes, setting an emphasis on the deviations from the Ground truth (GT), depicted in Fig. \ref{fig:bev}(a), which symbolizes the paragon of collaborative perception communication. Observing Fig. \ref{fig:bev}(b), the PACP stands out by registering zero false or missing detections, adhering closely to GT. On the other hand, DMDDA and FTS schemes, represented in Figs. \ref{fig:bev}(c) and (d), manifest a notable number of missing detections, highlighting the inefficacies in their perception algorithms. The most striking deviations are observed in the No Fusion scheme, as portrayed in Fig. \ref{fig:bev}(e). This scheme, relying solely on single vehicle perception, results in an escalated number of both false and missing detections. Such stark disparities accentuate the imperativeness of leveraging multi-view data fusion for optimized perception outcomes and further illuminate the pronounced efficacy of the PACP scheme over its contemporaries.

\section{Conclusions}
\label{sec:Conclusion}
In this paper, we have investigated the perception challenges posed by the inherent limitations of the BEV, particularly its blind spots, in ensuring the safety and reliability of connected and autonomous vehicles. We have introduced the novel PACP framework that leverages a unique BEV-match mechanism, discerning priority levels by computing the correlation between the captured information from nearby CAVs and the ego vehicle. By employing a two-stage optimization based on submodular optimization, PACP optimally regulates transmission rates, link connectivity, and compression metrics. A distinguishing feature of PACP is its integration with a deep learning-based adaptive autoencoder, which is further enhanced by fine-tuning mechanisms, ensuring efficient image reconstruction quality under dynamic channel conditions. Evaluations conducted on the CARLA platform with the OPV2V dataset have confirmed the superiority of PACP to existing methodologies. The results have demonstrated that PACP improves the utility value and AP@IoU by \textbf{8.27\%} and \textbf{13.60\%}, respectively, highlighting its potential to set new benchmarks in the realm of collaborative perception for connected and autonomous vehicles.

\begin{appendices}
\section{The Background Information on IoU}
Intersection over Union (IoU) is an effective metric in computer vision, especially in object detection, used to quantify the overlap between two areas. Though IoU fundamentally informs the design of our priority weight system in subsequent sections, it is directly applied as a validation tool in our experimental setup to assess the accuracy of perception overlaps among Connected Autonomous Vehicles (CAVs).

\begin{figure}[h]
    \centering
    \subfigure[The intersection of $G$ and $P$.]{
        \includegraphics[width=4.1cm]{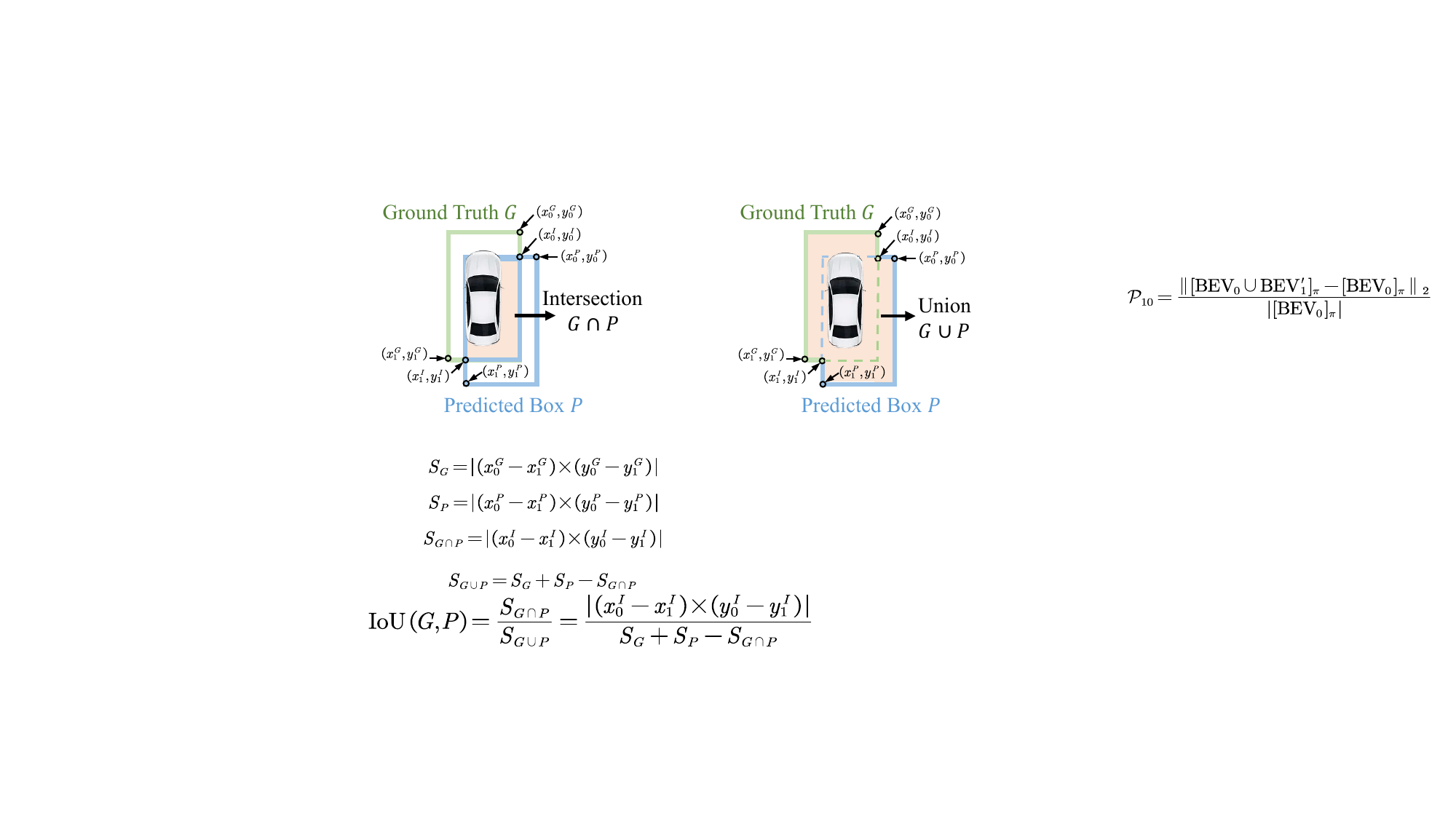}\label{fig:intersection}
    }
    \subfigure[The union of $G$ and $P$.]{
        \includegraphics[width=4.1cm]{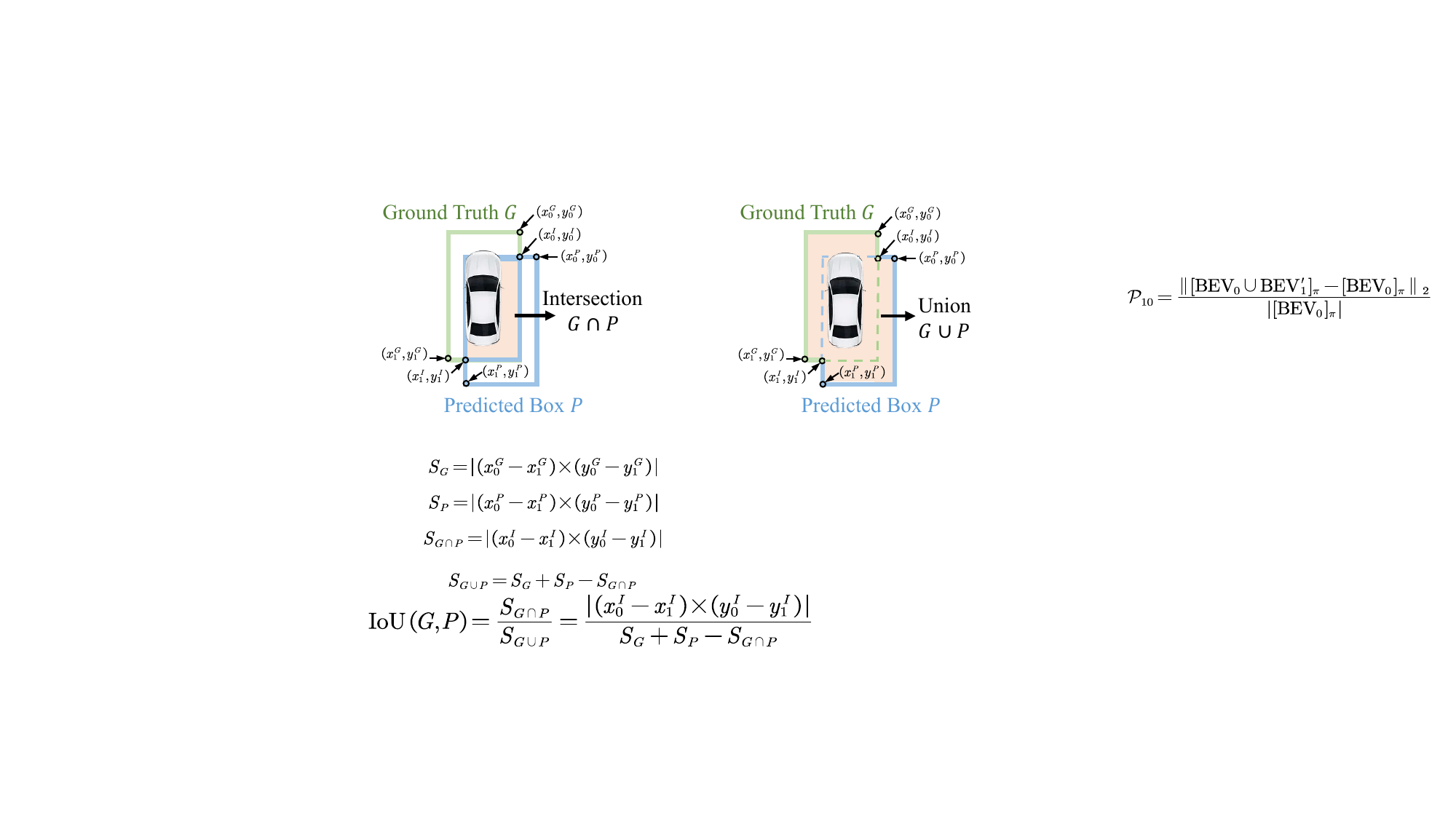}\label{fig:Union}
    }
    \caption{Illustration of Intersection over Union (IoU) using a ground truth bounding box $G$ and a predicted bounding box $P$.}
    \label{fig: iou}
\end{figure}

As shown in Fig. \ref{fig: iou}, IoU is calculated between a ground truth bounding box \(G\) and a predicted bounding box \(P\), which represent the actual and camera-based perceived locations, respectively. The coordinates for \(G\) (\((x_0^G, y_0^G)\) and \((x_1^G, y_1^G)\)) and \(P\) (\((x_0^P, y_0^P)\) and \((x_1^P, y_1^P)\)) denote their respective top-left and bottom-right corners.

The respective areas of \(G\) and \(P\) are defined as follows:
\begin{equation}
\begin{aligned}
S_G &= |(x_0^G - x_1^G) \times (y_0^G - y_1^G)|, \\
S_P &= |(x_0^P - x_1^P) \times (y_0^P - y_1^P)|.
\end{aligned}
\end{equation}

The area of overlap, critical for the IoU calculation, is derived from the extremities of the overlapping coordinates:
\begin{equation}
S_{G\cap P} = \left| (x_0^I - x_1^I) \times (y_0^I - y_1^I) \right|
\end{equation}

The union area, combining both bounding boxes minus their intersection, is given by:
\begin{equation}
S_{G \cup P} = S_G + S_P - S_{G \cap P}
\end{equation}

IoU is thus calculated as:
\begin{equation}
\mathrm{IoU}(G, P) = \frac{S_{G\cap P}}{S_{G\cup P}} = \frac{\left| (x_0^I - x_1^I) \times (y_0^I - y_1^I) \right|}{S_G + S_P - S_{G\cap P}}.
\end{equation}

The performance of IoU is heavily influenced by the quality of the communication channel. Superior channel quality enhances the transmission of camera data, leading to more precise and synchronized data fusion among CAVs. This synchronization reduces latency in data exchanges, directly improving the alignment of perceived and actual locations, thus boosting IoU performance. Conversely, poor channel conditions can decrease overlap accuracy between predicted and ground truth boxes, thereby reducing IoU scores. This emphasizes the critical role of robust channel quality in achieving optimal perception system performance in CAV networks.

\section{Proof of Proposition 1}\label{proof: Proposition 1}
\begin{Proof}
Define the Weighted Maximum Coverage Problem (WMCP) with universe \( U \), subset family \( \mathcal{F}=\left\{\hat{S}_{1}, \ldots, \hat{S}_{F}\right\} \), and weight function \( w: U \rightarrow \mathbb{R}_{\geq 0} \). For set \( \hat{S} \subseteq U \) and subset family \( \mathcal{Y} \subseteq \mathcal{F} \):
\[
w(\hat{S}) = \sum_{u \in \hat{S}} w(u),
\]
where $w(\mathcal{Y}) = w\left(\bigcup_{\hat{S} \in \mathcal{Y}} \hat{S}\right)$. WMCP aims to select subsets from \( \mathcal{F} \) to maximize the weight of their elements:
\begin{equation}\label{eq:WMCP}
    \max_{\hat{S}} \sum_{u \in \hat{S}} w(u) \quad \text{s.t.} \quad \left| \hat{S} \right|<K, \hat{S}_f\subseteq U.
\end{equation}
According to the definition of the utility function in Eq. (\ref{eq:utility}), we define the weight function with fixed $\mathcal{D} ^{(n)}$ as follows:
\begin{equation}\label{eq:utility_i}
    \mathcal{U} _i(\mathcal{S} _i,\mathcal{D} ^{(n)})=\mathrm{\omega}_1\sum_{j=1,j\ne i}^N{\mathcal{P} _{ji}s_{ji}d_{ij}^{(n)}}+\mathrm{\omega}_2\mathcal{I} \left( \mathcal{S} _i \right).
\end{equation}
It is noted that each element in \( U \) is a unique link for a CAV in \( \mathbf{P} \). The weight \( w \) in WMCP aligns with the utility \( \mathcal{U} _{i}(\mathcal{S}_i,\mathcal{D}^{(n)}) \) in \( \mathbf{P} \). Hence, with fixed $\mathcal{D} ^{(n)}$, the problem can be seen as WMCP. Considering the optimization over \(\mathcal{D}\) and \(\mathcal{R}\), \(\mathbf{P}\) extends beyond WMCP, introducing complexities from \(\mathcal{D}\) and \(\mathcal{R}\).
Given the inherent NP-hardness of WMCP and the added complexity when including optimization over \(\mathcal{D}\) and \(\mathcal{R}\), we can infer that problem \(\mathbf{P}\) is, at the very least, as hard as the classical WMCP. Hence, problem \(\mathbf{P}\) is NP-hard. {\hfill $\blacksquare$\par}
\end{Proof}

\section{Proof of Proposition 2}\label{proof: Proposition 2}
\begin{Proof}
Refer to four properties from Definition \ref{def:Submodularity} in Sec. \ref {sec:Preliminaries}, we will prove that the objective function of the problem $\mathbf{P}$ satisfies all essential conditions for submodular functions as follows:

\textbf{(1)} To show the first property of Definition \ref{def:Submodularity}, we can obtain $f\left(\emptyset\right)=\mathcal{U} _{\mathrm{sum}}(\mathcal{S},\mathcal{D})=0$ if $\mathcal{S}=\mathbf{0}$, which represents that we do not establish any links so $\mathcal{S}$ is a zero matrix. Therefore, the first property of Definition \ref{def:Submodularity} can be satisfied.

\textbf{(2)} To show the second property of Definition \ref{def:Submodularity}, we 
 assume $\mathbf{B}_1$ and $\mathbf{B}_2$ are link establishment sets, where $\forall \mathbf{B}_1\subseteq \mathbf{B}_2\subseteq \mathbf{E}$. Let $\left\{s_{ij}^{\mathbf{B}_1}\right\}_{N \times N}$ and $\left\{s_{ij}^{\mathbf{B}_2}\right\}_{N \times N}$ denote the associated link establishment matrices. Thus, we have:
\begin{equation}\label{eq:Monotonicity1}
\begin{aligned}
    f\left( \mathbf{B}_1 \right) =\mathcal{U} _{\mathrm{sum}}(\mathbf{B}_1,\mathcal{D} )=\sum_{i=1}^N{\left( \mathrm{\omega}_1\sum_{j=1,j\ne i}^N{\mathcal{P} _{ji}s_{ji}^{\mathbf{B}_1}d_{ij}^{(n)}}+\mathrm{\omega}_2\mathcal{I} \left( \mathbf{B}_1 \right) \right)}.
\end{aligned}
\end{equation}
Similarity, 
\begin{equation}\label{eq:Monotonicity2}
\begin{aligned}
    f\left( \mathbf{B}_2 \right) =\sum_{i=1}^N{\left( \mathrm{\omega}_1\sum_{j=1,j\ne i}^N{\mathcal{P} _{ji}s_{ji}^{\mathbf{B}_2}d_{ij}^{(n)}}+\mathrm{\omega}_2\mathcal{I} \left( \mathbf{B}_2 \right) \right)}.
\end{aligned}
\end{equation}
Then, let $\mathbf{B}_2\backslash\mathbf{B}_1$ represent the set of all elements that are in \( \mathbf{B}_2 \) but not in \( \mathbf{B}_1 \).
Therefore, we can derive the difference between $\mathbf{B}_1$ and $\mathbf{B}_2$ as $f\left( \mathbf{B}_2\backslash \mathbf{B}_1 \right) =\sum_{i=1}^N{\left( \mathrm{\omega}_1\sum_{j=1,j\ne i}^N{\mathcal{P} _{ji}s_{ji}^{\mathbf{B}_2\backslash \mathbf{B}_1}d_{ij}^{(n)}}+\mathrm{\omega}_2\mathcal{I} \left( \mathbf{B}_2\backslash \mathbf{B}_1 \right) \right)}$. 
Each element of the matrices \( s_{ij}^{\mathbf{B}} \) can only take values of 0 or 1, where 1 signifies a connected link and 0 denotes a broken or non-existent link. Given \( \forall \mathbf{B}_1\subseteq \mathbf{B}_2\subseteq \mathbf{E} \), it is evident that every link connected in \( \mathbf{B}_1 \) is also connected in \( \mathbf{B}_2 \). However, the converse may not be true; there might be links in \( \mathbf{B}_2 \) that are not connected in \( \mathbf{B}_1 \). When \( s_{ij}^{\mathbf{B}_2} = s_{ij}^{\mathbf{B}_1} \), the difference is zero. When \( s_{ij}^{\mathbf{B}_1} \neq s_{ij}^{\mathbf{B}_2} \), the difference is positive. Therefore, the difference $f\left( \mathbf{B}_2\backslash\mathbf{B}_1 \right) \geq 0$, and the objective function $\mathcal{U} _{\mathrm{sum}}(\mathcal{S},\mathcal{D})$ is monotone.

(3) To show the third property of Definition \ref{def:Submodularity}, we have to prove \( f(\mathbf{B}_1) + f(\mathbf{B}_2) \geq f(\mathbf{B}_1 \cup \mathbf{B}_2) + f(\mathbf{B}_1 \cap \mathbf{B}_2) \). According to Eqs. (\ref{eq:union}) and (\ref{eq:Monotonicity1}), we can decompose $f(\mathcal{S}_i)$ into two distinct components as follows:
\begin{equation}\label{eq:submodular4-32}
\begin{aligned}
f(\mathcal{S}_i) = g(\mathcal{S}_i) +h(\mathcal{S}_i) ,
\end{aligned}
\end{equation}
where  \( g(\mathcal{S} _i)=\mathrm{\omega}_1\mathcal{U} _{\mathrm{r}}
\) is a modular function and \( h(\mathcal{S}_i) = \mathrm{\omega}_2  \mathcal{U} _{\mathrm{p}}\) is a submodular function. As for $g(\mathcal{S}_i)$, the modular property ensures that the sum of the values for any two subsets is equal to the sum of their union and intersection: \( g(\mathbf{B}_1) + g(\mathbf{B}_2) = g(\mathbf{B}_1 \cup \mathbf{B}_2) + g(\mathbf{B}_1 \cap \mathbf{B}_2) \). As for \( h(\mathcal{S}_i)\), encapsulates the union of perceptual regions. The inherent property of the union operation ensures diminishing returns; adding more regions results in a lesser incremental gain, thereby making \( h \) submodular. Therefore, for any \( \mathbf{B}_1, \mathbf{B}_2 \subseteq \mathcal{S} \): \(h(\mathbf{B}_1) + h(\mathbf{B}_2) \geq h(\mathbf{B}_1 \cup \mathbf{B}_2)+h(\mathbf{B}_1 \cap \mathbf{B}_2)   \). Combining the modular and submodular components, we have \( f(\mathbf{B}_1) + f(\mathbf{B}_2) \geq f(\mathbf{B}_1 \cup \mathbf{B}_2) + f(\mathbf{B}_1 \cap \mathbf{B}_2) \).

(4) To show the fourth property of Definition \ref{def:Submodularity}, we need to prove $\Delta _f(e|\mathbf{B}_1)\geqslant \Delta _f(e|\mathbf{B}_2)$, where $e$ is an arbitrary element in the set $\mathbf{E} \backslash \mathbf{B_2}$. The original inequality can be formulated as:
\begin{equation}\label{eq:submodular4-33}
\begin{aligned}
f(\mathbf{B}_2\cup \{e\})-f(\mathbf{B}_1\cup \{e\})\leqslant f(\mathbf{B}_2)-f(\mathbf{B}_1).
\end{aligned}
\end{equation}
Combining Eqs. (\ref{eq:Monotonicity1}), (\ref{eq:Monotonicity2}), (\ref{eq:submodular4-32}) and (\ref{eq:submodular4-33}), we extend the left side of the equality, which can be given by:
\begin{equation}\label{eq:submodular4-2}
\begin{aligned}
&\mathcal{U} _{\mathrm{sum}}(\mathbf{B}_1\cap \{e\},\mathcal{D})-\mathcal{U} _{\mathrm{sum}}(\mathbf{B}_2\cap \{e\},\mathcal{D}) \\
&= g\left( \mathbf{B}_2\backslash \mathbf{B}_1 \right) +\left[ h\left( \mathbf{B}_2\cap \{e\} \right) -h\left( \mathbf{B}_1\cap \{e\} \right) \right],
\end{aligned}
\end{equation}
Given that $g(\mathbf{B})$ is modular and $h(\mathbf{B})$ is submodular, it follows that \(g\left( \mathbf{B}_2\backslash \mathbf{B}_1 \right) =g\left( \mathbf{B}_2 \right) -g\left( \mathbf{B}_1 \right) \) and $h\left( \mathbf{B}_2\cap \{e\} \right) -h\left( \mathbf{B}_1\cap \{e\} \right) \leqslant h\left( \mathbf{B}_2 \right) -h\left( \mathbf{B}_1 \right) $ for all possible link establishment $e$. Therefore, we can obtain \(\mathcal{U} _{\mathrm{sum}}(\mathbf{B}_1\cap \{e\},\mathcal{D} )-\mathcal{U} _{\mathrm{sum}}(\mathbf{B}_2\cap \{e\},\mathcal{D} )\leqslant g\left( \mathbf{B}_2 \right) -g\left( \mathbf{B}_1 \right) +h\left( \mathbf{B}_2 \right) -h\left( \mathbf{B}_1 \right) =\mathcal{U} _{\mathrm{sum}}(\mathbf{B}_2,\mathcal{D} )-\mathcal{U} _{\mathrm{sum}}(\mathbf{B}_1,\mathcal{D} )\). {\hfill $\blacksquare$\par}
\end{Proof}\par

\section{Network Structure Details}\label{appendix: Network Structure Details}
\subsection{Encoder Phase}
The input image $\mathbf{x}$ undergoes transformation to latent representation $\mathbf{z}$. A quantization function ${Q}$ further refines $\mathbf{z}$ into a discrete vector $q\in \mathbb{H}^{D}$, leading to $q = Q(\mathbf{z})$. In V2V scenarios, $q$ is serialized into bitstream $b$, with entropy coding to maximize the bandwidth efficiency. As shown in Fig. \ref{fig:dl_architecture}(a), the encoder compresses the input camera data, reducing its size for transmission while retaining essential information. It comprises the following layers:
\begin{itemize}
    \item Convolutional Layers (Conv): There are three convolutional layers with kernel sizes of \(9 \times 9\), \(5 \times 5\), and \(5 \times 5\), and strides of 4, 2, and 2, respectively. These layers progressively reduce the spatial dimensions of the input data from \([H, W, 3]\) to \([H/16, W/16, F3]\), where \(H\) is the height, \(W\) is the width, and \(F1\), \(F2\), and \(F3\) are the number of filters at each layer.
    \item Generalized Divisive Normalization (GDN) Layers: GDN layers follow each convolutional layer to normalize the feature maps, promoting sparsity and reducing redundancy, which is critical for efficient compression. The output dimensions remain the same as those of the preceding convolutional layers.
    \item Element-wise Modulation: The outputs of the GDN layers are modulated by element-wise multiplication with modulation vectors \(\mathbf{m}(h_{ij}) = (m_1(h_{ij}), m_2(h_{ij}), m_3(h_{ij}))\) generated by the modulation network. This modulation adapts the compression process to the CSI feedback, \(h_{ij}\).
    \item Quantization: The modulated data is then quantized, reducing the precision of the latent representation and enabling efficient encoding into a bitstream.
\end{itemize}

\begin{figure*}
    \centering
    \subfigure[The encoder part.]{
        \includegraphics[width=9cm]{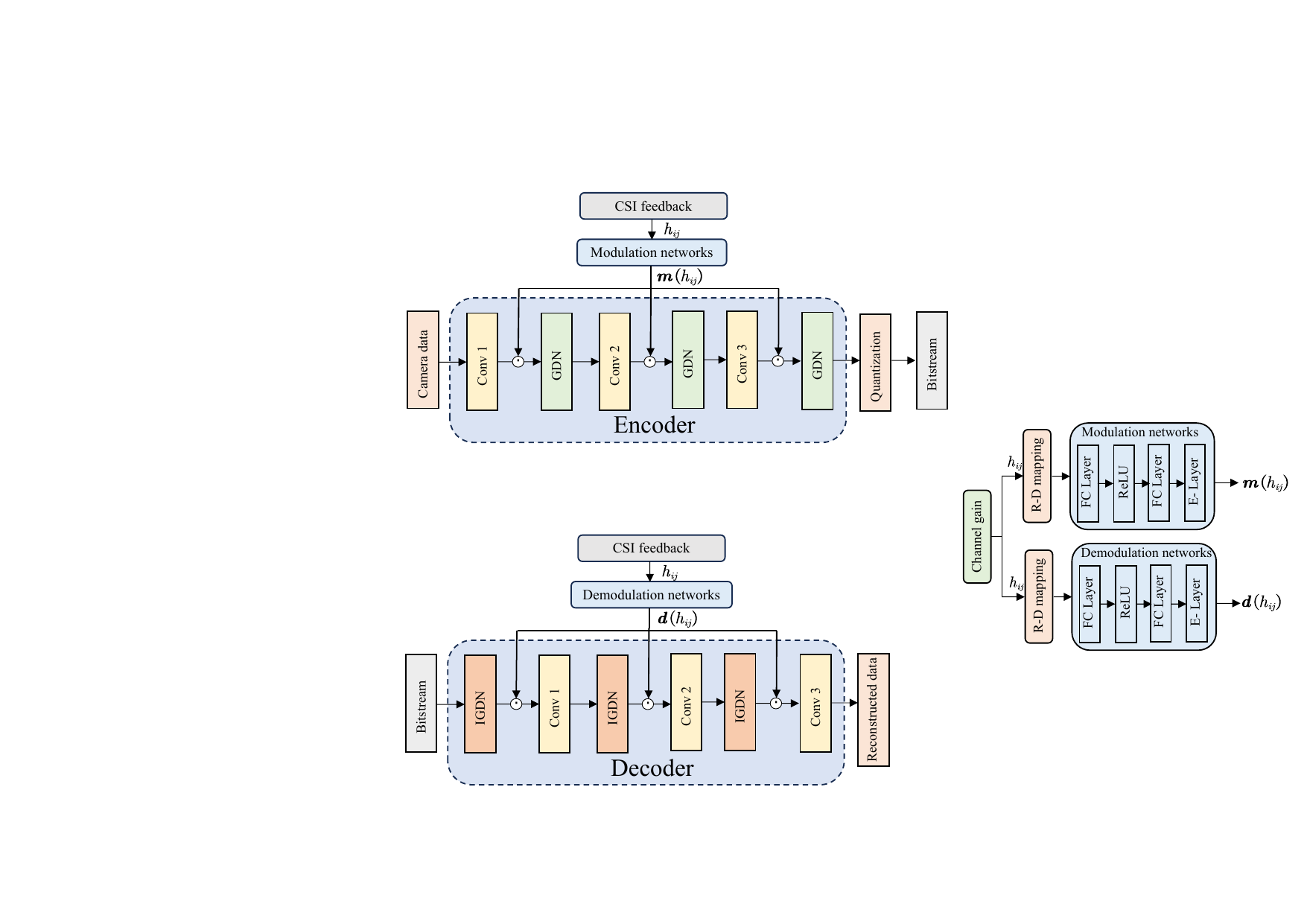}\label{fig:encoder4}
    }
    \subfigure[The decoder part.]{
        \includegraphics[width=8cm]{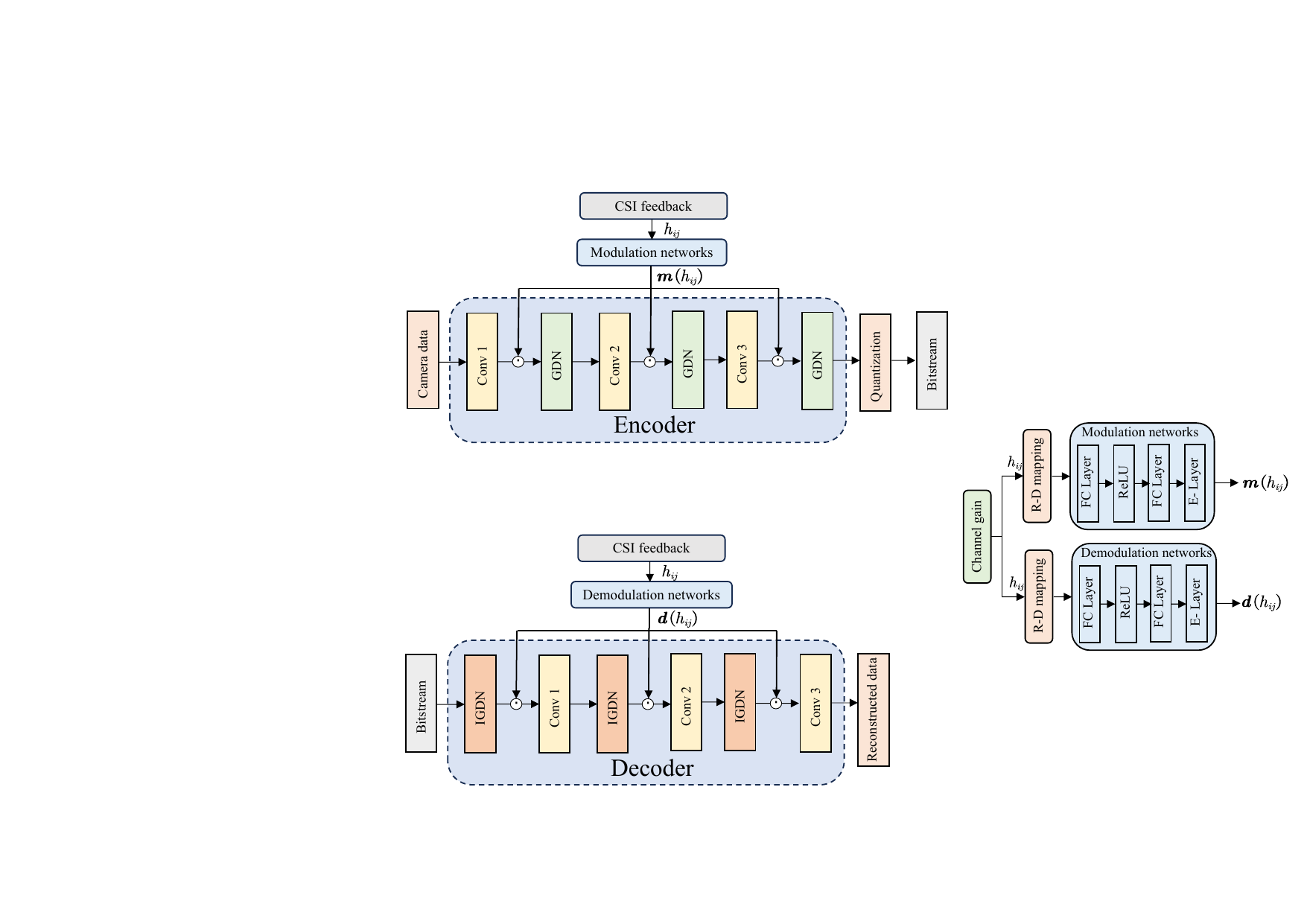}\label{fig:decoder4}
    }
    \caption{{\color{black}The architecture of the deep learning-based adaptive compression.}}
    \label{fig:dl_architecture}
\end{figure*}

\subsection{Decoding Phase}
The bitstream $b$ undergoes inverse transformations. This includes dequantization ($\hat{\mathbf{z}} = Q^{-1}(q)$) and a reconstruction function ($\hat{\mathbf{x}} = H({\hat{\mathbf{z}}})$) to retrieve the original image. As shown in Fig. \ref{fig:dl_architecture}(b), the decoder reconstructs the original camera data from the compressed bitstream, ensuring that the essential information is preserved and accurately recovered. It consists of:
\begin{itemize}
    \item Inverse Generalized Divisive Normalization (IGDN) Layers: IGDN layers, which reverse the normalization effect of the GDN layers, are applied to the quantized latent representation. The output dimensions are \([H/16, W/16, F3]\).
    \item Element-wise Demodulation: The outputs of the IGDN layers are demodulated by element-wise multiplication with demodulation vectors \(\mathbf{d}(h_{ij}) = (d_1(h_{ij}), d_2(h_{ij}), d_3(h_{ij}))\) generated by the demodulation network, adapting the reconstruction to the CSI feedback.
    \item Deconvolutional Layers (Deconv): These layers upsample the feature maps to reconstruct the data, with kernel sizes of \(5 \times 5\), \(5 \times 5\), and \(9 \times 9\), and strides of 2, 2, and 4, respectively, restoring the dimensions to \([H, W, 3]\).
\end{itemize}
\begin{figure}[t]
  \centering
  \includegraphics[width=0.42\textwidth]{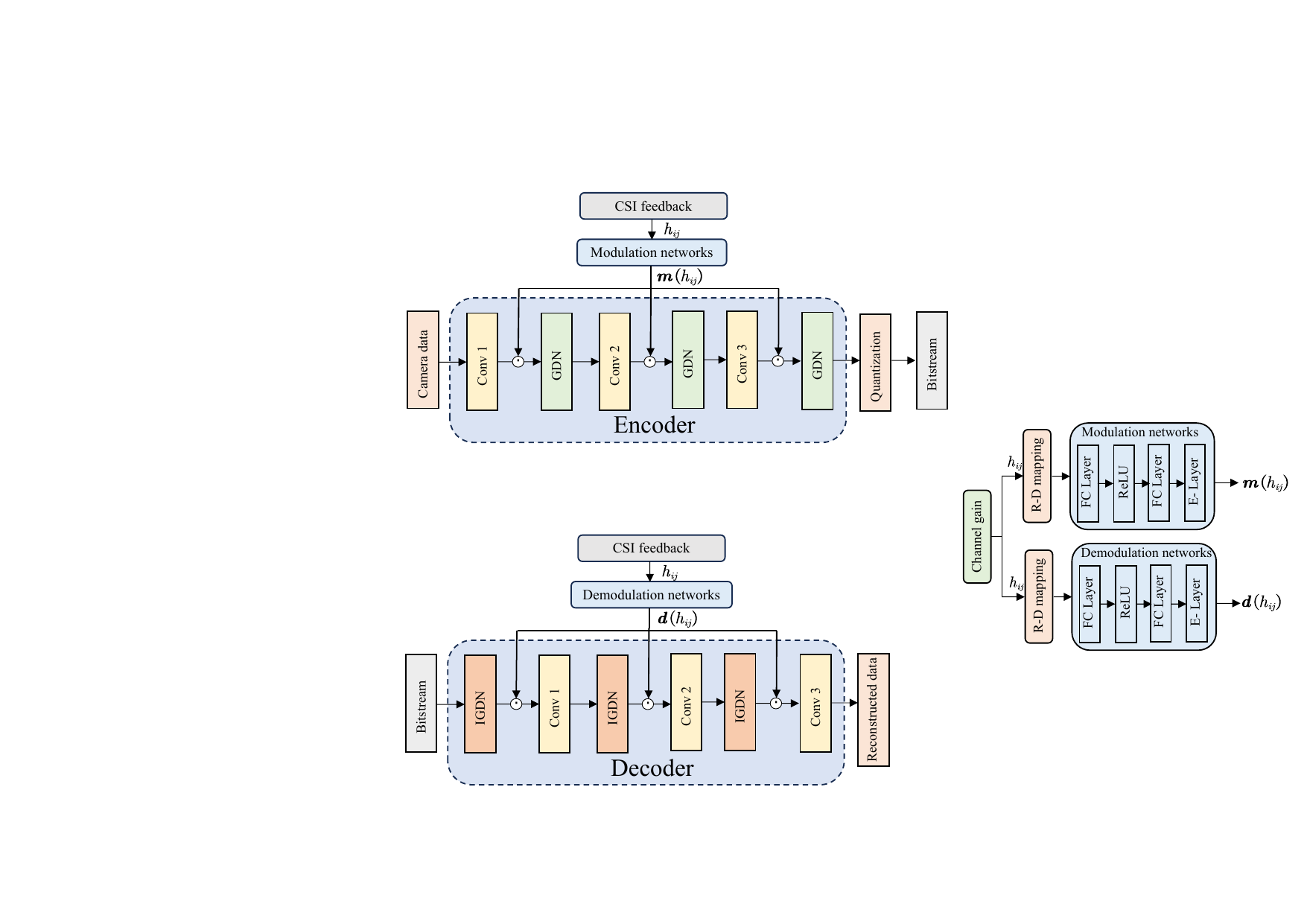}
  \caption{{\color{black}The architecture of modulation/ demodulation networks.}}
  \label{fig:module}
\end{figure}

\subsection{Modulation/Demodulation Networks}
As shown in Fig. \ref{fig:module}, the modulation and demodulation networks translate the CSI feedback into adaptive modulation and demodulation vectors, respectively, for the encoder and decoder. They include:
\begin{itemize}
    \item R-D Mapping: Converts the CSI feedback \(h_{ij}\) into rate-distortion lambda parameters.
    \item Fully Connected (FC) Layers: Multiple FC layers with ReLU activations introduce non-linearity and enhance the representation capacity of the network.
    \item Exponential Layer (E-Layer): Ensures that the output vectors are positive, necessary for the subsequent modulation and demodulation processes.
\end{itemize}

\subsection{Fine-tuning Phase}
This adaptive approach enables the ego CAV and nearby CAVs to adaptively adjust the data compression ratio according to dynamic channel conditions. As shown in Fig. \ref{fig: adaptive compression}, we introduce a fine-tuning compression strategy to further reduce temporal redundancy between frames. Utilizing a Modulated Autoencoder (MAE), our method first transmits a few uncompressed images to RSU for real-time fine-tuning.

Even though digitization and quantization modules are present in the encoder and decoder, we employ scalar quantization by rounding to the nearest neighbor during inference. During training, this is replaced by additive uniform noise as a proxy, i.e., \( \mathbf{\tilde{z}} = \mathbf{z} + \Delta\mathbf{z} \) with \( \Delta\mathbf{z} \sim \mathcal{U}(-\frac{1}{2}, \frac{1}{2}) \). This approach allows the network to approximate gradients during the backward pass, enabling effective backpropagation through these non-differentiable layers. This technique ensures that the network can be trained end-to-end, despite the presence of quantization layers.

\subsection{The Evaluation of Rate-Distoration (R-D) Performance}\label{appendix: The Evaluation of Rate-Distoration (R-D) Performance}
\begin{figure}[t]
  \centering 
  \subfigure[{\color{black}MS-SSIM vs. bit per pixel levels}]{
  \begin{minipage}[t]{0.5\linewidth}
    \centering 
  \includegraphics[width=1.65in]{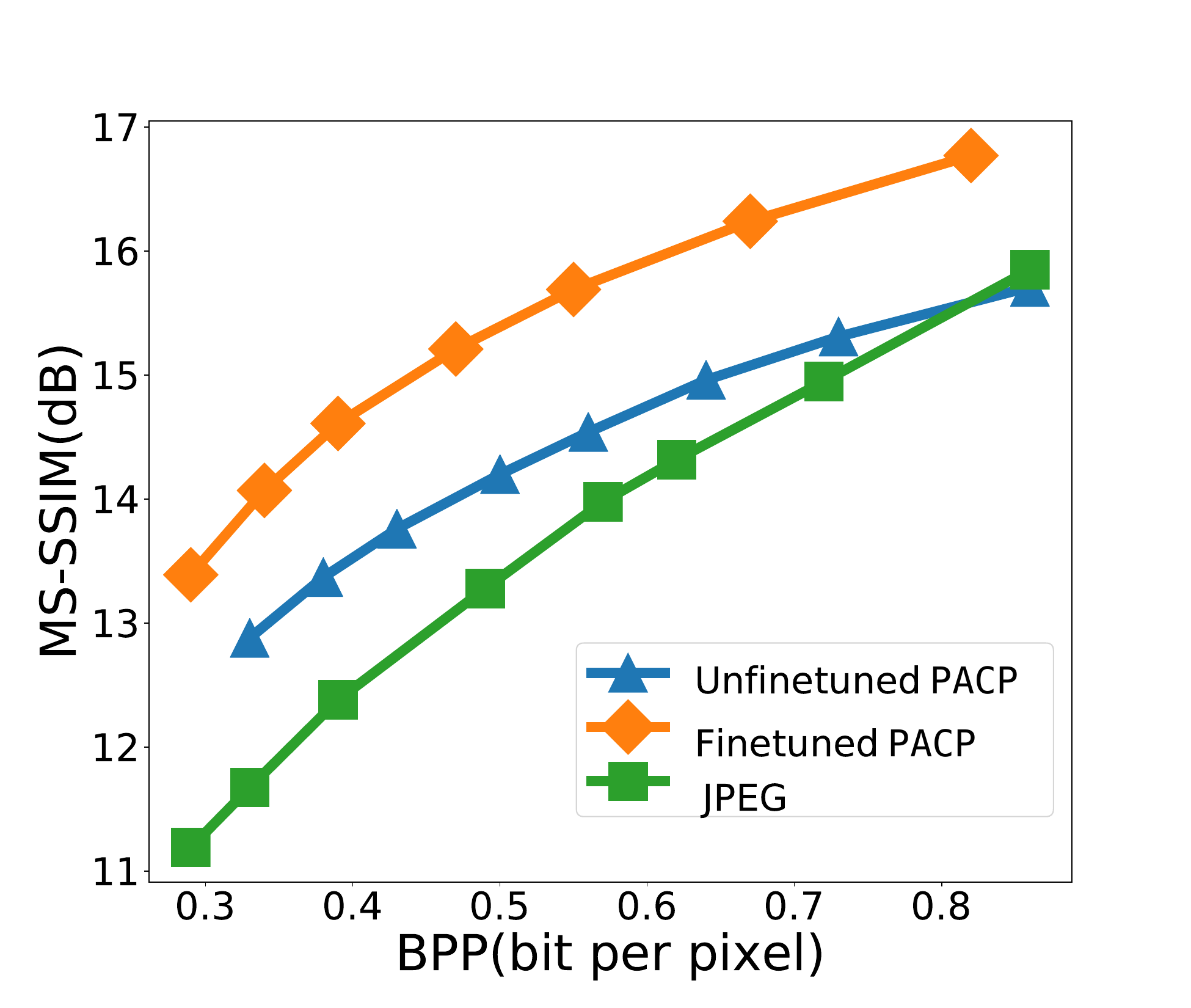}\label{fig:ms-ssim}
  \end{minipage}%
  }%
  \subfigure[{\color{black}PSNR vs. bit per pixel levels}]{
  \begin{minipage}[t]{0.50\linewidth}
    \centering 
  \includegraphics[width=1.65in]{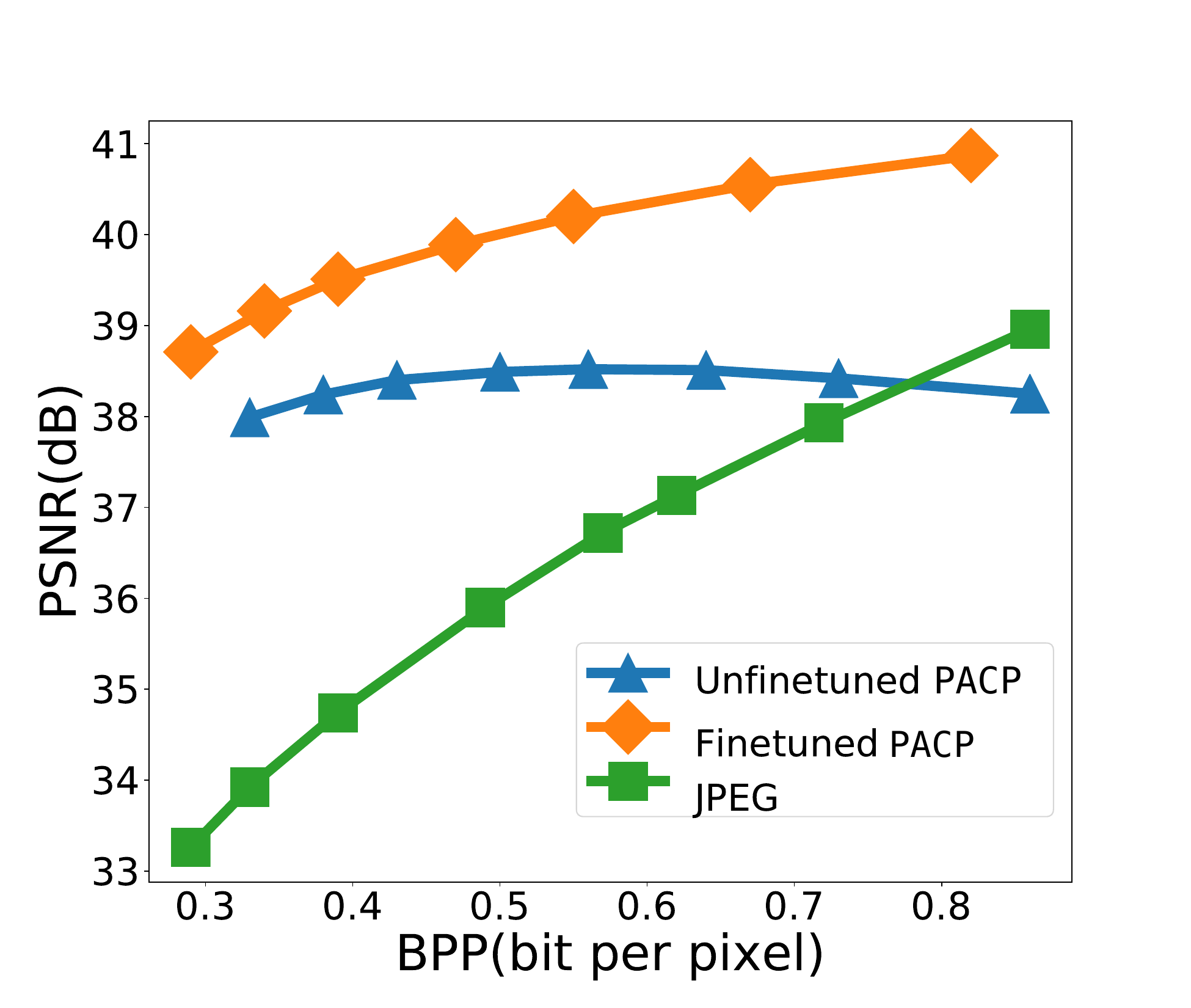}\label{fig:psnr}
  \end{minipage}%
  }%
  \flushleft 
  \caption{{\color{black}Comparison of rate–distortion (R-D) performance between finetuned PACP, unfinetuned PACP and JPEG at the different bit per pixel levels.}}
  \label{fig:bandwidth saving}
\end{figure}

{\color{black}We evaluate the R-D performance of our proposed PACP framework using the Multi-scale Structural Similarity (MS-SSIM) and Peak Signal-to-Noise Ratio (PSNR) metrics. MS-SSIM is a perceptual metric that measures the similarity between the original and compressed images, accounting for variations in resolution and viewing conditions. Besides, PSNR quantifies the reconstruction quality of images by comparing the original and compressed versions, with higher values indicating better quality. As shown in Fig. \ref{fig:ms-ssim}, at lower BPP levels (below 0.4), our PACP method, both unfine-tuned and fine-tuned, achieves at least 1-2 dB higher MS-SSIM compared with JPEG. At higher BPP levels, the fine-tuned PACP maintains at least a 1 dB advantage. Fig. \ref{fig:psnr} demonstrates that the fine-tuned PACP consistently improves PSNR over JPEG by 2-5 dB across all BPP levels. These results validate the superior R-D performance of our deep learning-based adaptive compression method compared with the standard JPEG compression.
}

\end{appendices}

\ifCLASSOPTIONcaptionsoff
  \newpage
\fi

\bibliographystyle{IEEEtran}
\bibliography{ref}

\begin{thebibliography}{10}
\providecommand{\url}[1]{#1}
\csname url@samestyle\endcsname
\providecommand{\newblock}{\relax}
\providecommand{\bibinfo}[2]{#2}
\providecommand{\BIBentrySTDinterwordspacing}{\spaceskip=0pt\relax}
\providecommand{\BIBentryALTinterwordstretchfactor}{4}
\providecommand{\BIBentryALTinterwordspacing}{\spaceskip=\fontdimen2\font plus
\BIBentryALTinterwordstretchfactor\fontdimen3\font minus \fontdimen4\font\relax}
\providecommand{\BIBforeignlanguage}[2]{{%
\expandafter\ifx\csname l@#1\endcsname\relax
\typeout{** WARNING: IEEEtran.bst: No hyphenation pattern has been}%
\typeout{** loaded for the language `#1'. Using the pattern for}%
\typeout{** the default language instead.}%
\else
\language=\csname l@#1\endcsname
\fi
#2}}
\providecommand{\BIBdecl}{\relax}
\BIBdecl

\bibitem{hu2024magazine}
S.~Hu, Z.~Fang, Y.~Deng, X.~Chen, and Y.~Fang, ``Collaborative perception for connected and autonomous driving: {C}hallenges, possible solutions and opportunities,'' \emph{arXiv preprint arXiv:2401.01544}, 2024.

\bibitem{hu2023adaptive}
S.~Hu, Z.~Fang, H.~An, G.~Xu, Y.~Zhou, X.~Chen, and Y.~Fang, ``Adaptive communications in collaborative perception with domain alignment for autonomous driving,'' \emph{arXiv preprint arXiv:2310.00013}, 2023.

\bibitem{hu2023towards}
S.~Hu, Z.~Fang, X.~Chen, Y.~Fang, and S.~Kwong, ``Towards full-scene domain generalization in multi-agent collaborative bird's eye view segmentation for connected and autonomous driving,'' \emph{arXiv preprint arXiv:2311.16754}, 2023.

\bibitem{chen2023vehicle}
X.~Chen, Y.~Deng, H.~Ding, G.~Qu, H.~Zhang, P.~Li, and Y.~Fang, ``Vehicle as a service ({VaaS}): {L}everage vehicles to build service networks and capabilities for smart cities,'' \emph{arXiv preprint arXiv:2304.11397}, 2023.

\bibitem{10415235}
Z.~Lin, G.~Zhu, Y.~Deng, X.~Chen, Y.~Gao, K.~Huang, and Y.~Fang, ``Efficient parallel split learning over resource-constrained wireless edge networks,'' \emph{IEEE Transactions on Mobile Computing}, pp. 1--16, Jan. 2024.

\bibitem{9165167}
Y.~Xiao, F.~Codevilla, A.~Gurram, O.~Urfalioglu, and A.~M. López, ``Multimodal end-to-end autonomous driving,'' \emph{IEEE Transactions on Intelligent Transportation Systems}, vol.~23, no.~1, pp. 537--547, Aug. 2022.

\bibitem{xu2022cobevt}
R.~Xu, Z.~Tu, H.~Xiang, W.~Shao, B.~Zhou, and J.~Ma, ``{CoBEVT}: {C}ooperative bird's eye view semantic segmentation with sparse transformers,'' in \emph{Conference on Robot Learning (CoRL)}, Auckland, NZ, Dec. 2022.

\bibitem{zhang2024smartcooper}
Y.~Zhang, H.~An, Z.~Fang, G.~Xu, Y.~Zhou, X.~Chen, and Y.~Fang, ``{SmartCooper: Vehicular} collaborative perception with adaptive fusion and judger mechanism,'' in \emph{IEEE International Conference on Robotics and Automation (ICRA)}, Yokohama, Japan, May 2024.

\bibitem{9682601}
Y.~Yuan, H.~Cheng, and M.~Sester, ``Keypoints-based deep feature fusion for cooperative vehicle detection of autonomous driving,'' \emph{IEEE Robotics and Automation Letters}, vol.~7, no.~2, pp. 3054--3061, Jan. 2022.

\bibitem{9779322}
M.~Noor-A-Rahim, Z.~Liu, H.~Lee, M.~O. Khyam, J.~He, D.~Pesch, K.~Moessner, W.~Saad, and H.~V. Poor, ``{6G} for vehicle-to-everything {(V2X)} communications: {Enabling} technologies, challenges, and opportunities,'' \emph{Proceedings of the IEEE}, vol. 110, no.~6, pp. 712--734, May 2022.

\bibitem{liu2020when2com}
Y.-C. Liu, J.~Tian, N.~Glaser, and Z.~Kira, ``When2com: {M}ulti-agent perception via communication graph grouping,'' in \emph{IEEE/CVF Conference on Computer Vision and Pattern Recognition (CVPR)}, Seattle, WA (Virtual), 2020, pp. 4106--4115.

\bibitem{10193767}
Z.~Wang, Z.~Zhang, J.~Wang, C.~Jiang, W.~Wei, and Y.~Ren, ``{AUV}-assisted node repair for {IoUT} relying on multi-agent reinforcement learning,'' \emph{IEEE Internet of Things Journal, (DOI: 10.1109/JIOT.2023.3298522)}, 2023.

\bibitem{9451536}
Z.~Fang, J.~Wang, J.~Du, X.~Hou, Y.~Ren, and Z.~Han, ``Stochastic optimization-aided energy-efficient information collection in internet of underwater things networks,'' \emph{IEEE Internet of Things Journal}, vol.~9, no.~3, pp. 1775--1789, Jun. 2022.

\bibitem{10517492}
Z.~Wang, J.~Du, C.~Jiang, Y.~Ren, and X.-P. Zhang, ``{UAV}-assisted target tracking and computation offloading in {USV}-based {MEC} networks,'' \emph{IEEE Transactions on Mobile Computing, (DOI: 10.1109/TMC.2024.3396121)}, pp. 1--16, 2024.

\bibitem{6248074}
A.~Geiger, P.~Lenz, and R.~Urtasun, ``Are we ready for autonomous driving? {The KITTI vision} benchmark suite,'' in \emph{IEEE Conference on Computer Vision and Pattern Recognition (CVPR)}, Providence, RI, Jul. 2012, pp. 3354--3361.

\bibitem{chen2019f}
Q.~Chen, X.~Ma, S.~Tang, J.~Guo, Q.~Yang, and S.~Fu, ``F-cooper: {F}eature based cooperative perception for autonomous vehicle edge computing system using {3D} point clouds,'' in \emph{Proceedings of the 4th ACM/IEEE Symposium on Edge Computing}, Arlington, Virginia, Nov. 2019, pp. 88--100.

\bibitem{wang2020v2vnet}
T.-H. Wang, S.~Manivasagam, M.~Liang, B.~Yang, W.~Zeng, and R.~Urtasun, ``{V2VNet: {V}ehicle-to-vehicle communication for joint perception and prediction},'' in \emph{European Conference on Computer Vision (ECCV)}, Glasgow, United Kingdom, Aug. 2020, pp. 605--621.

\bibitem{li2023learning}
J.~Li, R.~Xu, X.~Liu, J.~Ma, Z.~Chi, J.~Ma, and H.~Yu, ``Learning for vehicle-to-vehicle cooperative perception under lossy communication,'' \emph{IEEE Transactions on Intelligent Vehicles}, vol.~8, no.~4, pp. 2650--2660, Mar. 2023.

\bibitem{9718315}
S.~Wang, Y.~Hong, R.~Wang, Q.~Hao, Y.-C. Wu, and D.~W.~K. Ng, ``Edge federated learning via unit-modulus over-the-air computation,'' \emph{IEEE Transactions on Communications}, vol.~70, no.~5, pp. 3141--3156, Feb. 2022.

\bibitem{skodras2001jpeg}
A.~Skodras, C.~Christopoulos, and T.~Ebrahimi, ``The {JPEG} 2000 still image compression standard,'' \emph{IEEE Signal Processing Magazine}, vol.~18, no.~5, pp. 36--58, Sep. 2001.

\bibitem{8945224}
H.~Liu, H.~Yuan, Q.~Liu, J.~Hou, and J.~Liu, ``A comprehensive study and comparison of core technologies for {MPEG 3-D} point cloud compression,'' \emph{IEEE Transactions on Broadcasting}, vol.~66, no.~3, pp. 701--717, Dec. 2020.

\bibitem{chen2019cooper}
Q.~Chen, S.~Tang, Q.~Yang, and S.~Fu, ``Cooper: {C}ooperative perception for connected autonomous vehicles based on 3d point clouds,'' in \emph{IEEE 39th International Conference on Distributed Computing Systems (ICDCS)}, Dallas, TX, Oct. 2019, pp. 514--524.

\bibitem{rawashdeh2018collaborative}
Z.~Y. Rawashdeh and Z.~Wang, ``Collaborative automated driving: {A} machine learning-based method to enhance the accuracy of shared information,'' in \emph{21st International Conference on Intelligent Transportation Systems (ITSC)}, Maui, Hawaii, Dec. 2018, pp. 3961--3966.

\bibitem{lyu2022distributed}
X.~Lyu, C.~Zhang, C.~Ren, and Y.~Hou, ``Distributed graph-based optimization of multicast data dissemination for internet of vehicles,'' \emph{IEEE Transactions on Intelligent Transportation Systems}, vol.~24, no.~3, pp. 3117--3128, Dec. 2022.

\bibitem{9204672}
B.~L. Nguyen, D.~T. Ngo, N.~H. Tran, M.~N. Dao, and H.~L. Vu, ``Dynamic {V2I/V2V} cooperative scheme for connectivity and throughput enhancement,'' \emph{IEEE Transactions on Intelligent Transportation Systems}, vol.~23, no.~2, pp. 1236--1246, Sep. 2022.

\bibitem{8812911}
Y.~Ma, W.~Liang, J.~Wu, and Z.~Xu, ``Throughput maximization of {NFV}-enabled multicasting in mobile edge cloud networks,'' \emph{IEEE Transactions on Parallel and Distributed Systems}, vol.~31, no.~2, pp. 393--407, 2020.

\bibitem{zhang2018vehicular}
S.~Zhang, J.~Chen, F.~Lyu, N.~Cheng, W.~Shi, and X.~Shen, ``Vehicular communication networks in the automated driving era,'' \emph{IEEE Communications Magazine}, vol.~56, no.~9, pp. 26--32, Sep. 2018.

\bibitem{yang2020variable}
F.~Yang, L.~Herranz, J.~Van De~Weijer, J.~A.~I. Guiti{\'a}n, A.~M. L{\'o}pez, and M.~G. Mozerov, ``Variable rate deep image compression with modulated autoencoder,'' \emph{IEEE Signal Processing Letters}, vol.~27, pp. 331--335, Jul. 2020.

\bibitem{liao2024bat}
H.~Liao, Z.~Li, H.~Shen, W.~Zeng, D.~Liao, G.~Li, and C.~Xu, ``{BAT: B}ehavior-aware human-like trajectory prediction for autonomous driving,'' in \emph{Proceedings of the AAAI Conference on Artificial Intelligence (AAAI)}, Vancouver, Canada, Feb. 2024, pp. 10\,332--10\,340.

\bibitem{10262233}
Z.~Wen, R.~Yang, B.~Qian, Y.~Xuan, L.~Lu, Z.~Wang, H.~Peng, J.~Xu, A.~Y. Zomaya, and R.~Ranjan, ``{JANUS: L}atency-aware traffic scheduling for {IoT} data streaming in edge environments,'' \emph{IEEE Transactions on Services Computing}, vol.~16, no.~6, pp. 4302--4316, Sep. 2023.

\bibitem{10327699}
H.~Cao, J.~Xu, Z.~Yang, L.~Shangguan, J.~Zhang, X.~He, and Y.~Liu, ``Scaling up edge-assisted real-time collaborative visual {SLAM} applications,'' \emph{IEEE/ACM Transactions on Networking}, vol.~32, no.~2, pp. 1823--1838, Nov. 2024.

\bibitem{an2024throughput}
H.~An, Z.~Fang, Y.~Zhang, S.~Hu, X.~Chen, G.~Xu, and Y.~Fang, ``Cooperative vehicular perception with throughput maximization and adaptive compression,'' in \emph{in the submission to IEEE International Conference on Distributed Computing Systems (ICDCS)}, 2024, password: FJUfPp! [Online]. Available: \url{https://zhengrufang.com/paper/ICDCS.pdf}.

\bibitem{9681261}
E.~R. Magsino and I.~W.-H. Ho, ``An enhanced information sharing roadside unit allocation scheme for vehicular networks,'' \emph{IEEE Transactions on Intelligent Transportation Systems}, vol.~23, no.~9, pp. 15\,462--15\,475, Jan. 2022.

\bibitem{dosovitskiy2017carla}
A.~Dosovitskiy, G.~Ros, F.~Codevilla, A.~Lopez, and V.~Koltun, ``{CARLA: A}n open urban driving simulator,'' in \emph{Conference on robot learning (CoRL)}, Mountain View, California, Nov. 2017, pp. 1--16.

\bibitem{xu2022opv2v}
R.~Xu, H.~Xiang, X.~Xia, X.~Han, J.~Li, and J.~Ma, ``{OPV2V}: {An} open benchmark dataset and fusion pipeline for perception with vehicle-to-vehicle communication,'' in \emph{International Conference on Robotics and Automation (ICRA)}, Philadelphia, PA, May 2022, pp. 2583--2589.

\bibitem{9403386}
M.~Adil, H.~Song, J.~Ali, M.~A. Jan, M.~Attique, S.~Abbas, and A.~Farouk, ``Enhanced-{AODV}: {A} robust three phase priority-based traffic load balancing scheme for internet of things,'' \emph{IEEE Internet of Things Journal}, vol.~9, no.~16, pp. 14\,426--14\,437, Apr. 2022.

\bibitem{luo2022learning}
X.~Luo and P.~Li, ``Learning-based off-chain transaction scheduling in prioritized payment channel networks,'' \emph{IEEE Journal on Selected Areas in Communications}, vol.~40, no.~12, pp. 3589--3599, Oct. 2022.

\bibitem{8891313}
W.~Anwar, N.~Franchi, and G.~Fettweis, ``Physical layer evaluation of {V2X} communications technologies: {5G NR-V2X, LTE-V2X, IEEE 802.11bd, and IEEE 802.11p},'' in \emph{IEEE 90th Vehicular Technology Conference (VTC2019-Fall)}, Honolulu, HI, Sep. 2019, pp. 1--7.

\bibitem{adil2021enhanced}
M.~Adil, H.~Song, J.~Ali, M.~A. Jan, M.~Attique, S.~Abbas, and A.~Farouk, ``Enhanced-{AODV}: {A} robust three phase priority-based traffic load balancing scheme for internet of things,'' \emph{IEEE Internet of Things Journal}, vol.~9, no.~16, pp. 14\,426--14\,437, Apr. 2021.

\bibitem{9849675}
Z.~Xiao, Z.~Han, A.~Nallanathan, O.~A. Dobre, B.~Clerckx, J.~Choi, C.~He, and W.~Tong, ``Antenna array enabled space/air/ground communications and networking for {6G},'' \emph{IEEE Journal on Selected Areas in Communications}, vol.~40, no.~10, pp. 2773--2804, Aug. 2022.

\bibitem{10158439}
W.~Chen, X.~Lin, J.~Lee, A.~Toskala, S.~Sun, C.~F. Chiasserini, and L.~Liu, ``{5G-Advanced} toward {6G}: {P}ast, present, and future,'' \emph{IEEE Journal on Selected Areas in Communications}, vol.~41, no.~6, pp. 1592--1619, Jun. 2023.

\bibitem{dobzinski2013communication}
S.~Dobzinski and J.~Vondr{\'a}k, ``Communication complexity of combinatorial auctions with submodular valuations,'' in \emph{Twenty-Fourth Annual ACM-SIAM Symposium on Discrete Algorithms}, 2013, pp. 1205--1215.

\bibitem{krause2014submodular}
A.~Krause and D.~Golovin, ``Submodular function maximization.'' \emph{Tractability}, vol.~3, no. 71-104, p.~3, 2014.

\bibitem{balle2016end}
J.~Ball{\'e}, V.~Laparra, and E.~P. Simoncelli, ``End-to-end optimized image compression,'' in \emph{International Conference on Learning Representations (ICLR)}, Toulon, France, Apr. 2017, pp. 1--27.

\end{thebibliography}

\begin{IEEEbiographynophoto}{Zhengru Fang}
(S'20) received his B.S. degree in electronics and information engineering from Huazhong University of Science and Technology, Wuhan, China, in 2019, and his M.S. degree from Tsinghua University, Beijing, China, in 2022. He is currently pursuing a Ph.D. in the Department of Computer Science at City University of Hong Kong. His research interests include collaborative perception, V2X, age of information, and mobile edge computing. He serves as a reviewer for ACM Computing Surveys, IEEE TMC, IEEE JSAC, IEEE TIV, IEEE IoTJ, IEEE TVT, and IEEE VTM.
\end{IEEEbiographynophoto}

\begin{IEEEbiographynophoto}{Senkang Hu}
received his B.S. degree in electronic and information engineering from Beijing Institute of Technology, Beijing, China, in 2022. He is currently pursuing his PhD degree in the Department of Computer Science at City University of Hong Kong, Hong Kong. His research interests include autonomous driving, vehicle-to-vehicle collaborative perception.
\end{IEEEbiographynophoto}
\vspace{-3mm}

\begin{IEEEbiographynophoto}{Haonan An} received his B.S. degree in Telecommunication Engineering from Huazhong University of Science and Technology, Wuhan, China, and his M.S. degree from the School of Electrical and Electronic Engineering, Nanyang Technological University, Singapore. He is currently pursuing a Ph.D. in the Department of Computer Science at City University of Hong Kong. His research interests are focused on AI security, generative models, and collaborative perception.
\end{IEEEbiographynophoto}

\begin{IEEEbiographynophoto}{Yuang Zhang}
         is currently a Ph. D. student in the Department of Civil and Environmental Engineering, University of Washington, Seattle, WA, USA. He obtained the Bachelor and Master's degree in the Department of Automation, Tsinghua University. Mr. Zhang has published papers in IEEE ICRA, IEEE IST and COTA ICTP, etc. His research interests lie in the area of Intelligent transportation.
\end{IEEEbiographynophoto}

\begin{IEEEbiographynophoto}{Jingjing Wang} (S'14-M'19-SM'21) received his B.S. degree in Electronic Information Engineering from Dalian University of Technology, Liaoning, China in 2014 and the Ph.D. degree in Information and Communication Engineering from Tsinghua University, Beijing, China in 2019, both with the highest honors. From 2017 to 2018, he visited the Next Generation Wireless Group chaired by Prof. Lajos Hanzo, University of Southampton, UK. Dr. Wang is currently an associate professor at School of Cyber Science and Technology, Beihang University, and also a researcher at Hangzhou Innovation Institute, Beihang University, Hangzhou, China. His research interests include AI enhanced next-generation wireless networks, swarm intelligence and confrontation. He has published over 100 IEEE Journal/Conference papers. Dr. Wang was a recipient of the Best Journal Paper Award of IEEE ComSoc Technical Committee on Green Communications \& Computing in 2018, the Best Paper Award of IEEE ICC and IWCMC in 2019.
\end{IEEEbiographynophoto}

\begin{IEEEbiographynophoto}{Hangcheng Cao} is currently a postdoctoral fellow in the Department of Computer Science, City University of Hong Kong. He obtained the Ph.D. degree in the College of Computer Science and Electronic Engineer, Hunan University, China, in 2023. He studied as a joint PhD student in the School of Computer Science and Engineering, Nanyang Technological University, Singapore, in 2022. He has published papers in IEEE S\&P, ACM Ubicomp/IMWUT, IEEE INFOCOM, IEEE TMC, IEEE ICDCS, ACM MobiCom Workshop, ACM ToSN, IEEE Communications Magazine, etc. His research interests lie in the area of IoT security.
\end{IEEEbiographynophoto}

\begin{IEEEbiographynophoto}{Xianhao Chen}
        received the B.Eng. degree in electronic information from Southwest Jiaotong University in 2017, and the Ph.D. degree in electrical and computer engineering from the University of Florida in 2022. He is currently an assistant professor with the Department of Electrical and Electronic Engineering, the University of Hong Kong. He received the 2022 ECE graduate excellence award for research from the University of Florida. His research interests include wireless networking, edge intelligence, and machine learning.
\end{IEEEbiographynophoto}

\begin{IEEEbiographynophoto}{Yuguang Fang}
(S’92, M’97, SM’99, F’08) received the MS degree from Qufu Normal University, China in 1987, a PhD degree from Case Western Reserve University, Cleveland, Ohio, USA, in 1994, and a PhD degree from Boston University, Boston, Massachusetts, USA in 1997. He joined the Department of Electrical and Computer Engineering at University of Florida in 2000 as an assistant professor, then was promoted to associate professor in 2003, full professor in 2005, and distinguished professor in 2019, respectively. Since August 2022, he has been the Chair Professor of Internet of Things with the Department of Computer Science at City University of Hong Kong. Prof. Fang received many awards including the US NSF CAREER Award (2001), US ONR Young Investigator Award (2002), 2018 IEEE Vehicular Technology Outstanding Service Award, IEEE Communications Society AHSN Technical Achievement Award (2019), CISTC Technical Recognition Award (2015), and WTC Recognition Award (2014), and 2010-2011 UF Doctoral Dissertation Advisor/Mentoring Award. He held multiple professorships including the Changjiang Scholar Chair Professorship (2008-2011), Tsinghua University Guest Chair Professorship (2009-2012), University of Florida Foundation Preeminence Term Professorship (2019-2022), and University of Florida Research Foundation Professorship (2017-2020, 2006- 2009). He served as the Editor-in-Chief of IEEE Transactions on Vehicular Technology (2013-2017) and IEEE Wireless Communications (2009-2012) and serves/served on several editorial boards of journals including Proceedings of the IEEE (2018-present), ACM Computing Surveys (2017-present), ACM Transactions on Cyber-Physical Systems (2020-present), IEEE Transactions on Mobile Computing (2003-2008, 2011-2016, 2019-present), IEEE Transactions on Communications (2000-2011), and IEEE Transactions on Wireless Communications (2002-2009). He served as the Technical Program CoChair of IEEE INFOCOM’2014. He is a Member-at-Large of the Board of Governors of IEEE Communications Society (2022-2024) and the Director of Magazines of IEEE Communications Society (2018-2019). He is a fellow of ACM, IEEE, and AAAS.
\end{IEEEbiographynophoto}

\end{document}